\newtheorem{thm}{Theorem}
\newtheorem{lemma}{Lemma}
\begin{document}
\title{Improved real-space parallelizable matrix-product state compression and \\ its application to unitary quantum dynamics simulation}

\begin{abstract}
Towards the efficient simulation of near-term quantum devices using tensor network states, 
we introduce an improved real-space parallelizable matrix-product state (MPS) compression method. 
This method enables efficient compression of all virtual bonds in constant time, irrespective of the system size, with controlled 
accuracy, while it maintains the stability of the wavefunction norm without necessitating sequential renormalization procedures. 
In addition, we introduce a parallel regauging technique to partially restore the deviated canonical form, thereby improving 
the accuracy of the simulation in subsequent steps. We further apply this method to simulate unitary quantum dynamics and introduce an improved parallel time-evolving block-decimation 
(pTEBD) algorithm. We employ the improved pTEBD algorithm for extensive simulations of typical one- and two-dimensional quantum 
circuits, involving over 1000 qubits. The obtained numerical results unequivocally demonstrate that the improved pTEBD algorithm  
achieves the same level of simulation precision as the current state-of-the-art MPS algorithm but in polynomially shorter 
time, exhibiting nearly perfect weak scaling performance on a modern supercomputer.
\end{abstract}

\author{Rong-Yang Sun}
\affiliation{Computational Materials Science Research Team, RIKEN Center for Computational Science (R-CCS), Kobe, Hyogo, 650-0047, Japan}
\affiliation{Quantum Computational Science Research Team, RIKEN Center for Quantum Computing (RQC), Wako, Saitama, 351-0198, Japan}
\affiliation{RIKEN Interdisciplinary Theoretical and Mathematical Sciences Program (iTHEMS), Wako, Saitama 351-0198, Japan}
\author{Tomonori Shirakawa}
\affiliation{Computational Materials Science Research Team, RIKEN Center for Computational Science (R-CCS), Kobe, Hyogo, 650-0047, Japan}
\affiliation{Quantum Computational Science Research Team, RIKEN Center for Quantum Computing (RQC), Wako, Saitama, 351-0198, Japan}
\affiliation{RIKEN Interdisciplinary Theoretical and Mathematical Sciences Program (iTHEMS), Wako, Saitama 351-0198, Japan}
\affiliation{Computational Condensed Matter Physics Laboratory, RIKEN Cluster for Pioneering Research (CPR), Saitama 351-0198, Japan}
\author{Seiji Yunoki}
\affiliation{Computational Materials Science Research Team, RIKEN Center for Computational Science (R-CCS), Kobe, Hyogo, 650-0047, Japan}
\affiliation{Quantum Computational Science Research Team, RIKEN Center for Quantum Computing (RQC), Wako, Saitama, 351-0198, Japan}
\affiliation{Computational Condensed Matter Physics Laboratory, RIKEN Cluster for Pioneering Research (CPR), Saitama 351-0198, Japan}
\affiliation{Computational Quantum Matter Research Team, RIKEN Center for Emergent Matter Science (CEMS), Wako, Saitama 351-0198, Japan}

\date{\today}

\maketitle

\section{Introduction}
Owing to the recent rapid advancement of synthetic quantum 
devices~\cite{gross2017quantum,arute2019quantum,schafer2020tools,pogorelov2021compact}, 
quantum computing is becoming a competitive candidate for next-generation computing. 
It utilizes these controllable quantum systems for quantum 
information processing and to address challenging problems beyond the capability of classical computing~\cite{nielsen2010quantum}.  
Due to the principles of quantum mechanics, a quantum computer maintaining a quantum state composed of $N$ qubits 
can access a total of $2^{N}$ dimensions in its working space, i.e., the Hilbert space,  
hence providing ample computational resources for addressing problems 
with comparable exponential complexity, 
such as the quantum many-body problem~\cite{daley2022practical}.
An ideal quantum computer can encompass the entire Hilbert space, thereby offering indisputable exponential resources 
compared to a classical computer, which thus establishes the quantum advantage. 
However, currently available quantum computers are hindered by strong decoherence noise, allowing only a small fraction of 
the Hilbert space to be explored~\cite{zhou2020what}. 
These noisy quantum computers are referred to as a noisy intermediate-scale quantum (NISQ)~\cite{preskill2018quantum} device.
The central challenge in near-term development of quantum computing is determining efficient strategies for utilizing NISQ devices 
to achieve a practical quantum advantage in meaningful computational tasks~\cite{daley2022practical,kim2023evidence,Alexeev2023,Moreno2024}.

In parallel, to validate the attainment of practical quantum advantage and to establish a reliable and readily accessible testing 
environment for quantum algorithm development, extensive research has been focused on developing classical simulations of 
quantum computing. 
For exact simulations, two primary methods are employed: the statevector method and the tensor contraction method. 
The statevector method can simulate quantum circuits with any circuit depth (in polynomial time with respect to the circuit depth) 
and provides the complete amplitudes of 
the wavefunction. However, simulating a quantum circuit with more than 50 qubits using currently 
available classical computers is considered impossible~\cite{haner2017petabyte,deraedt2019massively}. 
On the other hand, the tensor contraction method~\cite{guo2019general,gray2021hyper,guo2021verifying,pan2022simulation,liu2024verifying} can simulate 
quantum circuits with up to around 100 qubits,
but the arrangement of gates and the circuit depth are severely restricted, allowing only a limited portion of the wavefunction 
amplitudes to be obtained~\cite{liu2021closing}. 
Therefore, to efficiently simulate current NISQ devices with over 400 qubits~\cite{collins2022ibm}, it is essential to develop 
more specialized simulation algorithms. These algorithms should consider the limitation that NISQ devices can only 
generate limited quantum entanglement.

Inspired by their great successes in the study of quantum many-body problems~\cite{orus2014a,cirac2021matrix}, 
tensor network states, particularly the matrix-product state (MPS) \cite{schollwock2011the} and the projected entangled-pair state (PEPS) \cite{verstraete2008matrix}, 
have been employed to simulate quantum computing and have achieved accurate results in faithfully simulating NISQ 
devices~\cite{zhou2020what,ayral2023density,tindall2023efficient}. 
In this paper, our focus will be on MPS-based simulation algorithms.
Presently, MPS-based methods can perform full-amplitude approximate simulations for quantum circuits with over 100 qubits 
and moderate circuit depth, benefiting from its efficient representation of quantum entanglement~\cite{verstraete2006matrix}. 
Nevertheless, restricted by the real-space sequential nature of these algorithms~\cite{zhou2020what,ayral2023density}, efficiently simulating a NISQ device with hundreds, or even thousands, of qubits remains elusive.

In general, an MPS-based algorithm for quantum circuit simulations includes two main procedures. One involves applying quantum 
gates, and the other involves compressing the MPS to a computable size. Regarding the former, we can parallelly apply 
multiple gates as long as they do not have spacetime overlap. 
For the later, it has been realized that even if all the MPS virtual bonds 
are compressed in parallel, the error induced by this parallel compression is still manageable~\cite{verstraete2006matrix}. 
These observations imply the feasibility of developing a real-space parallelizable algorithm to simulate quantum circuits. 
In fact, in the study of the time evolution of a quantum many-body system, a similar parallel scheme has been proposed 
and has obtained promising results~\cite{urbanek2016parallel} (see also Ref.~\cite{secular2020parallel} 
for an alternative parallel approach to simulating Hamiltonian dynamics and Ref.~\cite{stoudenmire2013real} for a parallel 
algorithm of searching the ground state).
However, a crucial issue that remains in this scheme is the fast-decaying wavefunction norm caused by parallel MPS compression, 
which might result in serious numerical instability. This instability necessitates the renormalization of the the wavefunction 
after simulating a certain period of time evolution~\cite{urbanek2016parallel}. 
Being a sequential procedure, this additional wavefunction renormalization significantly diminishes parallelism and hampers 
the efficient utilization of the parallel computing environment.

Here, we introduce an improved real-space parallelizable MPS compression (IPMC) method that can stabilize 
the wavefunction norm without compromising parallelism. 
For an MPS in the canonical form, we prove that the wavefunction norm, after undergoing the IPMC, stabilizes within an interval 
with two bounds uniformly converging to 1 from both sides, and the convergence improves as the compression error decreases. 
Moreover, we integrate an additional parallel regauging procedure into the IPMC to generate a better starting point for 
the next simulation step, thereby enhancing the simulation accuracy. 
Based on this IPMC method, we propose the parallel time-evolving block-decimation (pTEBD) algorithm to simulate unitary quantum 
dynamics in a fully parallelizable manner. 
We benchmark the pTEBD algorithm by extensively simulating typical random and parametrized quantum circuits on both 
one-dimensional (1D) and two-dimensional (2D) qubit arrays. 
These numerical results demonstrate that the pTEBD algorithm achieves a comparable simulation precision to the previous 
sequential MPS algorithm proposed in \cite{zhou2020what} and can attain the same accuracy in polynominally shorter time. 
Meanwhile, the wavefunction norm stabilizes to approach 1, instead of decaying exponentially to zero, 
throughout the entire simulation period with a circuit depth exceeding 1000. Consequently, we successfully achieve nearly perfect 
weak scaling performance in the pTEBD simulation, involving over 1000 qubits with over 250 computational nodes on the Supercomputer Fugaku installed at RIKEN.

The rest of the paper is organized as follows.
First, we briefly summarize the MPS representation of quantum many-body states, the properties of the canonical form, and 
the time-evolving block-decimation (TEBD) algorithm (also denoted as the simple update algorithm) in Sec.~\ref{sec:mps}.
Next, we systematically describe the real-space parallelizable MPS compression method in Sec.~\ref{sec:mps_comp}, 
where we introduce 
a wavefunction norm stabilization method with a proof of its bounding theorem, explore the parallel regauging technique 
through the trivial simple update, and finally explain the IPMC method. 
In Sec.~\ref{sec:app_uqd}, we propose the pTEBD algorithm and demonstrate its accuracy, numerical stability, and performance 
through extensive simulations of random and parametrized quantum circuits in one and two dimensions. 
Finally, in Sec.~\ref{sec:sum}, we summarize the results of this paper and briefly discuss their impact on the development 
of near-term quantum computing. 
The details of the quantum circuits used for the simulations are described in Appendix~\ref{app:qc_setup}. 
An additional result regarding the accuracy of the pTEBD algorithm is provided in Appendix~\ref{app:Fs_fixed_seed}. 
Furthermore, the pTEBD algorithm is applied to the simulation of the quantum Fourier transformation in Appendix~\ref{app:QFT_pTEBD}.

\section{\label{sec:mps} MPS representation of quantum many-body states}

\subsection{Matrix-product state}

The Hilbert space $\mathcal{H}$ for an $N$-site quantum many-body system composed of qubits is described by the tensor product 
of $N$ local Hilbert spaces $\mathcal{H}_{i}$ (spanned by orthogonal states $|0\rangle_{i}$ and $|1\rangle_{i}$ with the dimension 
$d = 2$) located on each qubit site $i$, i.e., 
\begin{equation}
    \mathcal{H} = \mathcal{H}_{1} \otimes \mathcal{H}_{2} \otimes \cdots \otimes \mathcal{H}_{N}~.
\end{equation}
Therefore, the dimension of $\mathcal{H}$ is $\mathrm{dim}(\mathcal{H}) = 2^{N}$, increasing exponentially with $N$. 
Any quantum many-body state $|\Psi\rangle$ living on $\mathcal{H}$ can be represented using a complete many-body basis 
formed by a direct product of local states $|\sigma_{i}\rangle_{i}$ with $\sigma_{i} = 0$ or $1$ on each $\mathcal{H}_{i}$, i.e., 
$|\sigma_{1}\cdots \sigma_{N}\rangle = |\sigma_{1}\rangle_{1}\otimes|\sigma_{2}\rangle_{2}\otimes\cdots\otimes|\sigma_{N}\rangle_{N}$. With this basis, the quantum state $|\Psi\rangle$ is represented as
\begin{equation}
    |\Psi\rangle = \sum_{\sigma_{1},\cdots,\sigma_{N}} \Psi_{\sigma_{1},\cdots,\sigma_{N}} |\sigma_{1}\cdots \sigma_{N}\rangle~,
\end{equation}
where the coefficients $\Psi_{\sigma_{1},\cdots,\sigma_{N}}$ can be regarded as a rank-$N$ tensor. Note that, without specifying, 
the normalization condition $\langle\Psi|\Psi\rangle = 1$ is always maintained.

The MPS representation of $|\Psi\rangle$ is constructed by a tensor network decomposition, i.e., decomposing a tensor 
to a set of tensors (a tensor network) such that the contraction of these tensors restores the original tensor, as 
\begin{equation}
    \label{eq:mps_decomp}
    \Psi_{\sigma_{1},\cdots,\sigma_{N}} = \mathrm{Tr}(M^{[1]\sigma_{1}}M^{[2]\sigma_{2}} \cdots M^{[N]\sigma_{N}})~,
\end{equation}
where $M^{[i]\sigma_{i}}$ is a rank-$3$ tensor on site $i$ and it is simply a matrix when the index $\sigma_{i}$ is fixed. 
The row and column of the matrix $M^{[i]\sigma_{i}}$ are associated with the virtual spaces $\mathcal{V}_{i-1}$ and 
$\mathcal{V}_{i}$, respectively, emerged from the decomposition. The dimension of $\mathcal{V}_{i}$, 
$\chi_{i} = \mathrm{dim}(\mathcal{V}_{i})$, is the bond dimension of bond $i$ connecting sites $i$ and $i+1$. 
Especially, bond $N$ (also labeled as bond $0$) connects sites $N$ and $1$. 
Without losing generality, we can always assign $\chi_{0} = \chi_{N} = 1$ in the decomposition~\cite{schollwock2011the}. 
The corresponding MPS is usually named an \textit{open} MPS. In the rest of the paper, we will only consider the open MPS.

\subsection{Canonical form}

Although the decomposition in Eq.~(\ref{eq:mps_decomp}) can be carried out in various ways, the canonical 
decomposition~\cite{vidal2003efficient}, known for its additional advantageous properties, is commonly employed. 
The canonical decomposition is represented as
\begin{equation}
    \label{eq:mps_cano_decomp}
    \Psi_{\sigma_{1},\cdots,\sigma_{N}} = \mathrm{Tr}(\Lambda^{[0]} \Gamma^{[1]\sigma_{1}} \Lambda^{[1]} \Gamma^{[2]\sigma_{2}} \cdots \Lambda^{[N-1]} \Gamma^{[N]\sigma_{N}} \Lambda^{[N]})~,
\end{equation}
where an additional diagonal real matrix $\Lambda^{[i]}$ is attached to bond $i$ with 
$\Lambda^{[0]} = \Lambda^{[N]} \equiv (1)$ for the open MPS,  
and $\Gamma^{[i]\sigma_{i}}$ is a rank-$3$ tensor on site $i$. 
Here, we assume that the diagonal elements of each $\Lambda^{[i]}$ are in descending order. 
The form $\{\Lambda^{[i]}, \Gamma^{[j]}\}$ in 
Eq.~(\ref{eq:mps_cano_decomp}) is commonly known as the Vidal form. 
Additionally, the MPS is considered to be in the canonical form when $\Gamma^{[i]}$ and $\Lambda^{[i]}$ satisfy 
the canonical conditions:
\begin{eqnarray}
    \label{eq:l_r_cano}
    \nonumber
    A^{[i]\sigma_{i}} &\equiv& \Lambda^{[i-1]} \Gamma^{[i]\sigma_{i}}~\mathrm{with}~\sum_{\sigma_{i}}\bar{A}^{[i]\sigma_{i}}A^{[i]\sigma_{i}} = I_{\mathcal{V}_{i}}\\\nonumber
    B^{[i]\sigma_{i}} &\equiv& \Gamma^{[i]\sigma_{i}} \Lambda^{[i]}~\mathrm{with}~\sum_{\sigma_{i}}B^{[i]\sigma_{i}}\bar{B}^{[i]\sigma_{i}} = I_{\mathcal{V}_{i-1}}~,\\
\end{eqnarray}
where $\bar{A}^{[i]}$ ($\bar{B}^{[i]}$) denotes the complex conjugate transpose of $A^{[i]}$ ($B^{[i]}$) and $I_{\mathcal{V}_{i}}$ 
is the identity matrix on $\mathcal{V}_{i}$.

Here, we concisely list main properties of the canonical form. 
First, we can straightforwardly show that under the complete orthogonal basis
\begin{equation}
    \left\{ |L_{\alpha_i}\rangle = \sum_{\sigma_{1},\cdots,\sigma_{i}} (\Lambda^{[0]} \Gamma^{[1]\sigma_{1}} \cdots \Lambda^{[i-1]} \Gamma^{[i]\sigma_{i}})_{\alpha_i} |\sigma_{1}\cdots \sigma_{i}\rangle\right\}~,
\end{equation}
the reduced density matrix $\rho^{[1\cdots i]}$ of the subsystem $\{1,2,\cdots, i\}$ is simply represented as
\begin{equation}
\rho^{[1\cdots i]} = (\Lambda^{[i]})^{2}~\text{with}~\sum_{\alpha_{i}}(\Lambda^{[i]}_{\alpha_{i}\alpha_{i}})^{2} = 1~.
\end{equation}
Note that the elements of the diagonal matrix $\Lambda^{[i]}$ correspond to the singular value spectrum $\{\lambda_{\alpha_{i}}\}$ 
obtained through the singular value decomposition (SVD) of the matrix 
$\Psi_{\sigma_{1}\cdots\sigma_{i}, \sigma_{i+1}\cdots\sigma_{N}}$, where  
the rows and columns are formed by grouping indices $\{\sigma_{1},\cdots,\sigma_{i}\}$ and 
$\{\sigma_{i+1},\cdots,\sigma_{N}\}$, respectively. 
Therefore, $(\Lambda^{[i]})^{2}$ is also the reduced density matrix of the subsystem $\{i+1,i+2,\cdots, N\}$.

Using these properties, the information of quantum entanglement for $|\Psi\rangle$ can be directly extracted from the canonical form. Naturally, the bond $i$ provides a bipartition of the system into two subsystems $\{1,\cdots,i\}$ and $\{i+1,\cdots, N\}$. 
Accordingly, the entanglement entropy $S_{v}^{[i]}$ for this bipartition is given by 
\begin{equation}
    S_{v}^{[i]} = - \mathrm{Tr}(\rho^{[1\cdots i]} \ln \rho^{[1\cdots i]}) = -\sum_{\alpha_{i}} \lambda_{\alpha_{i}}^{2} \ln \lambda_{\alpha_{i}}^{2}~,
\end{equation}
and the entanglement spectrum level $\xi_{\alpha_i}$ is given by $\xi_{\alpha_i} = -\ln \lambda_{\alpha_{i}}^{2}$. 
Therefore, truncating the bond dimension $\chi_i$ at bond $i$ on the basis of $\{\lambda_{\alpha_{i}}\}$ allows for the retention 
of a significant portion of the entanglement information between these two subsystems.
For practical construction of the canonical form for a general quantum state and additional insights into its useful properties, 
one can find more comprehensive reviews~\cite{schollwock2011the,orus2014a,cirac2021matrix}.

\subsection{TEBD algorithm}

Based on the canonical form, the unitary dynamics of a quantum state can be efficiently simulated using the TEBD 
algorithm~\cite{vidal2003efficient}. In general, a unitary quantum dynamic process is described by applying a unitary 
operator $\hat{U}$ to the state $|\Psi\rangle$. However, $\hat{U}$ can be decomposed into 
many two-site local unitary operators $\hat{U}^{\alpha}_{i, i+1}$ acting on sites $i$ and $i+1$, where $\alpha$ is 
the index for these local operators~\cite{shirakawa2021automatic}. Therefore, we can focus on simulating 
$|\Psi\rangle \rightarrow \hat{U}^{\alpha}_{i, i+1}|\Psi\rangle$ at a specific moment. The simulation of this process 
using the TEBD algorithm is referred to as a TEBD procedure and will be outlined below.

A TEBD procedure consists of three steps. Starting from the canonical form of $|\Psi\rangle$, we contract the involved local tensors 
to construct the local two-site wavefunction
\begin{equation}
    \Psi_{i, i+1} = \Lambda^{[i-1]}\Gamma^{[i]}\Lambda^{[i]}\Gamma^{[i+1]}\Lambda^{[i+1]}~.
\end{equation}
Next, we update $\Psi_{i, i+1}$ to $\Psi_{i, i+1}'$ by appending $\hat{U}^{\alpha}_{i, i+1}$ to $\Psi_{i, i+1}$. 
Subsequently, we perform the SVD of $\Psi_{i, i+1}' = U^{[i]} S \bar{V}^{[i+1]}$ and update the canonical form as
\begin{eqnarray}
    \nonumber
    \Gamma^{[i]} &\leftarrow& (\Lambda^{[i-1]})^{-1} U^{[i]}~,\\
    \nonumber
    \Lambda^{[i]} &\leftarrow& S~,\\
    \Gamma^{[i+1]} &\leftarrow& \bar{V}^{[i+1]} (\Lambda^{[i+1]})^{-1}
\end{eqnarray}
with $\Lambda^{[i-1]}$ and $\Lambda^{[i+1]}$ intact.

Ideally, the TEBD procedure can be parallelized perfectly in the real-space to evaluate multiple local unitary operators 
$\{\hat{U}^{\alpha}_{i, i+1}\}$ simultaneously as long as these unitary operators do not overlap with each other, automatically 
maintaining the canonical form. However, the bond dimensions in general increase exponentially. 
The TEBD algorithm has also been applied to simulate imaginary-time evolution using 
an MPS~\cite{vidal2004efficient,orus2008infinite} and a PEPS~\cite{jiang2008accurate} for ground state searches, and   
has also become a fundamental technique in the research field of tensor network states under the name 
of the simple update (SU) method. Specifically, in the case where $\hat{U}^{\alpha}_{i, i+1}$ is an identity operator, the TEBD or SU procedure does not 
alter the underlying quantum state $|\Psi\rangle$ and is thus denoted as the trivial simple update (tSU) step. 
The repeated application of tSU steps to an MPS or a PEPS, known as the tSU algorithm, 
has been developed to construct its canonical form 
for an MPS or its quasicanonical form for a PEPS~\cite{kalis2012fate,ran2012optimized,alkabetz2021tensor}.

\subsection{Computational complexities of MPS calculations}

Before concluding this section, let us discuss the spacial and time complexities of generic MPS calculations. 
For an MPS with $N$ sites and a bond dimension $\chi$, the spacial complexity, i.e., the memory footprint, is 
$\mathcal{O}(N\chi^{2})$, increasing polynomially with both $N$ and $\chi$. Here, for simplicity, we assume the same bond 
dimension for each matrix and maintain this convention throughout the rest of the paper. 
Typically, the time complexity of local MPS operations in MPS-based algorithms is  
$\mathcal{O}(\chi^{n})$, where, for instance, $n=3$ in the TEBD algorithm and the density-matrix renormalization group (DMRG) 
algorithm~\cite{white1992density,schollwock2005the}. 
In addition, the time cost of real-space sequential algorithms for simulating 
a system with $N$ sites linearly increases with $N$, as the \textit{sweep procedure} is usually required in these 
algorithms~\cite{schollwock2011the,zhou2020what}. Therefore, the complexities of MPS calculations are determined 
by the bond dimension $\chi$ and the system size $N$ for real-space sequential algorithms. These complexity analyses 
suggest that compressing an MPS to another MPS with a smaller $\chi$ while maintaining high fidelity is essential 
for accelerating MPS calculations.
Notice that we omit the dimension $d_{\rm loc}$ of the local Hilbert space in the complexity analysis 
since it is in general small (in our case, $d_{\rm loc}=2$) compared with $\chi$ and $N$.

\section{\label{sec:mps_comp}Real-space parallelizable MPS compression}

In this section, we explore the real-space parallelizable MPS compression. After introducing the concept of MPS compression, 
we assess the accuracy of straightforward parallel MPS compression (i.e., truncating all the virtual bonds simultaneously) and 
show its comparablity with a typical sequential compression method. Then, we propose a method to stabilize 
the wavefunction norm in parallel MPS compression and provide a proof for its upper and lower bounds. Next, we discuss 
the recovery of the canonical form after parallel MPS compression by additional parallel tSU steps. 
Combining these insights, we ultimately propose an improved parallel MPS compression method that is not only numerically 
more stable but also more accurate for simulations in subsequent steps.

\subsection{MPS compression}

For an MPS representing the state $|\Psi\rangle$ with bond dimension $\chi$, we can construct another MPS
\begin{equation}
    \label{eq:mps_trunc_repr}
    |\Psi_{T}\rangle = \sum_{\sigma_{1},\cdots,\sigma_{N}} \mathrm{Tr}({M'}^{[1]\sigma_{1}}{M'}^{[2]\sigma_{2}} \cdots {M'}^{[N]\sigma_{N}}) |\sigma_{1}\cdots \sigma_{N}\rangle~,
\end{equation}
with bond dimension $\chi' \leq \chi$ to approximate $|\Psi\rangle$. Here, $|\Psi_{T}\rangle$ represents the compressed 
(or truncated) MPS representation of the state $|\Psi\rangle$.

It is meaningless to compress an MPS without considering accuracy since our goal is to construct an MPS that shares 
the largest similarity with the original uncompressed one. The similarity of two MPSs is determined by the wavefunction 
fidelity between $|\Psi\rangle$ and $|\Psi_{T}\rangle$, i.e., 
\begin{equation}
    \label{eq:F_2}
    \mathcal{F}_2 = |\langle \Psi | \Psi_{T} \rangle|^{2}~.
\end{equation}
Given that $|\Psi\rangle$ and $|\Psi_{T}\rangle$ are normalized, $\mathcal{F}_2$ should always be in the region 
$0 \leq \mathcal{F}_2 \leq 1$. The accuracy of MPS compression is characterized by $\mathcal{F}_2$. 
For a given $\chi'$, our aim is to construct $|\Psi_{T}\rangle$ with the maximal $\mathcal{F}_2$. 
Furthermore, in this section, 
we assume that the MPS representing $|\Psi\rangle$ has been brought into the canonical form before being compressed.

There are two typical approaches to performing MPS compression. One involves locally and sequentially modifying each of 
the MPS tensors (\textit{sequential MPS compression}). Representative algorithms in this approach include the sequential 
SVD truncation and the sequential local variational optimization (for details, see Ref.~\cite{schollwock2011the}). 
In this case, $\mathcal{F}_2$ in Eq.~(\ref{eq:F_2}) can be evaluated from the singular values kept in $|\Psi_{T}\rangle$.
Due to the sequential nature of this approach, the time complexity of these algorithms always linearly increases 
with the system size $N$. 
The other approach involves compressing all MPS tensors simultaneously (\textit{parallel MPS compression}). 
Although this real-space parallelizable approach has a promising constant time complexity with increasing the system size $N$, 
it has been rarely discussed in the literature~\cite{verstraete2006matrix,urbanek2016parallel,paeckel2019time}. In this parallel approach, it is not straightforward to build the connection between the fidelities in each local tensor compression and the global fidelity in Eq.~(\ref{eq:F_2}).

\begin{figure}[!ht]
    \centering
    \includegraphics[width=0.8\linewidth]{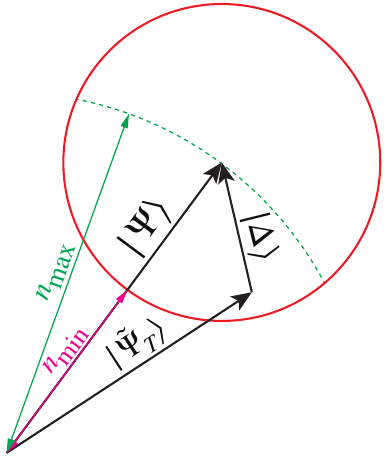}
    \caption{\label{fig:trunc_mps}
        Schematic plot illustrating the relation between the untruncated state $|\Psi\rangle$ and the state after the parallel MPS 
        compression, $|\tilde{\Psi}_{T}\rangle$, defined in Sec.~\ref{sec:thm1}. The red circle with a radius of $\sqrt{2 \epsilon(\chi')}$ 
        indicates the boundary of $|\tilde{\Psi}_{T}\rangle$ guaranteed by Theorem~\ref{thm:1}. $n_{\text{max}(\text{min})}$ 
        represents the maximum (minimum) of the wavefunction norm of $|\tilde{\Psi}_{T}\rangle$.
        } 
\end{figure}

\subsection{\label{sec:thm1}Fundamental theorem on parallel MPS compression}

A straightforward algorithm to perform parallel MPS compression (to the best of our knowledge, it is also the only existing 
algorithm) involves inserting the local projector 
$P_{i} = \sum_{\alpha = 1}^{\chi'}|\alpha\rangle_{\mathcal{V}_{i}} {}_{\mathcal{V}_{i}}\langle\alpha|$, 
where $\{|\alpha\rangle_{\mathcal{V}_{i}}\}$ are orthogonal (but not complete) bases in $\mathcal{V}_{i}$, into each bond 
in Eq.~(\ref{eq:mps_cano_decomp}) to truncate it to dimension $\chi'$. The resulting compressed MPS is denoted as 
$|\tilde{\Psi}_{T}\rangle$. It is important to note that $|\tilde{\Psi}_{T}\rangle$ is generally not normalized, while $|\Psi\rangle$ is 
always normalized. Therefore, to obtain the normalized state $|\Psi_{T}\rangle$, additional renormalization is necessary: 
$|\Psi_{T}\rangle = |\tilde{\Psi}_{T}\rangle / |\langle \tilde{\Psi}_{T}|\tilde{\Psi}_{T}\rangle|^{1/2}$.

The accuracy of the above parallel MPS compression algorithm is bounded by the follwoing fundamental 
theorem~\cite{verstraete2006matrix,cirac2021matrix}:
\begin{thm}
\label{thm:1}
\begin{equation}
    \label{eq:thm1}
    ||\Psi\rangle - |\tilde{\Psi}_{T}\rangle|^{2} \leq 2 \epsilon(\chi')
\end{equation}
and
\begin{equation}
    \label{eq:thm1_F_bound}
    \mathcal{F}_2 \geq 1 - 2 \epsilon(\chi')~,
\end{equation}
where $\epsilon(\chi') = \sum_{i = 1}^{N-1} \epsilon_{i}(\chi')$ is defined as the global truncation error, and $\epsilon_{i}(\chi') = \sum_{\alpha = \chi'+1}^{\chi}(\Lambda^{[i]}_{\alpha \alpha})^{2}$ is the local truncation error associated with bond $i$. 
An illustration of Eq.~(\ref{eq:thm1}) is shown in Fig.~\ref{fig:trunc_mps}.
\end{thm}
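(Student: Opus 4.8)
The plan is to reduce both claims to scalar estimates on $\langle\Psi|\tilde{\Psi}_{T}\rangle$ and $\langle\tilde{\Psi}_{T}|\tilde{\Psi}_{T}\rangle$. Indeed $||\Psi\rangle-|\tilde{\Psi}_{T}\rangle|^{2}=1+\langle\tilde{\Psi}_{T}|\tilde{\Psi}_{T}\rangle-2\,\mathrm{Re}\langle\Psi|\tilde{\Psi}_{T}\rangle$, and since $|\Psi_{T}\rangle=|\tilde{\Psi}_{T}\rangle/\langle\tilde{\Psi}_{T}|\tilde{\Psi}_{T}\rangle^{1/2}$ one has $\mathcal{F}_{2}=|\langle\Psi|\tilde{\Psi}_{T}\rangle|^{2}/\langle\tilde{\Psi}_{T}|\tilde{\Psi}_{T}\rangle$, so it suffices to prove
\begin{equation*}
|\langle\Psi|\tilde{\Psi}_{T}\rangle-1|\le\epsilon(\chi')\quad\text{and}\quad 1-\epsilon(\chi')\le\langle\tilde{\Psi}_{T}|\tilde{\Psi}_{T}\rangle\le1
\end{equation*}
(one may assume $\epsilon(\chi')<1$, otherwise the theorem is vacuous). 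Now $|\tilde{\Psi}_{T}\rangle$ is the canonical state of Eq.~(\ref{eq:mps_cano_decomp}) with every bond sum $\alpha_{i}$ restricted to $1\le\alpha_{i}\le\chi'$; because the restrictions on different bonds act on disjoint tensor legs, I impose them one bond at a time in the order $i=1,\dots,N-1$ and let $|\Psi^{(k)}\rangle$ be the state after restricting bonds $1,\dots,k$, so that $|\Psi^{(0)}\rangle=|\Psi\rangle$ and $|\Psi^{(N-1)}\rangle=|\tilde{\Psi}_{T}\rangle$. Telescoping gives $\langle\Psi|\tilde{\Psi}_{T}\rangle=1-\sum_{k}(\langle\Psi|\Psi^{(k-1)}\rangle-\langle\Psi|\Psi^{(k)}\rangle)$ and $\langle\tilde{\Psi}_{T}|\tilde{\Psi}_{T}\rangle=1-\sum_{k}(\langle\Psi^{(k-1)}|\Psi^{(k-1)}\rangle-\langle\Psi^{(k)}|\Psi^{(k)}\rangle)$, so everything reduces to bounding the $k$-th summand by $\epsilon_{k}(\chi')$.

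The crux is that restricting bonds $1,\dots,k-1$ modifies only the tensors on sites $1,\dots,k$, so at bond $k$ both $|\Psi\rangle$ and $|\Psi^{(k-1)}\rangle$ have the \emph{same} orthonormal family of right Schmidt vectors $\{|R^{[k]}_{\alpha}\rangle\}$ on $\{k+1,\dots,N\}$ (those assembled from the untouched $\Gamma^{[k+1]},\Lambda^{[k+1]},\dots$). Writing $\lambda_{\alpha}\equiv\Lambda^{[k]}_{\alpha\alpha}$, this yields $|\Psi\rangle=\sum_{\alpha}\lambda_{\alpha}|L^{[k]}_{\alpha}\rangle\otimes|R^{[k]}_{\alpha}\rangle$ and $|\Psi^{(k-1)}\rangle=\sum_{\alpha}\lambda_{\alpha}|\tilde{L}^{[k]}_{\alpha}\rangle\otimes|R^{[k]}_{\alpha}\rangle$, with $|\Psi^{(k)}\rangle$ obtained from $|\Psi^{(k-1)}\rangle$ by deleting the terms $\alpha>\chi'$. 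Here $|L^{[k]}_{\alpha}\rangle$ is the $\alpha$-th column of the matrix product $A^{[1]}A^{[2]}\cdots A^{[k]}$ (the orthonormal left basis of the canonical form, cf.\ the $\{|L_{\alpha_{i}}\rangle\}$ above), and $|\tilde{L}^{[k]}_{\alpha}\rangle$ is the corresponding column of $A^{[1]}PA^{[2]}P\cdots PA^{[k]}$, where $P$ is the rank-$\chi'$ coordinate projection. By the left-canonical conditions Eq.~(\ref{eq:l_r_cano}), each $A^{[j]\sigma_{j}}=\Lambda^{[j-1]}\Gamma^{[j]\sigma_{j}}$ maps the basis of $\mathcal{V}_{j}$ isometrically into $\mathcal{H}_{j}\otimes\mathcal{V}_{j-1}$; composing these isometries with the projections $P$ can only shrink norms, so $\langle\tilde{L}^{[k]}_{\alpha}|\tilde{L}^{[k]}_{\alpha}\rangle\le1=\langle L^{[k]}_{\alpha}|L^{[k]}_{\alpha}\rangle$. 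Using this together with the orthonormality of $\{|R^{[k]}_{\alpha}\rangle\}$, the $k$-th telescoping term of the norm is $\sum_{\alpha>\chi'}\lambda_{\alpha}^{2}\langle\tilde{L}^{[k]}_{\alpha}|\tilde{L}^{[k]}_{\alpha}\rangle\in[0,\epsilon_{k}(\chi')]$, and the $k$-th telescoping term of the overlap is $\sum_{\alpha>\chi'}\lambda_{\alpha}^{2}\langle L^{[k]}_{\alpha}|\tilde{L}^{[k]}_{\alpha}\rangle$, whose modulus is at most $\sum_{\alpha>\chi'}\lambda_{\alpha}^{2}=\epsilon_{k}(\chi')$ by Cauchy--Schwarz. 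Summing over $k$ gives precisely the two scalar estimates above.

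Assembling: $||\Psi\rangle-|\tilde{\Psi}_{T}\rangle|^{2}\le1+1-2(1-\epsilon(\chi'))=2\epsilon(\chi')$, which is Eq.~(\ref{eq:thm1}); and, using $\langle\tilde{\Psi}_{T}|\tilde{\Psi}_{T}\rangle\le1$ in the denominator, $\mathcal{F}_{2}\ge(1-\epsilon(\chi'))^{2}\ge1-2\epsilon(\chi')$, which is Eq.~(\ref{eq:thm1_F_bound}). I expect the step needing the most care to be the contraction bound $\langle\tilde{L}^{[k]}_{\alpha}|\tilde{L}^{[k]}_{\alpha}\rangle\le1$ — i.e.\ checking that inserting the bond projectors into the left-canonical chain cannot enlarge the left boundary vectors. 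That bound is exactly what makes each telescoping cross term $O(\epsilon)$ instead of $O(\sqrt{\epsilon})$, hence what yields the system-size-independent constant $2$ (rather than a factor growing with $N$) in both inequalities; the remaining telescoping and Cauchy--Schwarz manipulations are routine once the canonical-form bookkeeping is in place.
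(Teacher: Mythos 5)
Your proof is correct. Note that the paper itself does not prove Theorem~\ref{thm:1} --- it imports it as a known result from Refs.~\cite{verstraete2006matrix,cirac2021matrix} --- and your argument is essentially the standard one found there: telescope the bond projections one at a time, use that the untruncated right-canonical vectors $\{|R^{[k]}_{\alpha}\rangle\}$ remain an orthonormal family at bond $k$, and use that the left chain $A^{[1]}P_1\cdots P_{k-1}A^{[k]}$ is a composition of isometries and projections so that $\langle\tilde L^{[k]}_\alpha|\tilde L^{[k]}_\alpha\rangle\le 1$; this is exactly the mechanism that keeps each telescoping increment $O(\epsilon_k)$ and hence the constant independent of $N$. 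The only point worth flagging is that your $\epsilon(\chi')<1$ caveat is genuinely needed only for the step $|\langle\Psi|\tilde\Psi_T\rangle|^2\ge(1-\epsilon)^2$ in the fidelity bound (where the claim is anyway vacuous for $\epsilon\ge 1/2$ since $\mathcal{F}_2\ge 0$); the bound in Eq.~(\ref{eq:thm1}) follows from your estimates without that assumption, since $\mathrm{Re}\,\langle\Psi|\tilde\Psi_T\rangle\ge 1-\epsilon$ and $\langle\tilde\Psi_T|\tilde\Psi_T\rangle\le 1$ hold unconditionally.
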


Combining Eq.~(\ref{eq:thm1_F_bound}) with the fact that $0 \leq \epsilon_{i} \leq 1$, we can derive a looser bound for 
$\mathcal{F}_2$ as
\begin{equation}
    \label{eq:thm1_F_bound_looser}
    \mathcal{F}_2 \geq 1 - 2 \sum_{i=1}^{N-1} \sqrt{\epsilon_{i}(\chi')}~.
\end{equation}
Note that the right-hand side of Eq.~(\ref{eq:thm1_F_bound_looser}) can be negative, 
while the wavefunction fidelity $\mathcal{F}_2$ must be positive.
The bound in Eq.~(\ref{eq:thm1_F_bound_looser}) has the same formulation as the exact fidelity lower bound for the sequential SVD 
truncation formulated in Ref.~\cite{zhou2020what}. Although $\epsilon_{i}(\chi_i)$ may slightly deviates from the corresponding local 
truncation 
error in the sequential SVD compression, we argue that the parallel MPS compression algorithm could have the same order of accuracy 
as the sequential SVD compression algorithm. This argument will be numerically confirmed in Sec.~\ref{sec:ptebd_accu}.
Also notice that, in general, the canonical form is broken after a parallel MPS compression, 
which could impact on the subsequent simulation steps.

\subsection{Wavefunction norm stabilization in parallel MPS compression}

Although the norm of the wavefunction does not impact the calculation of any physical observables, 
it strongly influences numerical stability in actual numerical simulations, as it may vanish during the calculation process. 
Hence, we have to carefully consider the possible norm deviation induced by the parallel MPS compression.

\subsubsection{Decay of the wavefunction norm after parallel MPS compression}

To evaluate the wavefunction norm deviation after the parallel MPS compression, we first derive its bounds in Lemma~\ref{thm:2}. 
Its proof is based on Theorem~\ref{thm:1} and the properties of the canonical form.
\begin{lemma}
\label{thm:2}
The wavefunction norm $n = ||\tilde{\Psi}_{T}\rangle|$ after the parallel MPS compression is bounded by
\begin{equation}
    \label{eq:mps_gt_n_bounds}
    1 - \sqrt{2\epsilon(\chi')} \leq n \leq 1~.
\end{equation}
\end{lemma}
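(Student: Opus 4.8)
The plan is to derive both bounds in Eq.~(\ref{eq:mps_gt_n_bounds}) directly from Theorem~\ref{thm:1}, without any new machinery. First I would establish the upper bound $n \le 1$. Since $|\tilde{\Psi}_{T}\rangle$ is obtained from $|\Psi\rangle$ by inserting the local projectors $P_{i}$ into the canonical decomposition Eq.~(\ref{eq:mps_cano_decomp}), and since $|\Psi\rangle$ is in the canonical form, the key observation is that $\langle \Psi | \tilde{\Psi}_{T}\rangle = \langle \tilde{\Psi}_{T} | \tilde{\Psi}_{T}\rangle = n^{2}$. This follows because each $P_{i}$ is a projector (so $P_i^2 = P_i$), and when computing $\langle \Psi | \tilde{\Psi}_{T}\rangle$ the canonical conditions Eq.~(\ref{eq:l_r_cano}) let one absorb the bra-side tensors into identities starting from the open boundaries, leaving exactly the same contraction that computes $n^2$. (Equivalently: $|\tilde\Psi_T\rangle = \hat P |\Psi\rangle$ with $\hat P$ a product of commuting orthogonal projectors, so $\hat P$ is itself an orthogonal projector onto the truncated subspace, whence $\langle\Psi|\hat P|\Psi\rangle = \langle\Psi|\hat P^2|\Psi\rangle = \|\hat P|\Psi\rangle\|^2$.) Then Cauchy--Schwarz gives $n^{2} = |\langle \Psi | \tilde{\Psi}_{T}\rangle| \le \||\Psi\rangle\|\,\||\tilde{\Psi}_{T}\rangle\| = n$, so $n \le 1$.

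For the lower bound, I would expand
\begin{equation}
\||\Psi\rangle - |\tilde{\Psi}_{T}\rangle\|^{2} = 1 - 2\,\mathrm{Re}\langle\Psi|\tilde{\Psi}_{T}\rangle + n^{2} = 1 - 2n^{2} + n^{2} = 1 - n^{2},
\end{equation}
using the identity $\langle\Psi|\tilde{\Psi}_{T}\rangle = n^{2}$ from the previous step (which is in particular real and nonnegative). Theorem~\ref{thm:1}, Eq.~(\ref{eq:thm1}), then yields $1 - n^{2} \le 2\epsilon(\chi')$, i.e. $n^{2} \ge 1 - 2\epsilon(\chi')$. Since $n \ge 0$, and using $\sqrt{1-x} \ge 1 - \sqrt{x}$ for $x = 2\epsilon(\chi') \in [0,1]$ (and the bound is vacuous otherwise), we obtain $n \ge \sqrt{1 - 2\epsilon(\chi')} \ge 1 - \sqrt{2\epsilon(\chi')}$, which is the claimed lower bound.

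The main obstacle is the justification that $\langle\Psi|\tilde{\Psi}_{T}\rangle = n^{2}$; everything else is routine. This hinges on the projector property and the canonical form, so I would state it carefully: the truncation operator acting on $|\Psi\rangle$ is $\hat{P} = \prod_{i=1}^{N-1} P_{i}$ where each $P_{i}$ acts on the virtual bond $i$ of the canonical MPS, and because the $\Lambda^{[i]}$ are the Schmidt coefficients across cut $i$, the $P_i$ project onto mutually orthogonal Schmidt vectors, so $\hat P$ is an orthogonal projection and $\langle\Psi|\hat P|\Psi\rangle = \|\hat P|\Psi\rangle\|^2 = n^2$, which is manifestly real. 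A diagrammatic remark or a one-line reference to the canonical conditions Eq.~(\ref{eq:l_r_cano}) should suffice to make this rigorous without a lengthy tensor-network computation.
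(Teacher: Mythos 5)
There is a genuine gap: the identity $\langle\Psi|\tilde{\Psi}_{T}\rangle=n^{2}$, on which both of your bounds rest, is false in general for $N\geq 4$. The operator that implements bond truncation at cut $i$ on the physical Hilbert space is $\hat{Q}_{i}=\Pi_{i}\otimes I_{[i+1,N]}$, with $\Pi_{i}$ the orthogonal projector onto the span of the retained left Schmidt vectors $\{|L^{(i)}_{\alpha}\rangle\}_{\alpha\le\chi'}$ of sites $1,\dots,i$. These operators act on nested, overlapping regions and are defined through \emph{different} Schmidt decompositions, so they do not commute: $\Pi_{j}$ ($j>i$) projects onto vectors $|L^{(j)}_{\beta}\rangle$ that have components on \emph{all} $|L^{(i)}_{\alpha}\rangle$, including the discarded ones. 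One can check that the ordered product $\hat{Q}_{1}\hat{Q}_{2}\cdots\hat{Q}_{N-1}|\Psi\rangle$ (rightmost applied first, using left-canonicality at each step) does reproduce $|\tilde{\Psi}_{T}\rangle$, but this product of non-commuting projectors is not itself an orthogonal projector. Concretely, for $N=4$,
\begin{equation}
\langle\Psi|\tilde{\Psi}_{T}\rangle-n^{2}
=\langle\Psi|\bigl(1-\hat{Q}_{3}\hat{Q}_{2}\bigr)\hat{Q}_{1}\hat{Q}_{2}\hat{Q}_{3}|\Psi\rangle ,
\end{equation}
and while the $(1-\hat{Q}_{3})$ piece vanishes (the right Schmidt vectors at the last cut are untouched and orthonormal), the $\hat{Q}_{3}(1-\hat{Q}_{2})$ piece reduces to
$\sum_{\beta'>\chi'}\sum_{\beta\le\chi'}\lambda^{(2)}_{\beta'}\lambda^{(2)}_{\beta}\,\langle L^{(2)}_{\beta'}|\tilde{L}^{(2)}_{\beta}\rangle\,\langle \tilde{R}^{(2)}_{\beta'}|\tilde{R}^{(2)}_{\beta}\rangle$, where the tildes denote the left (right) halves with $P_{1}$ ($P_{3}$) inserted. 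Both factors are generically nonzero off-diagonal overlaps, so the cross term survives. (Your $N=3$ intuition is misleading: there the only interior decoration sits next to an orthonormal boundary Schmidt basis, and the cross terms cancel.)

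Consequently your upper bound collapses — Cauchy--Schwarz applied to $\mathrm{Re}\,\langle\Psi|\tilde{\Psi}_{T}\rangle\le n$ together with Theorem~\ref{thm:1} only yields $(1-n)^{2}\le 2\epsilon(\chi')$, i.e.\ $n\le 1+\sqrt{2\epsilon(\chi')}$, not $n\le 1$. The lower bound survives this weakening, since $(1-n)^{2}\le 2\epsilon(\chi')$ also gives $n\ge 1-\sqrt{2\epsilon(\chi')}$; this is just the triangle inequality, which is exactly how the paper proves it. For the upper bound you need a different argument: either observe that $|\tilde{\Psi}_{T}\rangle=\hat{Q}_{1}\hat{Q}_{2}\cdots\hat{Q}_{N-1}|\Psi\rangle$ and that each $\hat{Q}_{i}$, being an orthogonal projector, is a norm contraction (so the norm decreases monotonically along the sequence), or follow the paper, which proves the equivalent statement in transfer-matrix language: conjugation by $P_{i}$ cannot increase the trace of a positive matrix, and the site maps $\mathcal{E}^{[i]}$ are trace preserving, so $\mathrm{Tr}(Y^{[i]})\le\mathrm{Tr}(Y^{[i+1]})$ telescopes to $n^{2}\le 1$.
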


\begin{proof}
    According to the triangle inequality, we have $1 - ||\Delta\rangle| \leq n$, where $|\Delta\rangle = |\Psi\rangle - |\tilde{\Psi}_{T}\rangle$, 
    as schematically illustrated in Fig.~\ref{fig:trunc_mps}. Then, we can derive the lower bound in Eq.~(\ref{eq:mps_gt_n_bounds}) by using Theorem~\ref{thm:1} as
    \begin{equation}
        1 - \sqrt{2\epsilon(\chi')} \leq 1 - ||\Delta\rangle| \leq n~.
    \end{equation}
    
    As for the upper bound in Eq.~(\ref{eq:mps_gt_n_bounds}), from the direct calculation, we have
    \begin{eqnarray}
        \label{eq:n2}
        \nonumber
        n^{2} = &&\sum_{\sigma_1,\cdots,\sigma_N}\text{Tr}(A^{[1]\sigma_1} P_{1} \cdots P_{N-2} A^{[N-1]\sigma_{N-1}} P_{N-1} A^{[N]\sigma_N} \\
        &&\bar{A}^{[N]\sigma_N} P_{N-1} \bar{A}^{[N-1]\sigma_{N-1}} P_{N-2} \cdots P_{1} \bar{A}^{[1]\sigma_1})~.
    \end{eqnarray}
    Here, we adopt the left canonical form described in Eq.~(\ref{eq:l_r_cano}). By defining the quantum map
    \begin{equation}
      \mathcal{E}^{[i]} (X) = \sum_{\sigma_{i}} A^{[i]\sigma_{i}} X \bar{A}^{[i]\sigma_{i}}~,
    \end{equation}
    where $X$ is a $(\mathcal{V}_{i}\times\mathcal{V}_{i})$ matrix, 
    Eq.~(\ref{eq:n2}) can be reformulated as
    \begin{eqnarray}
        \label{eq:n2_2}
        \nonumber
        n^{2} = &&\mathrm{Tr}[ \mathcal{E}^{[1]} ( P_{1} \cdots \mathcal{E}^{[N-2]} ( P_{N-2}\\
        &&\mathcal{E}^{[N-1]}(P_{N-1} \rho^{[N]} P_{N-1})P_{N-2} ) \cdots P_{1} ) ]
    \end{eqnarray}
    and $\rho^{[N]}\equiv \sum_{\sigma_N}A^{[N]\sigma_N}\bar{A}^{[N]\sigma_N}$ is obviously a positive matrix. 
    Introducing the following recursive relation: 
    \begin{equation}
        \label{eq:recu}
        Y^{[i]} = \mathcal{E}^{[i]}(P_{i} Y^{[i+1]} P_{i})~\text{with}~Y^{[N]} = \rho^{[N]}
    \end{equation}
    for $i=1,2, \cdots, N-1$, Eq.~(\ref{eq:n2_2}) can be simply given as $n^{2} = \text{Tr}(Y^{[1]})$. 

    Next, we define another map
    \begin{equation}
        \mathcal{P}_{i}(X) = P_{i}XP_{i}~.
    \end{equation}
    Because $\mathcal{E}^{[i]}(X)$ is a completely positive trace preserving map~\cite{choi1975completely} 
    and $\mathcal{P}_{i}(X)$ is a positive map, $Y^{[i]}$ is a positive matrix. We can also readily show that 
    \begin{equation}
        \mathrm{Tr}(X) = \mathrm{Tr}(\mathcal{P}_{i}(X)) + \mathrm{Tr}(\bar{\mathcal{P}}_{i}(X))~,
    \end{equation}
    where $\bar{\mathcal{P}}_{i}(X) = (I-P_{i})X(I-P_{i})$ is also a positive map. 
    Since $\mathrm{Tr}(\bar{\mathcal{P}}_{i}(X)) \geq 0$ for a positive matrix $X$, 
    $\mathrm{Tr}(\mathcal{P}_{i}(X)) \leq \mathrm{Tr}(X)$. 
    Noticing that $Y^{[i]}$ is a positive matrix, we can finally obtain the following inequality: 
    \begin{equation}
        \label{eq:Y_ineq}
      \mathrm{Tr}(Y^{[i]}) \leq \mathrm{Tr}(Y^{[i+1]})~.
    \end{equation}
    Using this inequality, we can derive the upper bound in Eq.~(\ref{eq:mps_gt_n_bounds}) as
    \begin{equation}
        \label{eq:n2_upper}
        n = \sqrt{\text{Tr}(Y^{[1]})} \leq \sqrt{\text{Tr}(Y^{[N]})} = 1~.
    \end{equation}
\end{proof}

Lemma~\ref{thm:2} shows that the wavefunction norm necessarily decreases, except for the ideal case, i.e., without truncating any bonds, 
after a parallel MPS compression process. 
This implies that, if we alternatively perform TEBD procedures and parallel MPS compression processes, as in the case of 
Ref.~\cite{urbanek2016parallel}, the wavefunction norm will monotonically decay, ultimately leading to serious numerical 
instability. This norm vanishing will be numerically demonstrated later in Sec.~\ref{sec:ptebd_stab}. 
To avoid the numerical instability, 
naively, we can renormalize the wavefunction after a parallel MPS compression process. 
However, the time cost of calculating 
the wavefunction norm increases linearly with the system size $N$, offsetting the benefits of parallel MPS compression. 
Therefore, it is highly desired to develop an efficient method that can suppress the norm vanishing during repeated 
parallel MPS compression processes without any sequential procedure.

\subsubsection{\label{sec:thm2}Parallelizable wavefunction norm stabilization}

To stabilize the wavefunction norm, a straightforward possibility is to locally multiply $\Lambda^{[i]}$ 
by a factor $\nu_{i} = [1 - \epsilon_{i}(\chi')]^{-1/2}$ after a parallel MPS compression process. 
Consequently, the wavefunction norm changes to
\begin{equation}
    n^{*} = n\prod_{i=1}^{N-1}\nu_{i}~.
    \label{eq:lren}
\end{equation}
Now, we shall prove that $n^{*}$, denoted as stabilized norm, is bounded in an interval that includes 1, 
and the two boundaries of this interval uniformly converge to 1 as $\epsilon(\chi')$ approaches 0.

\begin{thm}
\label{thm:3}
The stabilized norm $n^{*}$ is bounded by
\begin{eqnarray}
    n^{*}_{\text lower} &=& (1 - \sqrt{2\epsilon(\chi')})\prod_{i=1}^{N-1}\nu_{i}~, \label{eq:n_low}\\
    n^{*}_{\text upper} &=& \prod_{i=1}^{N-1}\nu_{i}~, \label{eq:n_up}
\end{eqnarray}
with $n^{*}_{\text lower} \leq 1$ and $n^{*}_{\text upper} \geq 1$. 
Moreover, they uniformly converge to 1 as $\epsilon(\chi') \rightarrow 0$.
\end{thm}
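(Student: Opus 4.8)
The plan is to derive everything from Lemma~\ref{thm:2} together with two elementary facts about the local truncation errors $\epsilon_i(\chi')$. First, since $\nu_i=[1-\epsilon_i(\chi')]^{-1/2}>0$ for every $i$, multiplying the two-sided bound $1-\sqrt{2\epsilon(\chi')}\le n\le 1$ of Lemma~\ref{thm:2} by the positive number $\prod_{i=1}^{N-1}\nu_i$ immediately yields $n^{*}_{\text lower}\le n^{*}\le n^{*}_{\text upper}$, establishing that $n^{*}$ is bounded by the claimed expressions. It then remains to (i) locate the two bounds relative to $1$, and (ii) show that both converge to $1$ uniformly in $N$.

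For the upper bound, each $\epsilon_i(\chi')\in[0,1)$ gives $\nu_i\ge1$, hence $n^{*}_{\text upper}=\prod_{i=1}^{N-1}\nu_i\ge1$. To obtain a matching, $N$-independent estimate I would invoke the Weierstrass product inequality $\prod_{i=1}^{N-1}[1-\epsilon_i(\chi')]\ge 1-\sum_{i=1}^{N-1}\epsilon_i(\chi')=1-\epsilon(\chi')$ (valid since all $\epsilon_i\in[0,1]$; proved by a one-line induction), which gives
\begin{equation}
    1\le n^{*}_{\text upper}=\prod_{i=1}^{N-1}[1-\epsilon_i(\chi')]^{-1/2}\le[1-\epsilon(\chi')]^{-1/2}.
\end{equation}
The right-hand side depends only on the global error $\epsilon(\chi')$ and tends to $1$ as $\epsilon(\chi')\to0$, proving uniform convergence of $n^{*}_{\text upper}$.

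For the lower bound, I would split on the sign of $1-\sqrt{2\epsilon(\chi')}$: if it is $\le0$ then $n^{*}_{\text lower}\le0\le1$ trivially, so assume $\sqrt{2\epsilon(\chi')}<1$. Since $\prod_i\nu_i>0$, the inequality $n^{*}_{\text lower}\le1$ is equivalent to $\bigl(1-\sqrt{2\epsilon(\chi')}\bigr)^{2}\le\prod_{i=1}^{N-1}[1-\epsilon_i(\chi')]$; by the Weierstrass inequality again it suffices to verify the scalar inequality $\bigl(1-\sqrt{2\epsilon}\bigr)^{2}\le1-\epsilon$, which after expansion reduces to $3\epsilon\le2\sqrt{2\epsilon}$, i.e.\ $\epsilon\le8/9$, automatically satisfied here because $\epsilon(\chi')<1/2$. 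Combining with $n^{*}_{\text lower}=\bigl(1-\sqrt{2\epsilon(\chi')}\bigr)\prod_i\nu_i\ge1-\sqrt{2\epsilon(\chi')}$ (using $\prod_i\nu_i\ge1$) yields $1-\sqrt{2\epsilon(\chi')}\le n^{*}_{\text lower}\le1$, and the left side tends to $1$ as $\epsilon(\chi')\to0$, giving uniform convergence of $n^{*}_{\text lower}$ as well.

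The algebra is routine; the one place needing a genuine idea is bounding the regauging factor $\prod_{i=1}^{N-1}\nu_i$ independently of $N$ — naively it is a product of $N-1$ factors each strictly above $1$ and could grow with the system size. The Weierstrass product inequality is precisely what prevents this, tying the product to the single global quantity $\epsilon(\chi')$, so I expect that to be the crux of the proof, with the scalar estimate $\bigl(1-\sqrt{2\epsilon}\bigr)^{2}\le1-\epsilon$ as the only other nontrivial check.
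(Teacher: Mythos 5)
Your proposal is correct and follows essentially the same route as the paper's proof: both obtain the bounds by multiplying Lemma~\ref{thm:2} by $\prod_i\nu_i$, both use the Weierstrass product inequality $\prod_i[1-\epsilon_i(\chi')]\ge 1-\epsilon(\chi')$ to tie the regauging factor to the global error, and both reduce $n^{*}_{\text lower}\le 1$ to a scalar comparison between $1-\sqrt{2\epsilon}$ and $1-\epsilon$ (you square the inequality and check $(1-\sqrt{2\epsilon})^{2}\le 1-\epsilon$; the paper instead uses $1-\sqrt{2\epsilon}\le 1-2\epsilon\le 1-\epsilon$ together with $\sqrt{1-\epsilon_i}\ge 1-\epsilon_i$, with the same case split on $2\epsilon(\chi')\ge 1$). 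Your explicit bound $n^{*}_{\text upper}\le[1-\epsilon(\chi')]^{-1/2}$ is in fact marginally sharper than the paper's implicit $[1-\epsilon(\chi')]^{-1}$, but the argument is the same.
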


\begin{proof}
Eqs.~(\ref{eq:n_low}) and (\ref{eq:n_up}) follow directly from Eq.~(\ref{eq:mps_gt_n_bounds}) in Lemma~\ref{thm:2}. 
Let us now prove that these quantities satisfy $n^{*}_{\text lower} \leq 1$ and $n^{*}_{\text upper} \geq 1$. 
Noticing that $\epsilon(\chi') \geq 0$ and $1 - 2\epsilon(\chi') = (1 + \sqrt{2\epsilon(\chi')}) (1 - \sqrt{2\epsilon(\chi')})$, 
we have the following inequalities:
\begin{equation}
    \label{eq:thm3_1}
    1 - \sqrt{2\epsilon(\chi')} \leq 1 - 2\epsilon(\chi') \leq 1 - \epsilon(\chi')~,
\end{equation}
when $2\epsilon(\chi')\leq1$. 
On the other hand, since $0 \leq \epsilon_{i}(\chi') \leq 1$, we can show that 
\begin{equation}
    \label{eq:thm3_2}
    \prod_{i = 1}^{N-1} \nu_{i}^{-1} \geq \prod_{i = 1}^{N-1}[1 - \epsilon_{i}(\chi')] \geq 1 - \epsilon(\chi').
\end{equation}
Combine Eq.~(\ref{eq:thm3_1}) and Eq.~(\ref{eq:thm3_2}), we can obtain that 
$1 - \sqrt{2\epsilon(\chi')} \leq \prod_{i = 1}^{N-1} \nu_{i}^{-1}$ and thus 
$n^{*}_{\text lower} = (1 - \sqrt{2\epsilon(\chi')})\prod_{i=1}^{N-1}\nu_{i} \leq 1$. 
This is trivial when $2\epsilon(\chi')\geq1$. 
It is also straightforward to show that $n^{*}_{\text upper} \geq 1$ because $\epsilon_{i}(\chi') \leq 1$. 

Since both quantities $1 - \sqrt{2\epsilon(\chi')}$ and $\prod_{i=1}^{N-1}\nu_{i}$ converge to 1 
as $\epsilon(\chi') \rightarrow 0$, the two boundaries $n^{*}_{\text lower}$ and $n^{*}_{\text upper}$ converge to 1 from 
both sides. 
\end{proof}

Theorem~\ref{thm:3} implies that the local renormalization method can indeed effectively mitigate the decay of the wavefunction norm 
during the parallel MPS compression. While the proof provided above assumes that the uncompressed MPS is in its canonical form, 
we argue that this wavefunction norm stabilization method remains applicable even when the uncompressed MPS deviates slightly 
from the canonical form. The numerical demonstration supporting this assertion will be presented in Sec.~\ref{sec:ptebd_stab}.

\subsection{Parallel recovery of the canonical form through parallel tSU}

Because of its many advantages, almost all MPS-based algorithms developed previously employ the canonical form. 
Therefore, it is important to maintain an MPS in the canonical form or close to the canonical form also in a parallel MPS algorithm. 
In this respect, it is well-known that the right-left sweeps of tSU, which is a sequential procedure, can converge an arbitrary 
MPS to its canonical form. However, how the canonical form is recovered with a parallel procedure has been rarely investigated 
thus far. 
In this section, we first prove that
at most $\frac{N}{2}$ ($\frac{N-1}{2}$) parallel tSU steps
can converge an arbitrary MPS to its canonical form when $N$ is even (odd), and then we discuss how it approaches the canonical form with fewer 
operations.

Similar to the TEBD algorithm, which is exactly parallelizable in the case of no truncation~\cite{vidal2003efficient}, 
we can define a single parallel trivial simple update (PtSU) step as applying tSU on all the odd MPS bonds and then 
on all the even MPS bonds. 
First, we prove the convergency of an MPS to its canonical form through PtSU steps with the following theorem.

\begin{thm} \label{thm:4}
For an MPS with $N$ sites, at most $\frac{N}{2}$ ($\frac{N-1}{2}$) PtSU steps are required 
to convert it to its canonical form when $N$ is even (odd).
\end{thm}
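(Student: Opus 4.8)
The plan is to reduce the statement to a local lemma about one tSU on a single bond, and then to track how the canonical structure grows inward from the two ends of the chain, one site per PtSU step, until the two canonical regions meet.

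First I would isolate the following property of a single tSU on bond $i$ (with $\hat U^{\alpha}_{i,i+1}=I$). Writing the two-site object as $\Theta=\Lambda^{[i-1]}\Gamma^{[i]}\Lambda^{[i]}\Gamma^{[i+1]}\Lambda^{[i+1]}=U^{[i]}S\bar V^{[i+1]}$, the updated tensors satisfy $A^{[i]\sigma_i}=U^{[i]\sigma_i}$ and $B^{[i+1]\sigma_{i+1}}=\bar V^{[i+1]\sigma_{i+1}}$, so $\sum_{\sigma_i}\bar A^{[i]\sigma_i}A^{[i]\sigma_i}=I_{\mathcal V_i}$ and $\sum_{\sigma_{i+1}}B^{[i+1]\sigma_{i+1}}\bar B^{[i+1]\sigma_{i+1}}=I_{\mathcal V_{i-1}}$ hold unconditionally, because $U^{[i]}$ is a left isometry and $\bar V^{[i+1]}$ a right isometry. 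If in addition the tensors on sites $1,\dots,i-1$ already form a left-canonical block and those on sites $i+2,\dots,N$ a right-canonical block, then the row and column indices of $\Theta$ label, in the corresponding global Schmidt bases at cuts $i-1$ and $i+1$, orthonormal states on sites $1,\dots,i$ and on sites $i+1,\dots,N$; hence the SVD of $\Theta$ is literally the Schmidt decomposition of $|\Psi\rangle$ across bond $i$, so $\Lambda^{[i]}\leftarrow S$ is the exact Schmidt spectrum there and bond $i$ is fully canonicalized. Finally, since a tSU on bond $i$ changes only $\Gamma^{[i]},\Lambda^{[i]},\Gamma^{[i+1]}$, it disturbs neither the left-canonical block on sites $1,\dots,i-1$ nor the right-canonical block on sites $i+2,\dots,N$.

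With this lemma I would introduce a left front $\ell$ (the largest index with sites $1,\dots,\ell$ forming a left-canonical block) and a right front $r$ (the smallest index with sites $r+1,\dots,N$ forming a right-canonical block), using the conventions $\ell=0$, $r=N$ forced by the trivial boundary bonds $0$ and $N$. One checks that once $\ell\ge r$ the two blocks overlap, which forces every $\Lambda^{[j]}$ to be the Schmidt spectrum at cut $j$ and hence puts the MPS into the canonical form of Eq.~(\ref{eq:l_r_cano}). The core of the argument is an induction on the number of PtSU steps showing that one step sends $(\ell,r)\mapsto(\ell+1,r-1)$ whenever $\ell<r$: within a step the odd sub-round and the even sub-round each act on mutually disjoint two-site blocks, so their parallel tSUs do not interfere; whichever sub-round contains bond $\ell+1$ applies, by the lemma, a tSU whose left neighbour is already a canonical block, pushing the left front to $\ell+1$, and symmetrically on the right, while the remaining tSUs of both sub-rounds, being local to their own pairs of sites, leave the extended blocks alone. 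Iterating, after $k$ steps $\ell\ge k$ and $r\le N-k$, so canonicality is attained once $k\ge N-k$; a short parity check of which sub-round first straddles the central bond sharpens this to $k=N/2$ for even $N$ and $k=(N-1)/2$ for odd $N$.

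The step I expect to be the main obstacle is exactly this front-propagation induction together with its parity bookkeeping: one must verify carefully that inside a single PtSU step the sub-round straddling a front really advances it and that neither the partner sub-round nor the non-straddling tSUs of the same sub-round spoil the gain, and then pin down which sub-round catches the middle bond — this is what separates the even- and odd-$N$ counts. By comparison the single-tSU lemma is a routine rewriting of the two-site tensor in orthonormal bases, and the ``overlap $\Rightarrow$ canonical'' step is a brief computation with the transfer map of a right-canonical block, of the kind already used in the proof of Lemma~\ref{thm:2}.
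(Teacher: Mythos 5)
Your single-tSU lemma and the observation that a tSU on bond $m$ leaves the left tensors $L^{[k]}$ with $k\le m-1$ (indeed all $k\neq m$) untouched are correct and are essentially the same ingredients the paper uses. The gap is in your termination criterion. Having sites $1,\dots,\ell$ left-canonical and sites $r+1,\dots,N$ right-canonical with $\ell\ge r$ is only a \emph{mixed}-canonical form: it forces $\Lambda^{[j]}$ to be the Schmidt spectrum only for cuts $j$ inside the overlap window $r\le j\le\ell$, and it says nothing about the conditions of Eq.~(\ref{eq:l_r_cano}) at the sites outside the respective blocks — for $j\le r$ the tensor $B^{[j]}$ need not be a right isometry, and for $j\ge \ell+1$ the tensor $A^{[j]}$ need not be a left isometry. (Already for $N=4$ with $\ell=r=2$ one can write down a normalized MPS, left-canonical on sites $1,2$ and right-canonical on sites $3,4$, for which $\sum_{\sigma_3}\bar A^{[3]\sigma_3}A^{[3]\sigma_3}\neq I_{\mathcal V_3}$.) The canonical form requires $\ell=N$ \emph{and} $r=0$ simultaneously, i.e.\ each front must sweep the entire chain, not merely meet the other in the middle. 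With your stated advancement rate of one site per front per step, that would cost of order $N$ steps, not $N/2$, so the two errors are silently cancelling to produce the right-looking count for even $N$.

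The fix, which is what the paper's induction on $L^{[i]}$ and $R^{[i]}$ actually delivers, is that each PtSU step advances each front by \emph{two} sites rather than one: starting from $\ell=0$, bond $\ell+1$ is odd and is hit in the first sub-round (advancing the front to $\ell+1$), after which bond $\ell+2$ is even and is hit in the second sub-round (advancing it to $\ell+2$); the remaining tSUs of the step cannot retract the front because they only rewrite $L^{[m]}$ for $m$ strictly beyond it. Hence the left conditions for all $N-1$ internal bonds are reached in $\lceil (N-1)/2\rceil$ steps, which is $N/2$ for even $N$ and $(N-1)/2$ for odd $N$. You should also be more careful with ``symmetrically on the right'': the fixed odd-then-even ordering is not left--right symmetric, and for the right-canonical conditions the two frontier bonds must be hit in \emph{decreasing} order within a step. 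This works out when $N-1$ is odd (even $N$) but is exactly where the parity bookkeeping for odd $N$ lives, so it cannot be dismissed as a one-line symmetry argument.
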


\begin{proof}
Before presenting the proof, let us first introduce an equivalent definition of the canonical form. 
We define the following left tensor $L^{[i]}$ and right tensor $R^{[i]}$ for bond $i$, respectively: 
\begin{align}
    L^{[i]\sigma_1\cdots\sigma_i} \equiv& \Lambda^{[0]}\Gamma^{[1]\sigma_1}\cdots\Lambda^{[i-1]}\Gamma^{[i]\sigma_i} \\
    R^{[i]\sigma_{i+1}\cdots\sigma_N} \equiv& \Gamma^{[i+1]\sigma_{i+1}}\Lambda^{[i+1]}\cdots\Gamma^{[N]\sigma_N}\Lambda^{[N]}~.
\end{align}
Then, asserting that an MPS composed of $\{\Gamma^{[i]\sigma_i}, \Lambda^{[j]}\}$ is in the canonical form, 
satisfying the conditions in Eq.~(\ref{eq:l_r_cano}), is equivalent to saying 
that $L^{[i]\sigma_1\cdots\sigma_i}$ and $R^{[i]\sigma_{i+1}\cdots\sigma_N}$ satisfy 
\begin{eqnarray}
    \label{eq:llid}
    \sum_{\sigma_1,\cdots,\sigma_i}\bar{L}^{[i]\sigma_1\cdots\sigma_i}L^{[i]\sigma_1\cdots\sigma_i} &=& I_{\mathcal{V}_{i}}\\
    \label{eq:rrid}
    \sum_{\sigma_{i+1},\cdots,\sigma_N} R^{[i]\sigma_{i+1}\cdots\sigma_N}\bar{R}^{[i]\sigma_{i+1}\cdots\sigma_N} &=& I_{\mathcal{V}_{i}}
\end{eqnarray}
for $\forall i \in [0, N]$, where $L^{[0]}\equiv(1)$ and $R^{[N]}\equiv(1)$. 
The proof of this equivalence follows from straightforward calculations.

Using this equivalent definition of the canonical form, we can prove the theorem inductively. 
As the starting point, $\bar{L}^{[0]}L^{[0]} = (1) = I_{\mathcal{V}_{0}}$ is trivially satisfied. 
Then, if we assume that $\sum_{\sigma_1,\cdots,\sigma_{i-1}}\bar{L}^{[i-1]\sigma_{1}\cdots\sigma_{i-1}}L^{[i-1]\sigma_1\cdots\sigma_{i-1}} = I_{\mathcal{V}_{i-1}}$ is satisfied, 
one additional tSU on sites $i$ and $i+1$ can realize 
$\sum_{\sigma_1,\cdots,\sigma_{i}}\bar{L}^{[i]\sigma_{1}\cdots\sigma_{i}}L^{[i]\sigma_1\cdots\sigma_{i}} = I_{\mathcal{V}_{i}}$ 
This is simply because, after this tSU, $\Lambda^{[i-1]}\Gamma^{[i]\sigma_i} = A^{[i]\sigma_i}$ satisfies 
$\sum_{\sigma_i}\bar{A}^{[i]\sigma_i}A^{[i]\sigma_i} = I_{\mathcal{V}_{i}}$ and thus  
\begin{align}
&\sum_{\sigma_1,\cdots,\sigma_{i}}\bar{L}^{[i]\sigma_{1}\cdots\sigma_{i}}L^{[i]\sigma_1\cdots\sigma_{i}} \nonumber \\
&= \sum_{\sigma_1,\cdots,\sigma_{i}}\bar{A}^{[i]\sigma_i} \bar{L}^{[i-1]\sigma_{1}\cdots\sigma_{i-1}}L^{[i-1]\sigma_{1}\cdots\sigma_{i-1}} A^{[i]\sigma_i} = I_{\mathcal{V}_{i}}.
\end{align} 
At the same time, by definition, any tSU involving sites in $\{1,\cdots,i\}$ does not change the left tensors 
$\{L^{[k]\sigma_1\cdots\sigma_{k}}\}_{k=0}^{i-1}$. Therefore, $\bar{L}^{[N-1]}L^{[N-1]} = I_{\mathcal{V}_{N-1}}$ is established 
after $\frac{N}{2}$ ($\frac{N-1}{2}$) PtSU steps for even (odd) $N$. Note that Eq.~(\ref{eq:llid}) is satisfied for $i=N$ because of 
the wavefunction normalization condition.

Similarly, we can prove that the same PtSU steps also establish Eq.~(\ref{eq:rrid}) for $\forall i \in [1, N-1]$. 
For $i = 0$, Eq.~(\ref{eq:rrid}) is automatically satisfied because of the wavefunction normalization condition. 
Finally, since this proof is inductive, it provides an upper bound on how many PtSU steps are required to restore the canonical form.
\end{proof}

Theorem~\ref{thm:4} shows that the PtSU, a real-space parallelizable algorithm, can recover the canonical form with only locally 
operating tensors, implying the potential of its utilization on modern supercomputers where communication between 
neighboring nodes is highly optimized~\cite{ajima2018tofu}.
Moreover, we can simply skip applying the identity gate in the tSU procedure to further optimize the performance in practice.

Now, we explore the behavior of approaching the canonical form by applying the PtSU.
To measure how an MPS approaches its canonical form with PtSU steps, we first define the following distance to the canonical form: 
\begin{equation}
  \label{eq:dist_to_cano}
   \mathcal{C} = \frac{1}{2N} \sum_{i=1}^{N} ||\bar{A}^{[i]}A^{[i]} - I_{\mathcal{V}_{i}} || + || B^{[i]}\bar{B}^{[i]} - I_{\mathcal{V}_{i-1}} ||~,
\end{equation}
where $||\cdot||$ represents the Frobenius norm of the matrix. This quantity is zero if and only if an MPS is in the canonical form. Note that, in Eq.~({\ref{eq:dist_to_cano}}), we assume that the norm of the MPS is 1.
Notice that a similar quantity has been proposed in a recent study 
to examine the regauging of general tensor networks, where the nuclear norm is considered~\cite{tindall2023gauging}.

We investigate how $\mathcal{C}$ changes with the application of PtSU steps. To closely mimic the actual calculation task, 
we prepare the initial MPS as follows: 
We first construct a canonicalized random MPS with the bond dimension $\chi$. Each tensor element in this MPS is initialized 
to a complex number $z = a + ib$ with $a$ and $b$ chosen randomly from $[-1, 1]$,  
and then its canonical form is constructed with the norm properly normalized to 1.
Subsequently, we truncate all the virtual bonds to $\chi/2$, i.e., 
disregarding the $\chi/2$ smallest elements in each $\Lambda^{[i]}$, and renormalize the wavefunction norm to 1.
We then perform PtSU steps on this MPS and calculate $\mathcal{C}$ after each PtSU step. The results are summarized 
in Fig.~\ref{fig:Cs_vs_ptsu}.
Here, we choose $N=20$ and 24 and examine the cases of $\chi = 32$, 128, 256 and 512.

\begin{figure}[!ht]
    \centering
    \includegraphics[width=\linewidth]{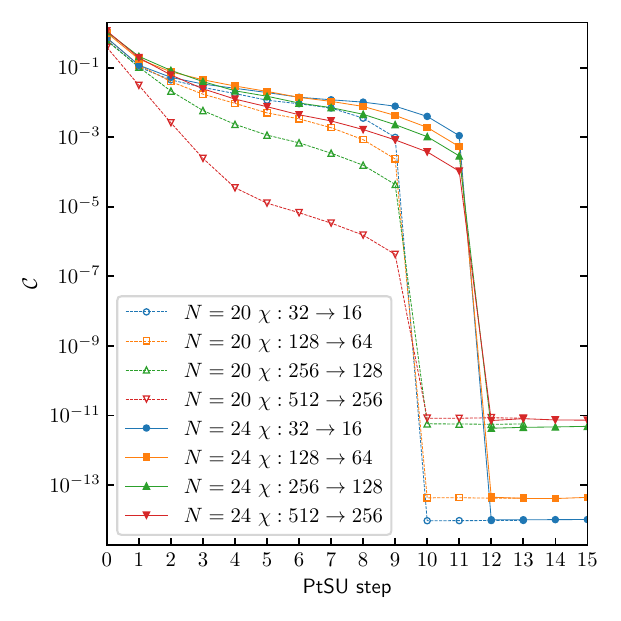}
    \caption{\label{fig:Cs_vs_ptsu}
    Seim-log plot of $\mathcal{C}$
    as a function of PtSU steps for $N=20$ and 24. 
    The initial MPS is prepared by simultaneously truncating all the virtual bonds of a canonicalized random MPS to half of the bond 
    dimension and manually normalizing the norm of the resulting MPS to 1. 
    The results are averaged over 100 initial MPSs with different sets of random parameters.}
\end{figure}

In Fig.~\ref{fig:Cs_vs_ptsu}, we observe that $\mathcal{C}$ rapidly and monotonically decreases with PtSU steps, especially 
at the beginning of several PtSU steps. Additionally, the canonical form is attained more rapidly when the bond dimension is larger. 
This implies that we can drive the MPS closer to the canonical form 
by appending a few PtSU steps after the parallel MPS compression.
Also notice that the canonical form is exactly restored after $N/2$ PtSU steps, justifying Theorem~\ref{thm:4}.

\subsection{\label{sec:ipmc}Improved parallel MPS compression (IPMC)}

Inspired by the new insights from Theorems~\ref{thm:3}~and~\ref{thm:4}, we propose the IPMC method: starting from an MPS 
in the Vidal form, which is usually the resulting MPS of the previous TEBD evolution and has the bond dimension $d_{\rm loc}\chi$ 
($d_{\rm loc}$ is the dimension of the local Hilbert space, i.e., $d_{\rm loc}=2$ for the qubit case),
we perform a parallel MPS compression process (discussed in Sec.~\ref{sec:thm1}) followed by a norm stabilization step 
(discussed in Sec.~\ref{sec:thm2}) to compress the bond dimension back to $\chi$. Then, we perform $g$ PtSU steps to 
partially recover the canonical form. This parallel MPS compression method is fully real-space parallelizable, requiring 
only communication between neighboring sites [also see Fig.~\ref{fig:ptebd_algo}(d)]. 
In the next section, we will apply this method to simulate unitary 
quantum dynamics and examine its efficiency through extensive numerical experiments.

\section{\label{sec:app_uqd}Application to the simulation of unitary quantum dynamics}

The time evolution operator $\hat{\mathcal{U}}(t)$, which describes the unitary quantum dynamic process of evolving a state 
from time $0$ to $t$, $|\Psi(t)\rangle = \hat{\mathcal{U}}(t)|\Psi(0)\rangle$, can always be decomposed into a quantum circuit composed of layers of local unitary operators
\begin{equation}
    \label{eq:circ_decomp}
    \hat{\mathcal{U}}(t) = \prod_{d = 1}^{D} \prod_{i = 1}^{m(d)} \hat{U}^{d}_{i},
\end{equation}
where $\hat{U}_{i}^{d}$ is the $i$th local unitary operator in the $d$th layer, $m(d)$ is the number of local unitary operators 
in the $d$th layer, and $D$ is the total number of layers (i.e., depth) of the quantum circuit~\cite{shirakawa2021automatic}.
Notice that there exist other simulation approaches when $\hat{\mathcal{U}}(t)$ is associated with a Hamiltonian \cite{paeckel2019time,secular2020parallel}.

\begin{figure*}[!ht]
    \centering
    \includegraphics[width=\linewidth]{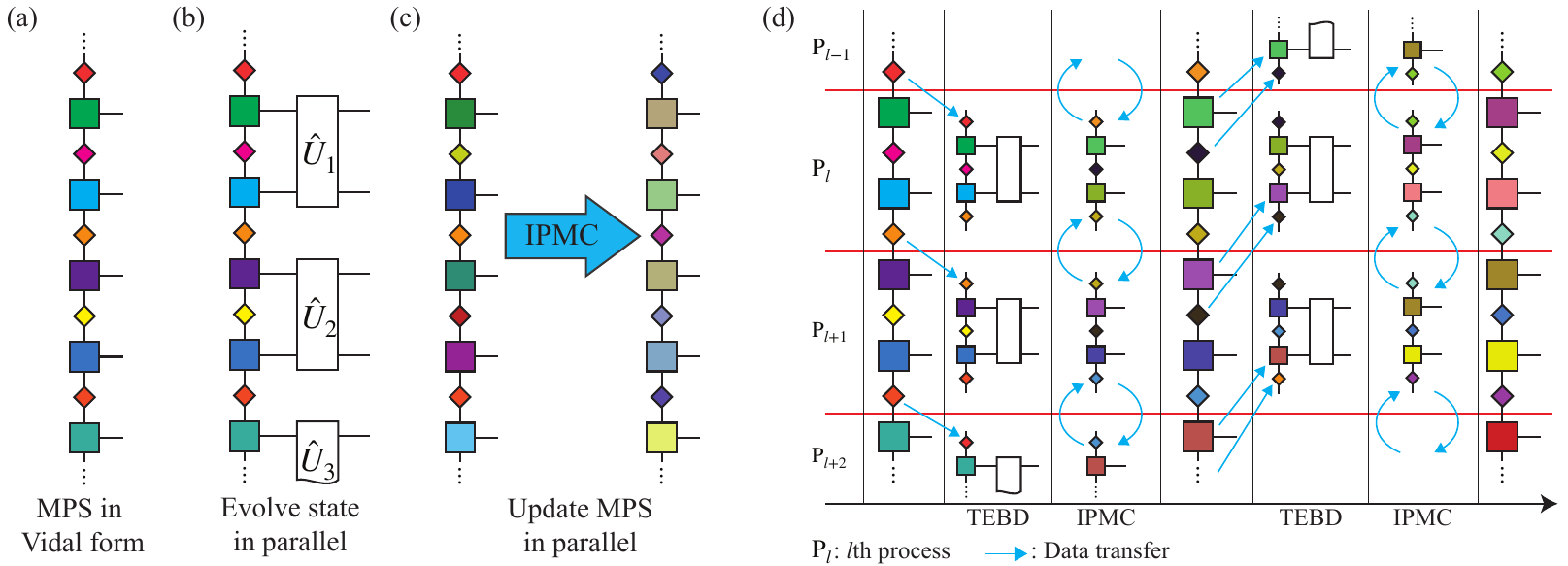}
    \caption{\label{fig:ptebd_algo}
        Illustration of the pTEBD algorithm for a quantum circuit simulation.
        Starting from an MPS in (a), which is allowed to be slightly deviated from the 
        canonical form, the state is evolved using the standard TEBD procedure but with all gates in the same layer treated parallelly 
        in (b), and then truncate the state using the IPMC (see Sec.~\ref{sec:ipmc}) in (c) to update the MPS, which is 
        subsequently used to simulate the quantum circuit with the next layer of gates. 
        A possible realization of the distribution of data to each MPI process ${\rm P}_l$ and the data transfer among MPI processes in the pTEBD algorithm 
        is schematically shown in (d). As multiple PtSU steps may be executed in the IPMC, the data transfer is consequently 
        required several times, as indicated here simply by two arrows pointing forward and backward between neighboring MPI 
        processes. 
        Note also that single-qubit gates can be treated exactly without increasing the bond dimension by any MPS-based simulations, 
        including the pTEBD algorithm. 
        }
\end{figure*}

The above decomposition is, in general, approximate and, moreover, it is not unique. Therefore, for the convenience of 
applying MPS-based algorithms, one usually constructs $\{\hat{U}_{i}^{d}\}$ only involving adjacent qubits in a 1D path 
going through all the qubits in the system [see Fig.~\ref{fig:ptebd_algo}(b) as an example and also 
Appendix~\ref{app:qc_setup}]. Therefore, in the rest of this section, we discuss the simulations of unitary quantum dynamics by 
considering their 1D quantum circuit representations.

\subsection{\label{sec:ptebd}Parallel time-evolving block decimation (pTEBD) algorithm }

A unitary quantum dynamic process represented by a 1D quantum circuit with $N$ qubits can be efficiently simulated 
using a sequential TEBD-type \cite{zhou2020what} or time-dependent DMRG-type \cite{ayral2023density} algorithm. 
The time cost of these two algorithms is always proportional to $N$. 
Here, incorporating the IPMC method introduced above, 
we propose a parallel TEBD (pTEBD) algorithm that can reduce the time complexity to a constant with increasing $N$.

The pTEBD algorithm along with the IPMC is illustrated in Fig.~\ref{fig:ptebd_algo}. 
The initial state $|\Psi(0)\rangle$ can be represented as a canonicalized MPS with bond dimension $\chi_{0}$ [see Fig.~\ref{fig:ptebd_algo}(a)]. To simulate quantum circuits for a circuit-based quantum computer, generally one chooses 
$|\Psi(0)\rangle = |00\cdots 0\rangle$ with $\chi_{0} = 1$. 
In the pTEBD algorithm, to simulate one layer of gates represented by $\prod_{i}^{m(d)} \hat{U}^{d}_{i}$ in Eq.~(\ref{eq:circ_decomp}), 
the standard TEBD procedure is utilized to simulate each gate operation in parallel [see Fig.~\ref{fig:ptebd_algo}(b)]. 
The bond dimension of the resulting MPS increases to $\tilde{\chi}$, which can be larger than the maximum bond dimension 
$\chi$ predetermined in the algorithm. 
Then, the IPMC is performed to truncate the increased bond dimension back to $\chi$ [see Fig.~\ref{fig:ptebd_algo}(c)] 
and it returns to the first step to simulate the next layer of gates in the quantum circuit. 
As indicated in Fig.~\ref{fig:ptebd_algo}(d), the data transfer in the pTEBD algorithm is strictly local.

After completing the simulation of all $D$ layers of gates in the quantum circuit, the resulting MPS approximates the final 
state of the unitary quantum dynamic process $|\Psi(t)\rangle$.  We can then use standard MPS techniques to exactly 
evaluate the expectation values of observables \cite{schollwock2011the} or perfectly sample bit strings 
from this state \cite{ferris2012perfect} to mimic quantum devices after restoring the canonical form of the obtained MPS representing 
$|\Psi(t)\rangle$. 
The flow chart of the pTEBD algorithm is provided in Algorithm~\ref{alg:ptebd}.

\begin{figure*}[!ht]
    \centering
    \includegraphics[width=\linewidth]{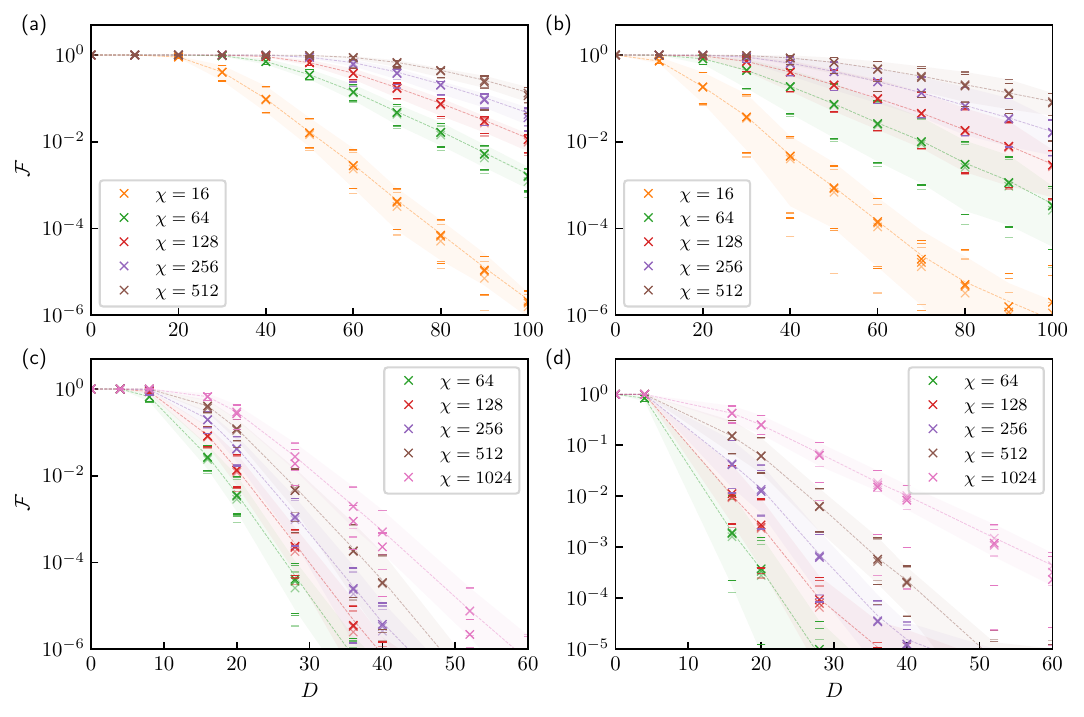}
    \caption{\label{fig:Fs_vs_D}
        Wavefunction fidelity $\mathcal{F}$ as a function of physical circuit depth $D$ 
        (for the definition, see Appendix~\ref{app:qc_setup})
        obtained using the sequential MPS algorithm 
        and the pTEBD algorithm with a fixed MPS bond dimension $\chi$:
        (a) RQC-1D with $N = 25$, (b) PQC-1D with $N = 24$, (c) RQC-2D with $L_{x} = 5$ and $L_{y} = 5$ (i.e., $N = 25$), 
        and (d) PQC-2D with $L_{x} = 4$ and $L_{y} = 6$ (i.e., $N = 24$).
        The dashed lines (crosses) represent the average fidelity $\bar{\mathcal{F}}_{\rm MPS (pTEBD)}$ over 10 simulations 
        of the same quantum circuit but with different sets of random parameters obtained using the sequential MPS 
        (pTEBD) algorithm.
        The shades (bars) indicate the minimal and maximal fidelities among these 10 simulations using the sequential 
        MPS (pTEBD) algorithm.
        The color intensities of crosses and bars represent the pTEBD results with $g = 0,1, \text{and}~2$ (from light to dark).
        }
\end{figure*}

\begin{algorithm}[H]
\caption{pTEBD}\label{alg:ptebd}
\begin{algorithmic}
    \State \textbf{Input:} $|\Psi(d=0)\rangle = \{\Gamma^{[p]}, \Lambda^{[q]}\}$, $\{\hat{U}^{d}_{i}\}$, $\chi$, $g$
    \For{$m = 1, \cdots, D$}
        \State $\{\tilde{\Gamma}^{[p]}, \tilde{\Lambda}^{[q]}\} \gets \textbf{TEBD}(\{\Gamma^{[p]}, \Lambda^{[q]}\}, \{\hat{U}^{m}_{i}\})$ \Comment{In parallel}
        \If{$\tilde{\chi} > \chi$}
            \State $\{\Gamma^{[p]}, \Lambda^{[q]}\} \gets \textbf{IPMC}(\{\tilde{\Gamma}^{[p]}, \tilde{\Lambda}^{[q]}\}, \chi, g)$
        \Else
            \State $\{\Gamma^{[p]}, \Lambda^{[q]}\} \gets \{\tilde{\Gamma}^{[p]}, \tilde{\Lambda}^{[q]}\}$
        \EndIf
    \EndFor
    \State $\{\Gamma^{[p]}, \Lambda^{[q]}\} \gets \textbf{Canonicalize}(\{\Gamma^{[p]}, \Lambda^{[q]}\})$
    \State $|\Psi(d=D)\rangle \gets \{\Gamma^{[p]}, \Lambda^{[q]}\}$
    \State \textbf{Do} measurements or sampling on $|\Psi(d=D)\rangle$
\end{algorithmic}
\end{algorithm}

\subsection{\label{sec:ptebd_2}Accuracy, numerical stability, and performance of pTEBD}

To comprehensively benchmark the accuracy, numerical stability, and performance of the pTEBD algorithm, we perform intensive 
simulations of typical 1D and 2D quantum circuits and compare the results with those obtained by sequential MPS simulations. 
For this purpose, we consider both random quantum circuits (RQCs) on 1D and 2D qubit arrays (RQC-1D and RQC-2D, respectively) 
and parametrized quantum circuits (PQCs), which are widely adopted in the variational quantum 
eigensolver (VQE)~\cite{TILLY2022}, on 1D and 2D qubit arrays (PQC-1D and PQC-2D, respectively). 
These four types of quantum circuits cover representative cases, 
providing concrete demonstrations for the practical feasibility of the pTEBD algorithm. The details of each circuit are described 
in Appendix~\ref{app:qc_setup}. 
Note also that in the sequential algorithm described in Ref.~\cite{zhou2020what},
which we employ as the benchmark algorithm, referred to as the \textit{sequential MPS algorithm},
in the following numerical experiments,
an MPS is always in the canonical form and normalized 
to one, which is generally not the case in the pTEBD algorithm. 
We also provide an additional benchmark test of the pTEBD algorithm applied to simulate the quantum Fourier transformation algorithm in Appendix~\ref{app:QFT_pTEBD}.

\begin{figure*}[!ht]
    \centering
    \includegraphics[width = \linewidth]{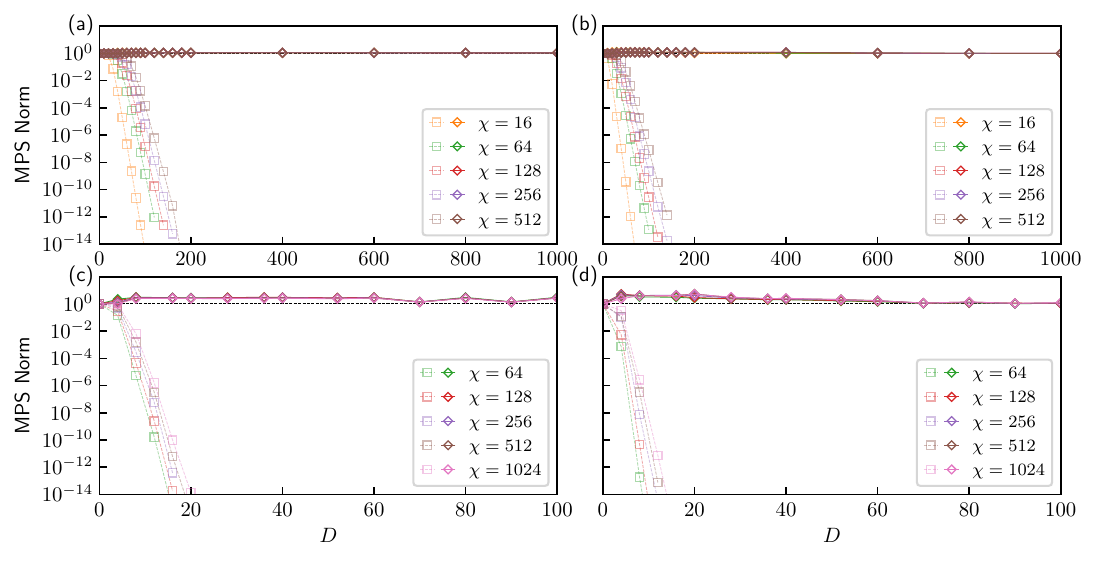}
    \caption{\label{fig:MPSNorms_vs_D}
    Wavefunction norm of the MPS as a function of physical circuit depth $D$ (for the definition, see Appendix~\ref{app:qc_setup}) 
    obtained using the pTEBD algorithm with a fixed MPS 
    bond dimension $\chi$: (a) RQC-1D with $N = 101$, (b) PQC-1D with $N = 100$, (c) RQC-2D with $L_{x} = L_{y} = 12$ 
    (i.e., $N = 144$), and (d) PQC-2D with $L_{x} = L_{y} = 12$ (i.e., $N = 144$). The open diamonds (squares) represent the averaged 
    results of the wavefunction norm $n^*$ ($n$) over 10 simulations of the same quantum circuit but with different sets of random 
    parameters, evaluated  
     with (without) the wavefunction norm stabilization procedure. Black dashed lines indicate the norm equal to 1. 
     Here we set $g=0$, i.e., no PtSU since it does not affect the wavefunction norm. 
     Note that the results obtained using the pTEBD algorithm 
     with the wavefunction norm stabilization procedure
     are almost completely overlapped to each other in this scale. 
     }
\end{figure*}

\subsubsection{\label{sec:ptebd_accu}Accuracy}

First of all, we assess the simulation accuracy of the pTEBD algorithm and demonstrate its ability to achieve precision comparable 
to that of the sequential MPS algorithm. Here, the simulation precision is determined by 
the wavefunction fidelity
\begin{equation}
    \mathcal{F} = |\langle \Psi_{\rm exact} | \Psi(\chi) \rangle|^{2}~,
    \label{eq:F}
\end{equation}
where $|\Psi_{\rm exact}\rangle$ represents the state obtained by the exact simulation and $| \Psi(\chi) \rangle$ is the state obtained 
using an MPS with the fixed bond dimension $\chi$. We perform the simulations for quantum circuits with up to 100 (60) 
physical circuit layers,
i.e., before recompiling a quantum circuit to fit an MPS 1D path,
in the 1D (2D) cases.
The definitions of the \textit{physical} and \textit{compiled} circuit depth can be found in Appendix~\ref{app:qc_setup}.
Each simulation is repeated 10 times with different sets of random parameters to evaluate  
the average fidelity, denoted as $\bar{\mathcal{F}}_{\rm MPS (pTEBD)}$ in the sequential MPS (pTEBD) simulations. 
These results are summarized in Fig.~\ref{fig:Fs_vs_D}.

The simulations for all these four types of quantum circuits show consistent results: for a given bond dimension $\chi$, 
the accuracy of the pTEBD simulation is comparable to the sequential MPS simulation.
In most cases, we find that $\bar{\mathcal{F}}_{\rm pTEBD} \approx \bar{\mathcal{F}}_{\rm MPS}$ 
and $\bar{\mathcal{F}}_{\rm pTEBD}$ improves with an increase in the number $g$ of PtSU steps, 
while we observe that the PtSU steps have the opposite effect in some rare cases.
Moreover, in cases with relatively small $\chi$ (for example, $\chi = 16$ and 64), we notice that 
$\bar{\mathcal{F}}_{\rm pTEBD}$ significantly increases with applying more PtSU steps.
Since $\mathcal{F}$ is highly sensitive to the random parameters used in each quantum gate in the quantum circuit, 
we also shown its minimum and maximum among the 10 simulations with different sets of random parameters. 
These extremes obtained in the sequential MPS simulations and the pTEBD simulations are consistent with each other. 
In some cases [see Fig.~\ref{fig:Fs_vs_D}(b)], the maximum value obtained in the pTEBD simulation can be larger than 
that obtained in the sequential MPS simulation. To present more clearly the fidelities obtained using these two algorithms, 
we also show the results for a quantum circuit with the same single set of random parameters 
in Appendix~\ref{app:Fs_fixed_seed}.

For the unitary quantum dynamics described by a quantum circuit with local gates, the quantum entanglement 
propagates only within a finite range of space after applying each single layer of gates. Therefore, in the pTEBD algorithm, 
$\Lambda^{[i-1]}$ and $\Lambda^{[i+1]}$ are still expected to be a good approximation for the environments of each local 
two sites $i$ and $i+1$, although the canonical form deviates globally in general. 
This is probably the reason for the high precision of the pTEBD algorithm and the improvement with additional PtSU steps 
that suppress the deviation of the canonical form.

\begin{figure*}[!ht]
    \centering
    \includegraphics[width = \linewidth]{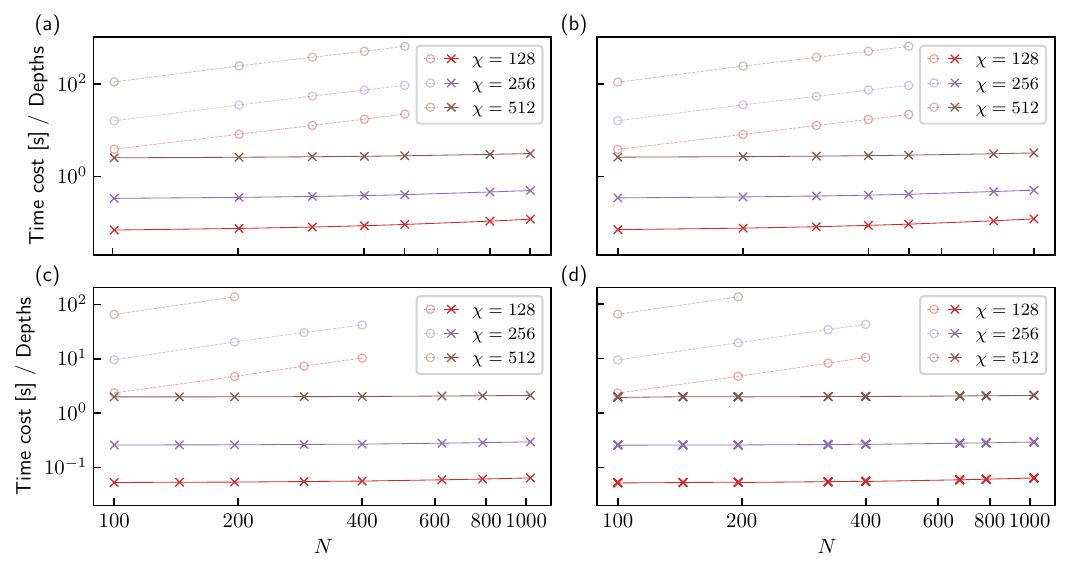}
    \caption{\label{fig:Ts_vs_N}
    Elapsed time per circuit layer
    (averaged over 10 simulations with different sets of random parameters) versus the system size 
    $N$ ($L_x=L_y$ in 2D cases) obtained using the sequential MPS algorithm (open circles) and the pTEBD algorithm (crosses) 
    with a fixed MPS bond dimension $\chi$:  
    (a) RQC-1D with 300 physical circuit layers, (b) PQC-1D with 300 physical circuit layers, 
    (c) RQC-2D with 100 physical circuit layers, and (d) PQC-2D with 100 physical circuit layers. 
    Note that the number of circuit layers in (c) and (d) is the total number of layers of the quantum circuit after recompiling 
    it to fit an MPS 1D path,  
    i.e., the compiled circuit depth (for more details, see Appendix~\ref{app:qc_setup}). 
    We set $g=0$ for the pTEBD simulations since the performance showing the almost perfect weak scaling does not depend on $g$. 
    } 
\end{figure*}

\subsubsection{\label{sec:ptebd_stab} Numerical stability}

After establishing the accuracy of the pTEBD algorithm, let us now demonstrate the crucial role of the wavefunction norm 
stabilization procedure introduced in Sec.~\ref{sec:thm2} in maintaining numerical stability during pTEBD simulations. 
To this end, we systematically evaluate the wavefunction norm in the simulations of 1D (2D) quantum circuits, containing 
up to 1000 (100) physical circuit layers,
with and without the wavefunction norm stabilization procedure. 
The results are summarized in Fig.~\ref{fig:MPSNorms_vs_D}.

In cases without the wavefunction norm stabilization procedure (see open squares in Fig.~\ref{fig:MPSNorms_vs_D}), 
the wavefunction norm monotonically and exponentially decays with increasing $D$. 
While it slightly increases with the bond dimension $\chi$, 
a tiny value, as small as $10^{-14}$, is reached after simulating a quantum circuit with around 200 and 20 physical circuit layers
both in 1D and 2D cases, respectively. This indicates that a renormalization procedure, involving sequential calculations and 
thus breaking the real-space parallelism, must be employed to prevent the exponentially decaying wavefunction norm 
and stabilize the simulation.

In sharp contrast, in cases with the wavefunction norm stabilization procedure (see open diamonds in 
Fig.~\ref{fig:MPSNorms_vs_D}), the wavefunction norm consistently remains close to one, resisting decay with increasing $D$, 
even when $\chi$ is small. This demonstrates that, in the simulation of unitary quantum dynamics, the numerical stability 
of the pTEBD algorithm can be maintained by the IPMC, particularly the wavefunction norm stabilization procedure, 
even though the MPS deviates from its canonical form. 
Hence, no sequential procedure is required throughout the entire pTEBD simulation, implying the achievement of scalability, i.e., 
perfect weak scaling, as will be examined in the next section.

\subsubsection{Performance and week scaling}

Finally, let us assess the performance of the pTEBD algorithm from two perspectives. On the one hand, we examine 
the simulation elapsed time per circuit layer, which is defined as the total simulation elapsed time divided by \textit{compiled} 
circuit depth (for more details, see Appendix~\ref{app:qc_setup}),
versus the system size $N$, evaluating its performance in weak scaling. 
We always distribute tensors of each 4 adjacent sites in an MPS on one computational node. 
When $N$ is not divisible by 4, the remaining tensors are distributed on the last node.
On the other hand, we compare the elapsed time required to achieve a given simulation precision, 
measured by wavefunction fidelity $\mathcal{F}$, in both sequential MPS and pTEBD simulations. 

As illustrated in Fig.~\ref{fig:Ts_vs_N}, for a fixed bond dimension $\chi$, the pTEBD algorithm consistently exhibits nearly 
constant simulation times for quantum circuits with various system sizes, achieving excellent weak scaling performance. 
This is attributed to the absence of any real-space sequential procedures and the lack of a need for global data communication 
in the pTEBD algorithm [also see Fig.~\ref{fig:ptebd_algo}(d)]. 
In contrast, the elapsed time per circuit layer
in the sequential MPS algorithm increases linearly 
with the system size due to its unavoidable sweep procedure.
We also expect good strong scaling since, compared with the previous approach in Ref.~\cite{urbanek2016parallel}, 
the IPMC does not increase the computational complexity.

\begin{figure*}[!ht]
    \centering
    \includegraphics[width=0.98\linewidth]{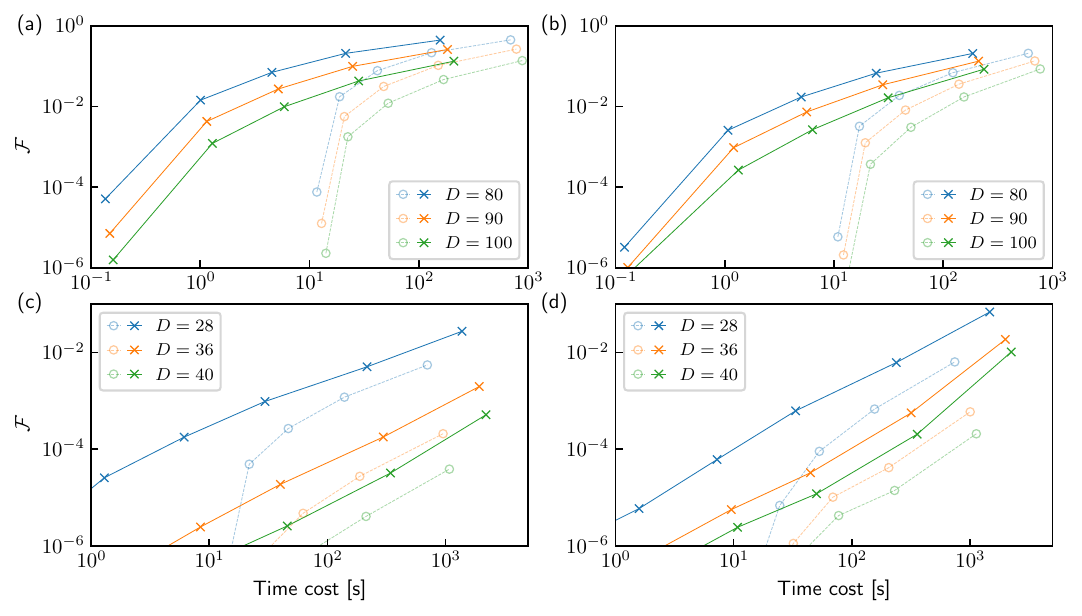}
    \caption{\label{fig:Fs_vs_T} 
    Wavefunction fidelity $\mathcal{F}$ versus elapsed time obtained using the sequential MPS algorithm (open circles) 
    and the pTEBD algorithm (crosses) with a fixed physical  circuit depth $D$ (for the definition, see Appendix~\ref{app:qc_setup}): 
    (a) RQC-1D with $N = 25$, (b) PQC-1D with $N = 24$, 
    (c) RQC-2D with $L_{x} = L_{y} = 5$ (i.e., $N = 25$), and (d) PQC-2D with $L_{x} = 4$ and $L_{y} = 6$ (i.e., $N = 24$). 
    These results are averaged over 10 simulations of the same quantum circuit but with different sets of random parameters. 
    We set $g=0$ for the pTEBD simulations.
    Note that the elapsed time increases simply because of the increase of bond dimensions.
    }
\end{figure*}

It might be more insightful to compare the performance between these two algorithms by studying their elapsed time to achieve 
the same simulation precision, a measure similarly adopted in a recent study of a 2D tensor network 
algorithm~\cite{adachi2020anisotropic}. This measure demonstrates the practical speedup of the pTEBD algorithm 
over its sequential counterpart. For this purpose, we reinterpret the fidelity results shown in Fig.~\ref{fig:Fs_vs_D} 
and replot these results 
as a function of the elapsed time of the simulation in Fig.~\ref{fig:Fs_vs_T}. Remarkably, in all cases, the pTEBD simulation reaches 
the same $\mathcal{F}$ in polynomially shorter time than the corresponding sequential 
MPS simulation (as also expected from Fig.~\ref{fig:Ts_vs_N}), indicating its efficient utilization of the parallel computing environment.

\section{\label{sec:sum}summary and discussion}

In summary, we have proposed an improved parallel MPS compression method that can accurately compress 
the dimensions of all the virtual bonds in a constant time, regardless of the system size. Simultaneously, it stabilizes 
the wavefunction norm of the MPS, converging to a value around 1 that is bounded from both sides. 
Although both the accuracy and the norm stabilization are mathematically 
proved under the assumption of the canonical form, we have demonstrated its feasibility in simulating unitary quantum dynamics, 
where the canonical form slightly deviates globally. Moreover, we have numerically shown that the deviated canonical form 
resulting from the parallel truncation procedure in the parallel MPS compression can be gradually restored 
through the subsequent PtSU steps. 
Additionally, we have provided a proof that 
at most $\frac{N}{2}$ PtSU steps can drive any MPS to its canonical form.

Utilizing the IPMC method, we have proposed a fully real-space parallelizable pTEBD algorithm for efficiently simulating 
unitary quantum dynamics. Furthermore, we systematically benchmarked the pTEBD algorithm by simulating typical 1D 
and 2D quantum circuits. Our results demonstrate that the pTEBD algorithm 
achieves nearly perfect 
weak scaling performance even for very deep quantum circuits with hundreds of circuit layers. 
Additionally, it achieves the same simulation precision in polynomially shorter time compared to 
the sequential MPS algorithm.

While existing MPS-based quantum computing simulation methods have been recognized as highly 
efficient for simulating NISQ devices, their applications to quantum circuits with hundreds of qubits and circuit layers remains 
challenging due to the inherent limitation of linear time complexity with the system size. 
In contrast, the pTEBD algorithm proposed in this study can harness the abundant computing resources offered by 
current supercomputing systems to address these demanding tasks, providing a practical way for exploring quantum computing 
on large NISQ devices. 

As we have demonstrated, the pTEBD algorithm exhibits the same simulation capability 
as other sequential MPS algorithms. However, 
due to the inherent limitation of MPS expressibility, 
simulating quantum circuits with higher-dimensional structures 
and/or rapidly accumulating large entanglement remains 
challenging~\cite{boixo2018characterizing,arute2019quantum,kim2023evidence}.
Therefore, it is of great interest to extend the present parallel approach to recently developed approximate contraction 
algorithms for more complex tensor networks~\cite{contracting2020pan,hyperoptimized2024gray} or to integrate 
the pTEBD algorithm with other levels of parallelization, such as those utilized in quantum chemistry 
simulations~\cite{shang2022large}, which are feasible on current supercomputers.

\begin{acknowledgments}

We acknowledge the valuable discussions with Hong-Hao Tu, Kazuma Nagao, Hidehiko Kohshiro, and Shigetoshi Sota. Numerical 
simulations were performed using the high-performance tensor computing library \textit{GraceQ/tensor}~\cite{gqten} 
on the Supercomputer Fugaku installed at RIKEN Center for Computational Science 
(Project ID No.~hp220217 and No.~hp230293). This work is
supported by Grant-in-Aid for Scientific Research (A) (No.~JP21H04446) and Grant-in-Aid for Scientific Research (C)
(No.~JP22K03479) from MEXT, Japan, as well as JST COI-NEXT
(Grant No.~JPMJPF2221). Additionally, this work receives support from 
the Program for Promoting Research on the Supercomputer 
Fugaku (No.~JPMXP1020230411) from MEXT, Japan, and
from the COE research grant
in computational science from Hyogo Prefecture and Kobe
City through the Foundation for Computational Science.
This work is also supported in part by the New Energy and Industrial Technology Development Organization (NEDO) 
(No.~JPNP20017). 

\end{acknowledgments}

\appendix

\section{\label{app:qc_setup}Details of the quantum circuits adopted in the benchmarking simulations}

To benchmark the pTEBD algorithm, we select random quantum circuits (RQCs) and parametrized 
quantum circuits (PQCs), the latter being commonly used in variational quantum algorithms, as the simulation tasks. 
RQCs have been widely employed in benchmarking NISQ 
devices~\cite{arute2019quantum,wu2021strong} and quantum computing simulations~\cite{liu2021closing}. 
PQCs, being more featured circuits, can reflect the practical performance of devices and 
simulations for more realistic 
computational tasks. Therefore, these benchmarks allow for a thorough evaluation of the pTEBD algorithm's performance 
in the most realistic and representative cases. In this Appendix, we provide explanations for the construction of these 
quantum circuits. More detailed features and applications of these quantum circuits can be found in the corresponding references 
cited below.

\subsection{Physical circuit depth vs. compiled circuit depth}

Physical circuit depth is defined as the total number of physical circuit layers. Typically, for a given quantum circuit featuring repeating 
structural units, as observed in the cases of RQCs and PQCs considered in this study, a physical circuit layer is determined by 
a single type of repeating unit. For instance, in the cases of RQCs, a single physical circuit layer is formed by applying single-qubit 
random gates to each qubit, followed by the application of two-qubit entangling gates with a specific covering pattern. 
The specific definition of the physical circuit layer in each quantum circuit adopted in this study will be provided in the following 
subsections.

To enable a classical simulation of a quantum circuit with an MPS representation using a TEBD-type algorithm, a physical circuit 
layer must be recompiled to one (or several) compiled circuit layer(s), wherein all two-qubit gates exclusively apply to 
neighboring qubits aligned on the MPS 1D path. In this study, for 1D quantum circuits, one physical circuit layer is directly mapped 
to one compiled circuit layer. However, for 2D quantum circuits, a portion of physical circuit layers (details provided below) 
must be mapped to several compiled circuit layers, thereby increasing the effective circuit layers in the simulation. 
The total number of compiled circuit layers is denoted as the compiled circuit depth.
As a single-qubit gate can always be contracted exactly into a nearby two-qubit gate during the recompiling procedure 
in the quantum circuit, we only consider two-qubit gates in defining and counting compiled circuit layers.

\subsection{One-dimensional random quantum circuit (RQC-1D)}

The circuit structure of RQC-1D is illustrated in Fig.~\ref{fig:rqc_pqc_1d}(a). Here, we adopt the same construction as 
in Ref.~\cite{zhou2020what}. The single-qubit random gate at qubit $r$ in the $l$th circuit layer, 
denoted by squares in the figure, is defined as 
\begin{align}
    U_{r}^{(l)} =& \exp[-i \theta_r^{(l)} (\sigma^{x}_r\sin \alpha_r^{(l)} \cos \phi_r^{(l)} \nonumber \\
    & + \sigma^{y}_r \sin \alpha_r^{(l)} \sin \phi_r^{(l)} + \sigma^{z}_r\cos\alpha_r^{(l)} )]~,
\end{align}
where $\sigma^{x}_r$, $\sigma^{y}_r$, and $\sigma^{z}_r$ are Pauli matrices, 
and $\alpha_r^{(l)}$, $\theta_r^{(l)}$, and $\phi_r^{(l)}$ are 
uniformly distributed real random numbers. 
After applying $U_{r}^{(l)}$ to all qubits in the $l$th circuit layer, the quantum entanglement is generated by subsequently 
applying a layer of controlled-Z (CZ) gates, completing one physical layer of the circuit construction. The CZ gates apply to 
$(0, 1), (2, 3),\cdots,(N-3, N-2)$ qubit pairs for odd physical layers and on $(1, 2), (3, 4),\cdots,(N-2, N-1)$ qubit pairs for even 
physical layers. 
The total number $N$ of qubits is restricted to odd, and the total number $D$ of physical circuit layers is restricted to even. Note that these RQCs (including the RQC-2D introduced below) are different from those used 
in Google's quantum supremacy experiment~\cite{arute2019quantum}, in which fSim gates are adopted to generate larger 
entanglement. More detailed discussion of these two kinds of RQCs can be found in Ref.~\cite{zhou2020what,pan2022solving}.

\begin{figure}[!t]
    \centering
    \includegraphics[width=\linewidth]{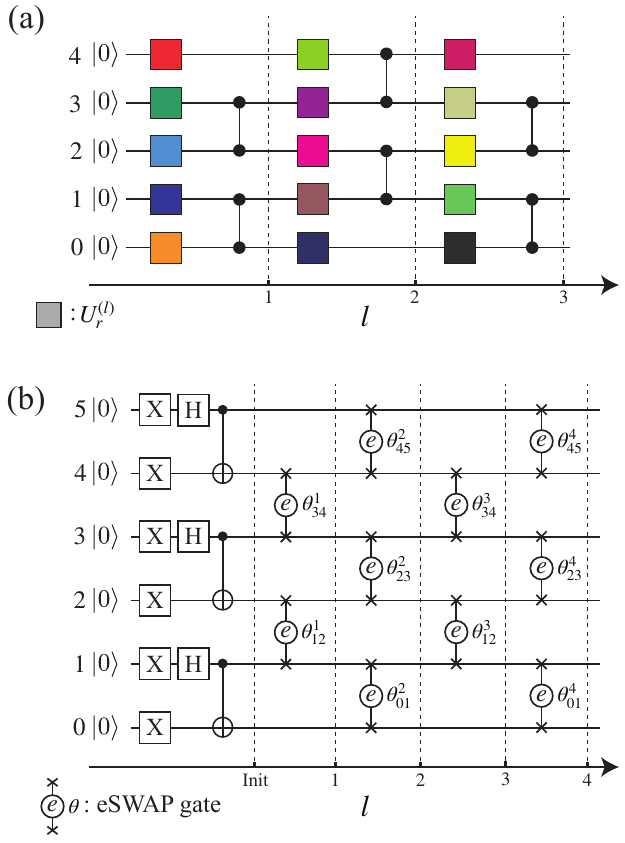}
    \caption{\label{fig:rqc_pqc_1d}
    Schematic figures for (a) RQC-1D and (b) PQC-1D. In (a), the colored squares represent single-qubit random gates. 
    In (b), the first layer of the quantum circuit is dedicated to initializing the state as a product of singlet dimers. 
    Dashed lines indicate each physical circuit layer. 
    }
\end{figure}

\subsection{One-dimensional parametrized quantum circuit (PQC-1D)}

For a more practical setting, we also consider the Hamiltonian variational \textit{Ansatz} (HVA)~\cite{wecker2015progress}, 
which has been widely studied in quantum computing for quantum many-body systems, and apply it to the nearest-neighbor 
spin $S=1/2$ Heisenberg model. Initializing the qubit register to form a product of singlet dimers on $(0, 1), (2, 3),\cdots,(N-2, N-1)$ 
qubit pairs [see Fig.~\ref{fig:rqc_pqc_1d}(b)], a layer of eSWAP gates~\cite{loss1998quantum,divincenzo2000universal,brunner2011two,lloyd2014quantum,lau2016universal,seki2020symmetry,sun2023efficient}, 
$U_{ij}^{(l)}(\theta_{ij}^{(l)}) = \exp (-i \theta_{ij}^{(l)} P_{ij}/2)$ with $P_{ij}$ being the SWAP gate acting at qubits $i$ and $j$, is 
applied to $(1, 2), (3, 4),\cdots,(N-3, N-2)$ [$(0, 1), (2, 3),\cdots,(N-2, N-1)$] qubit pairs for the $l$th physical layer with odd (even) 
$l$. In using the HVA for variational quantum algorithms~\cite{seki2020symmetry,sun2023efficient}, $\{\theta_{ij}^{(l)}\}$ is a set of 
variational parameters to be optimized. Here, we assign uniformly distributed real random numbers to these parameters 
$\{\theta_{ij}^{(l)}\}$ in our benchmark simulations. The resulting circuit structure of PQC-1D is illustrated in 
Fig.~\ref{fig:rqc_pqc_1d}(b). 
The total number $N$ of qubits and the total number $D$ of physical circuit layers are both restricted to even.

\subsection{Two-dimensional random quantum circuit (RQC-2D)}

RQC-2D is the direct extension of RQC-1D on a 2D square lattice with the number of qubits $N = L_x \times L_y$. 
The physical circuit layers are applied in an ABCDABCD~$\cdots$ pattern, as illustrated in Fig.~\ref{fig:qc_2d_abcd}, where each 
rectangle indicates the qubits on which a CZ gate acts. Additionally, single-qubit random gates 
$\{U_r^{(l)}(\theta_{ij}^{(l)})\}$ are applied to all qubits in 
each physical circuit layer, similar to the case of RQC-1D. The number of physical circuit layers is restricted to a multiple of 4. 
However, there are no specific restrictions on $L_{x}$ and $L_{y}$. In cases when $L_x$ ($L_y$) is odd, no two-qubit gates are 
applied on qubits at the right and left (top and bottom)
edges in physical circuit layers C and D (A and B), respectively,  
as shown in Fig.~\ref{fig:qc_2d_abcd} for $L_x=L_y=5$.

\begin{figure}[!ht]
    \centering
    \includegraphics[width=0.95\linewidth]{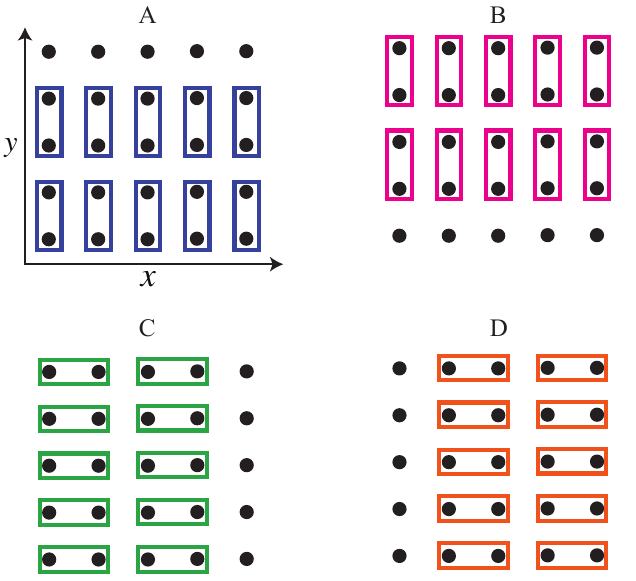}
    \caption{\label{fig:qc_2d_abcd}
    Locations of two-qubit gates, indicated by rectangles, in four distinct physical circuit layers A, B, C, and D 
    for both RQC-2D and PQC-2D. Solid dots represent qubits arranged in a 2D square lattice. 
    The two-qubit gates consist of CZ gates for RQC-2D and eSWAP gates for PQC-2D.      
    Additionally, single-qubit random gates are applied to all qubits in each physical circuit layer for RQC-2D, 
    similar to the case of RQC-1D shown in Fig.~\ref{fig:rqc_pqc_1d}(a). 
    For PQC-2D, the first physical layer A is dedicated to forming a product of singlet dimers, as in the case of PQC-1D 
    shown in Fig.~\ref{fig:rqc_pqc_1d}(b). 
    In this example, we set $L_{x} = 5$ and $L_{y} = 5$. 
    }
\end{figure}

In contrast to 1D quantum circuits, where qubits naturally align in a 1D path suitable for the MPS representation, 
a mapping of qubits for 2D quantum circuits to a 1D path suitable for the MPS representation must be determined. 
Here, we choose the path shown in Fig.~\ref{fig:qc_2d_recompile}(a) to ensure that two-qubit gates in physical circuit layers A and B 
act on neighboring qubits in the sense of the MPS 1D path. 
Under this mapping, the two-qubit gates in physical circuit layers C and D become long-distance gates, 
necessitating a recompilation of these two physical circuit layers for MPS-based simulations to ensure that gates apply only to 
neighboring qubits on the MPS 1D path. 
As an example, the recompiling procedure for the $L_{x} = L_{y} = 4$ case is illustrated in 
Fig.~\ref{fig:qc_2d_recompile}(b). 
Note that the circuit depth increases from 2 (physical circuit depth) to $3(L_{y} - 1) + 2$ (compiled circuit depth) after this recompiling. 
The physical circuit depths and compiled circuit depths for several 2D quantum circuits studied in 
Fig.~\ref{fig:Fs_vs_D}--Fig.~\ref{fig:Fs_vs_T} are provided in Table~\ref{table}. 
Although we focus on the square lattice, an extension to other lattices such as  
the Bristlecone lattice used in Ref.~\cite{arute2019quantum} is straightforward.

\begin{figure}[!ht]
    \centering
    \includegraphics[width=0.95\linewidth]{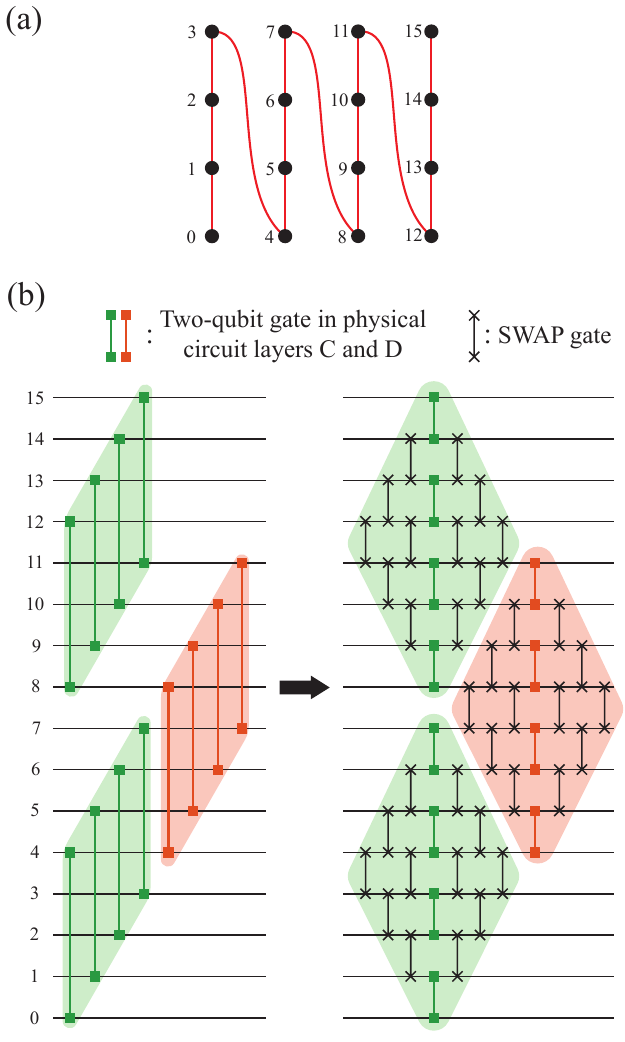}
    \caption{\label{fig:qc_2d_recompile}
    (a) Mapping of qubits for a 2D quantum circuit to a 1D path suitable for its MPS representation. 
    Solid dots represent qubits arranged in a 2D square lattice with $L_{x} = L_{y} = 4$. 
    The numbers beside solid dots indicate the indices of qubits. 
    (b) Arrangement of two-qubit gates in physical circuit layers C and D for a 2D quantum circuit with $L_{x} = L_{y} = 4$ 
    (left-hand side) and the equivalent recompiled quantum circuit containing only two-qubit gates applying to neighboring qubits 
    but with many additional SWAP gates. The number of each qubit register corresponds to that in (a). 
    The shades of the same color indicate the correspondence between gates involved before and after the recompiling 
    procedure for each physical circuit layer.
    Although we consider the $L_{x} = L_{y} = 4$ case here as an example, it is straightforward to extend to other cases. 
    }
\end{figure}

\begin{table}
  \caption{
    Physical and compiled circuit depths for several 2D quantum circuits  
    studied in Fig.~\ref{fig:Fs_vs_D}--Fig.~\ref{fig:Fs_vs_T}. Here, a 2D quantum circuit is arranged in a square 
    lattice with the number of qubits $N=L_x\times L_y$. 
    \label{table} }
  \begin{tabular}{ccccc}
    \hline
    \hline
    $N$  & $L_x$ & $L_y$ & physical circuit depth  & compiled circuit depth \\ \hline
    24 & 4  & 6 & 28 & 133 \\
    24 & 4  & 6 & 60 & 285 \\
    25 & 5  & 5 & 28 & 112 \\
    25 & 5  & 5 & 60 & 192 \\
    144 & 12  & 12 & 40 & 370 \\
    144 & 12  & 12 & 100 & 925 \\
    1024 & 32  & 32 & 100 & 2425 \\
    \hline
    \hline
  \end{tabular}
\end{table}

\begin{figure*}[!ht]
    \centering
    \includegraphics[width=\linewidth]{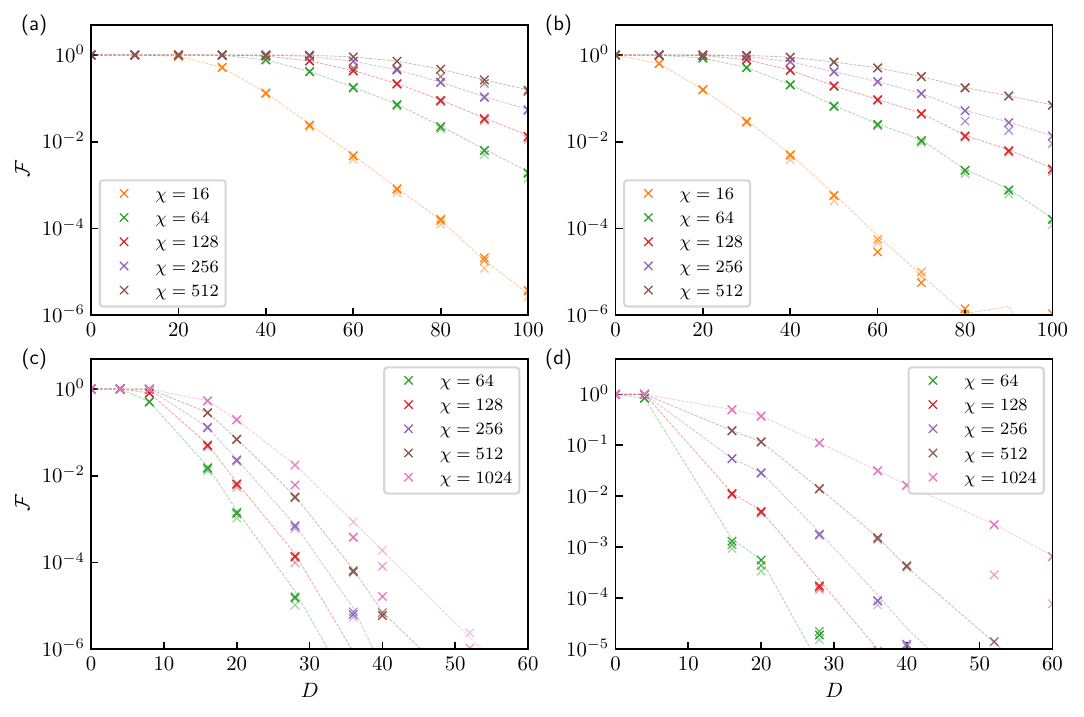}
    \caption{\label{fig:qcs_Fs_vs_D_48}
        Same as Fig.~\ref{fig:Fs_vs_D} but showing the results for a single set of random parameters, selected from 
        the 10 different sets of random parameters used for averaging in Fig.~\ref{fig:Fs_vs_D}. 
        }
\end{figure*}

\subsection{Two-dimensional parametrized quantum circuit (PQC-2D)}

Similar to RQC-2D, PQC-2D is the extension of PQC-1D on a 2D square lattice with the number of qubits $N = L_x \times L_y$. 
A slight difference from RQC-2D is found in the first physical circuit layer A, where we prepare a product of singlet dimers, similar 
to PQC-1D. Beyond this initial physical circuit layer A, the subsequent physical circuit layers incorporate eSWAP gates in a 
BCDABCDA~$\cdots$ pattern with parameters $\{\theta_{ij}^{(l)}\}$ (see Fig.~\ref{fig:qc_2d_abcd}). 
Similar to RQC-2D, we employ an MPS 1D path as depicted in Fig.~\ref{fig:qc_2d_recompile}(a). 
Consequently, two-qubits gates in physical circuit layers C and D must be recompiled to ensure that all gates apply only 
to neighboring qubits, as illustrated in Fig.~\ref{fig:qc_2d_recompile}(b).  
Due to the specific configuration, the value of $L_{y}$ in PQC-2D is restricted to an even number. 
As in the case of RQC-2D, the total number $D$ of physical circuit layers is a multiple of 4.

\section{\label{app:Fs_fixed_seed}Simulation accuracy of the pTEBD algorithm for a single instance of random parameters}

For a fair comparison of the wavefunction fidelity $\mathcal{F}$ obtained by the pTEBD algorithm 
and the sequential MPS algorithm, 
here we compare the results for a specific set of random parameters rather than averaging over 10 simulations 
with different sets as in Fig.~\ref{fig:Fs_vs_D}. 
This particular set is chosen among the 10 different sets used in Fig.~\ref{fig:Fs_vs_D}. 
The corresponding results are denoted as $\mathcal{F}_{\rm pTEBD}$ and  
$\mathcal{F}_{\rm MPS}$ obtained by the pTEBD and sequential MPS algorithms, respectively.

As shown in Fig.~\ref{fig:qcs_Fs_vs_D_48}, 
in most cases, we observe behavior consistent with that found in Fig.~\ref{fig:Fs_vs_D}: 
$\mathcal{F}_{\rm pTEBD} \approx \mathcal{F}_{\rm MPS}$, 
and $\mathcal{F}_{\rm pTEBD}$ increases with the number $g$ of PtSU steps. 
Additionally, we note that $\mathcal{F}_{\rm pTEBD}$ outperforms $\mathcal{F}_{\rm MPS}$ in some cases 
with small bond dimensions and larger circuit depths, while $\mathcal{F}_{\rm pTEBD}$ decreases with more appended PtSU 
steps in certain cases. In practice, we can adjust $g$ to achieve optimal performance in the pTEBD simulation.

\section{\label{app:QFT_pTEBD}Additional benchmark test of pTEBD}

Quantum Fourier transformation (QFT)~\cite{coppersmith1994approximate} is a fundamental algorithm in fault-tolerant quantum 
computing (FTQC) and is typically integrated into other FTQC algorithms, such as Shor's algorithm~\cite{shor1994algorithms} and 
the quantum phase estimation algorithm~\cite{kitaev1995quantum}. In this appendix, we simulate the QFT applied to a random state 
using both the sequential MPS algorithm and the pTEBD algorithm to assess the accuracy and scalability of the pTEBD algorithm 
in the simulation of more irregularly structured quantum circuits compared with those studied in Sec.~\ref{sec:ptebd_2}.

We initialize a random MPS with bond dimension $\chi_{0} = 10$ and apply a QFT circuit to this random state. The QFT circuit is 
generated and further compiled to a qubits layout with linear connectivity using \texttt{Qiskit}~\cite{qiskit}. For the compilation, the 
\texttt{transpile()} function is used with the single-qubit rotation gate $U(\theta, \phi, \lambda)$ and the controlled-NOT gate CX as 
the basis gates, and with the optimization level 3. The physical circuit depths and compiled circuit depths of the QFT circuits for 
different system sizes are summarized in Table~\ref{tab:qft_circ_depth}. We then employ the sequential MPS algorithm and 
the pTEBD algorithm to simulate these compiled circuits.

\begin{table}
  \caption{
    Physical and compiled circuit depths for QFT circuits with various numbers ($N$) of qubits. \label{tab:qft_circ_depth} }
  \begin{tabular}{ccc}
    \hline
    \hline
    $N$ & physical circuit depth  & compiled circuit depth \\ \hline
    16 & 136 & 946 \\
    20 & 210 & 1482 \\
    24 & 300 & 2219 \\
    28 & 406 & 3034 \\
    32 & 528 & 3882 \\
    48 & 1176 & 8430 \\
    64 & 2080 & 14670 \\
    80 & 3240 & 22514 \\
    96 & 4656 & 31493 \\
    \hline
    \hline
  \end{tabular}
\end{table}

First, we examine the simulation accuracy. For this purpose, we fix the number of qubits $N = 16$ with the corresponding compiled 
circuit depth being 946. The results of the simulation accuracy, quantified by the wavefunction fidelity $\mathcal{F}$ 
in Eq.~(\ref{eq:F}), are shown in Fig.~\ref{fig:QFT_fides}. We observe that the wavefunction fidelity in the pTEBD 
simulation always approximates very accurately the wavefunction fidelity obtained from the sequential MPS simulation, 
even in the cases with relatively small bond dimensions. This indicates that the accuracy of the pTEBD algorithm is comparable to 
that of the sequential MPS algorithm, as in the cases for more regularly structured quantum circuits 
before recompilation, studied in Fig.~\ref{fig:Fs_vs_D}.

\begin{figure}[!ht]
  \centering
  \includegraphics[width=\linewidth]{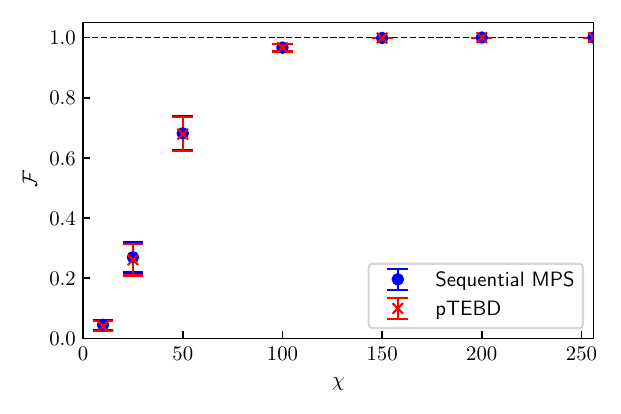}
  \caption{\label{fig:QFT_fides}
  Wavefunction fidelity $\mathcal{F}$ as a function of MPS bond dimension $\chi$ in simulations of the QFT circuit with 
  $N=16$ using the sequential MPS algorithm and the pTEBD algorithm. We set $g=0$ for the pTEBD simulations. 
  The results for each $\chi$ are obtained from the average over 10 different random initial states.
  }
\end{figure}

Furthermore, we compare the performance of these two algorithms in simulating QFT circuits. Figure~\ref{fig:QFT_elapses} shows 
the elapsed time per compiled circuit layer against the number $N$ of qubits. The elapsed time per layer increases linearly with 
$N$ in the sequential MPS simulations, while it remains approximately constant, especially for larger $N$, in the pTEBD simulations. 
The very slowly increase in elapsed time per layer in the pTEBD simulations might be due to the increase 
in inter-node communications, as we always distribute 16 qubits to each node in these simulations. This performance analysis 
demonstrates that the pTEBD algorithm can also achieve good weak scaling even when simulating more general, 
unstructured circuits, 
consistent with the results for RQC-2D and PQC-2D in Figs.~\ref{fig:Ts_vs_N}(c) and \ref{fig:Ts_vs_N}(d), where the quantum circuits 
after recompilation are not perfectly regularly structured. 

\begin{figure}[!ht]
  \centering
  \includegraphics[width=\linewidth]{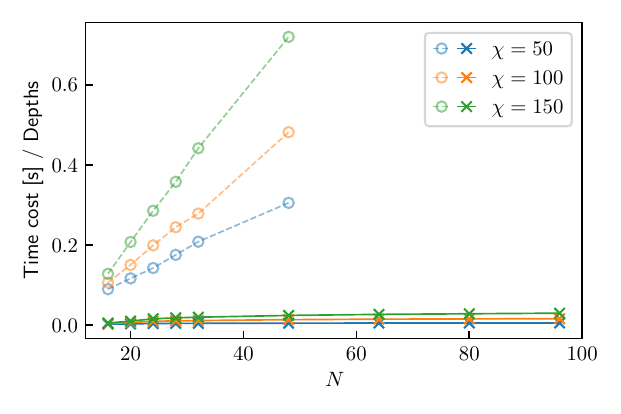}
  \caption{\label{fig:QFT_elapses}
  Elapsed time per circuit layer versus the number $N$ of qubits obtained using the sequential MPS algorithm (open circles) 
  and the pTEBD algorithm (crosses), with a fixed MPS bond dimension $\chi$ in the simulations of QFT circuits. 
  For the pTEBD simulations, we set $g=0$ and distribute 16 qubits to each node. }
\end{figure}

\newpage


\begin{thebibliography}{66}%
\makeatletter
\providecommand \@ifxundefined [1]{%
 \@ifx{#1\undefined}
}%
\providecommand \@ifnum [1]{%
 \ifnum #1\expandafter \@firstoftwo
 \else \expandafter \@secondoftwo
 \fi
}%
\providecommand \@ifx [1]{%
 \ifx #1\expandafter \@firstoftwo
 \else \expandafter \@secondoftwo
 \fi
}%
\providecommand \natexlab [1]{#1}%
\providecommand \enquote  [1]{``#1''}%
\providecommand \bibnamefont  [1]{#1}%
\providecommand \bibfnamefont [1]{#1}%
\providecommand \citenamefont [1]{#1}%
\providecommand \href@noop [0]{\@secondoftwo}%
\providecommand \href [0]{\begingroup \@sanitize@url \@href}%
\providecommand \@href[1]{\@@startlink{#1}\@@href}%
\providecommand \@@href[1]{\endgroup#1\@@endlink}%
\providecommand \@sanitize@url [0]{\catcode `\\12\catcode `\$12\catcode `\&12\catcode `\#12\catcode `\^12\catcode `\_12\catcode `\%12\relax}%
\providecommand \@@startlink[1]{}%
\providecommand \@@endlink[0]{}%
\providecommand \url  [0]{\begingroup\@sanitize@url \@url }%
\providecommand \@url [1]{\endgroup\@href {#1}{\urlprefix }}%
\providecommand \urlprefix  [0]{URL }%
\providecommand \Eprint [0]{\href }%
\providecommand \doibase [0]{http://dx.doi.org/}%
\providecommand \selectlanguage [0]{\@gobble}%
\providecommand \bibinfo  [0]{\@secondoftwo}%
\providecommand \bibfield  [0]{\@secondoftwo}%
\providecommand \translation [1]{[#1]}%
\providecommand \BibitemOpen [0]{}%
\providecommand \bibitemStop [0]{}%
\providecommand \bibitemNoStop [0]{.\EOS\space}%
\providecommand \EOS [0]{\spacefactor3000\relax}%
\providecommand \BibitemShut  [1]{\csname bibitem#1\endcsname}%
\let\auto@bib@innerbib\@empty
\bibitem [{\citenamefont {Gross}\ and\ \citenamefont {Bloch}(2017)}]{gross2017quantum}%
  \BibitemOpen
  \bibfield  {author} {\bibinfo {author} {\bibfnamefont {Christian}\ \bibnamefont {Gross}}\ and\ \bibinfo {author} {\bibfnamefont {Immanuel}\ \bibnamefont {Bloch}},\ }\bibfield  {title} {\enquote {\bibinfo {title} {Quantum simulations with ultracold atoms in optical lattices},}\ }\href {\doibase 10.1126/science.aal3837} {\bibfield  {journal} {\bibinfo  {journal} {Science}\ }\textbf {\bibinfo {volume} {357}},\ \bibinfo {pages} {995--1001} (\bibinfo {year} {2017})},\ \Eprint {http://arxiv.org/abs/https://www.science.org/doi/pdf/10.1126/science.aal3837} {https://www.science.org/doi/pdf/10.1126/science.aal3837} \BibitemShut {NoStop}%
\bibitem [{\citenamefont {Arute}\ \emph {et~al.}(2019)\citenamefont {Arute}, \citenamefont {Arya}, \citenamefont {Babbush}, \citenamefont {Bacon}, \citenamefont {Bardin}, \citenamefont {Barends}, \citenamefont {Biswas}, \citenamefont {Boixo}, \citenamefont {Brandao}, \citenamefont {Buell} \emph {et~al.}}]{arute2019quantum}%
  \BibitemOpen
  \bibfield  {author} {\bibinfo {author} {\bibfnamefont {Frank}\ \bibnamefont {Arute}}, \bibinfo {author} {\bibfnamefont {Kunal}\ \bibnamefont {Arya}}, \bibinfo {author} {\bibfnamefont {Ryan}\ \bibnamefont {Babbush}}, \bibinfo {author} {\bibfnamefont {Dave}\ \bibnamefont {Bacon}}, \bibinfo {author} {\bibfnamefont {Joseph~C}\ \bibnamefont {Bardin}}, \bibinfo {author} {\bibfnamefont {Rami}\ \bibnamefont {Barends}}, \bibinfo {author} {\bibfnamefont {Rupak}\ \bibnamefont {Biswas}}, \bibinfo {author} {\bibfnamefont {Sergio}\ \bibnamefont {Boixo}}, \bibinfo {author} {\bibfnamefont {Fernando~GSL}\ \bibnamefont {Brandao}}, \bibinfo {author} {\bibfnamefont {David~A}\ \bibnamefont {Buell}},  \emph {et~al.},\ }\bibfield  {title} {\enquote {\bibinfo {title} {Quantum supremacy using a programmable superconducting processor},}\ }\href {\doibase 10.1038/s41586-019-1666-5} {\bibfield  {journal} {\bibinfo  {journal} {Nature}\ }\textbf {\bibinfo {volume} {574}},\ \bibinfo {pages} {505--510} (\bibinfo {year} {2019})}\BibitemShut {NoStop}%
\bibitem [{\citenamefont {Sch{\"a}fer}\ \emph {et~al.}(2020)\citenamefont {Sch{\"a}fer}, \citenamefont {Fukuhara}, \citenamefont {Sugawa}, \citenamefont {Takasu},\ and\ \citenamefont {Takahashi}}]{schafer2020tools}%
  \BibitemOpen
  \bibfield  {author} {\bibinfo {author} {\bibfnamefont {Florian}\ \bibnamefont {Sch{\"a}fer}}, \bibinfo {author} {\bibfnamefont {Takeshi}\ \bibnamefont {Fukuhara}}, \bibinfo {author} {\bibfnamefont {Seiji}\ \bibnamefont {Sugawa}}, \bibinfo {author} {\bibfnamefont {Yosuke}\ \bibnamefont {Takasu}}, \ and\ \bibinfo {author} {\bibfnamefont {Yoshiro}\ \bibnamefont {Takahashi}},\ }\bibfield  {title} {\enquote {\bibinfo {title} {Tools for quantum simulation with ultracold atoms in optical lattices},}\ }\href {https://doi.org/10.1038/s42254-020-0195-3} {\bibfield  {journal} {\bibinfo  {journal} {Nature Reviews Physics}\ }\textbf {\bibinfo {volume} {2}},\ \bibinfo {pages} {411--425} (\bibinfo {year} {2020})}\BibitemShut {NoStop}%
\bibitem [{\citenamefont {Pogorelov}\ \emph {et~al.}(2021)\citenamefont {Pogorelov}, \citenamefont {Feldker}, \citenamefont {Marciniak}, \citenamefont {Postler}, \citenamefont {Jacob}, \citenamefont {Krieglsteiner}, \citenamefont {Podlesnic}, \citenamefont {Meth}, \citenamefont {Negnevitsky}, \citenamefont {Stadler}, \citenamefont {H\"ofer}, \citenamefont {W\"achter}, \citenamefont {Lakhmanskiy}, \citenamefont {Blatt}, \citenamefont {Schindler},\ and\ \citenamefont {Monz}}]{pogorelov2021compact}%
  \BibitemOpen
  \bibfield  {author} {\bibinfo {author} {\bibfnamefont {I.}~\bibnamefont {Pogorelov}}, \bibinfo {author} {\bibfnamefont {T.}~\bibnamefont {Feldker}}, \bibinfo {author} {\bibfnamefont {Ch.~D.}\ \bibnamefont {Marciniak}}, \bibinfo {author} {\bibfnamefont {L.}~\bibnamefont {Postler}}, \bibinfo {author} {\bibfnamefont {G.}~\bibnamefont {Jacob}}, \bibinfo {author} {\bibfnamefont {O.}~\bibnamefont {Krieglsteiner}}, \bibinfo {author} {\bibfnamefont {V.}~\bibnamefont {Podlesnic}}, \bibinfo {author} {\bibfnamefont {M.}~\bibnamefont {Meth}}, \bibinfo {author} {\bibfnamefont {V.}~\bibnamefont {Negnevitsky}}, \bibinfo {author} {\bibfnamefont {M.}~\bibnamefont {Stadler}}, \bibinfo {author} {\bibfnamefont {B.}~\bibnamefont {H\"ofer}}, \bibinfo {author} {\bibfnamefont {C.}~\bibnamefont {W\"achter}}, \bibinfo {author} {\bibfnamefont {K.}~\bibnamefont {Lakhmanskiy}}, \bibinfo {author} {\bibfnamefont {R.}~\bibnamefont {Blatt}}, \bibinfo {author} {\bibfnamefont {P.}~\bibnamefont {Schindler}}, \ and\ \bibinfo {author} {\bibfnamefont {T.}~\bibnamefont {Monz}},\ }\bibfield  {title} {\enquote {\bibinfo {title} {Compact ion-trap quantum computing demonstrator},}\ }\href {\doibase 10.1103/PRXQuantum.2.020343} {\bibfield  {journal} {\bibinfo  {journal} {PRX Quantum}\ }\textbf {\bibinfo {volume} {2}},\ \bibinfo {pages} {020343} (\bibinfo {year} {2021})}\BibitemShut {NoStop}%
\bibitem [{\citenamefont {Nielsen}\ and\ \citenamefont {Chuang}(2010)}]{nielsen2010quantum}%
  \BibitemOpen
  \bibfield  {author} {\bibinfo {author} {\bibfnamefont {Michael~A.}\ \bibnamefont {Nielsen}}\ and\ \bibinfo {author} {\bibfnamefont {Isaac~L.}\ \bibnamefont {Chuang}},\ }\href {\doibase 10.1017/CBO9780511976667} {\emph {\bibinfo {title} {Quantum Computation and Quantum Information: 10th Anniversary Edition}}}\ (\bibinfo  {publisher} {Cambridge University Press},\ \bibinfo {year} {2010})\BibitemShut {NoStop}%
\bibitem [{\citenamefont {Daley}\ \emph {et~al.}(2022)\citenamefont {Daley}, \citenamefont {Bloch}, \citenamefont {Kokail}, \citenamefont {Flannigan}, \citenamefont {Pearson}, \citenamefont {Troyer},\ and\ \citenamefont {Zoller}}]{daley2022practical}%
  \BibitemOpen
  \bibfield  {author} {\bibinfo {author} {\bibfnamefont {Andrew~J}\ \bibnamefont {Daley}}, \bibinfo {author} {\bibfnamefont {Immanuel}\ \bibnamefont {Bloch}}, \bibinfo {author} {\bibfnamefont {Christian}\ \bibnamefont {Kokail}}, \bibinfo {author} {\bibfnamefont {Stuart}\ \bibnamefont {Flannigan}}, \bibinfo {author} {\bibfnamefont {Natalie}\ \bibnamefont {Pearson}}, \bibinfo {author} {\bibfnamefont {Matthias}\ \bibnamefont {Troyer}}, \ and\ \bibinfo {author} {\bibfnamefont {Peter}\ \bibnamefont {Zoller}},\ }\bibfield  {title} {\enquote {\bibinfo {title} {Practical quantum advantage in quantum simulation},}\ }\href {\doibase 10.1038/s41586-022-04940-6} {\bibfield  {journal} {\bibinfo  {journal} {Nature}\ }\textbf {\bibinfo {volume} {607}},\ \bibinfo {pages} {667--676} (\bibinfo {year} {2022})}\BibitemShut {NoStop}%
\bibitem [{\citenamefont {Zhou}\ \emph {et~al.}(2020)\citenamefont {Zhou}, \citenamefont {Stoudenmire},\ and\ \citenamefont {Waintal}}]{zhou2020what}%
  \BibitemOpen
  \bibfield  {author} {\bibinfo {author} {\bibfnamefont {Yiqing}\ \bibnamefont {Zhou}}, \bibinfo {author} {\bibfnamefont {E.~Miles}\ \bibnamefont {Stoudenmire}}, \ and\ \bibinfo {author} {\bibfnamefont {Xavier}\ \bibnamefont {Waintal}},\ }\bibfield  {title} {\enquote {\bibinfo {title} {What limits the simulation of quantum computers?}}\ }\href {\doibase 10.1103/PhysRevX.10.041038} {\bibfield  {journal} {\bibinfo  {journal} {Phys. Rev. X}\ }\textbf {\bibinfo {volume} {10}},\ \bibinfo {pages} {041038} (\bibinfo {year} {2020})}\BibitemShut {NoStop}%
\bibitem [{\citenamefont {Preskill}(2018)}]{preskill2018quantum}%
  \BibitemOpen
  \bibfield  {author} {\bibinfo {author} {\bibfnamefont {John}\ \bibnamefont {Preskill}},\ }\bibfield  {title} {\enquote {\bibinfo {title} {Quantum computing in the {NISQ} era and beyond},}\ }\href {https://quantum-journal.org/papers/q-2018-08-06-79} {\bibfield  {journal} {\bibinfo  {journal} {Quantum}\ }\textbf {\bibinfo {volume} {2}},\ \bibinfo {pages} {79} (\bibinfo {year} {2018})}\BibitemShut {NoStop}%
\bibitem [{\citenamefont {Kim}\ \emph {et~al.}(2023)\citenamefont {Kim}, \citenamefont {Eddins}, \citenamefont {Anand}, \citenamefont {Wei}, \citenamefont {Van Den~Berg}, \citenamefont {Rosenblatt}, \citenamefont {Nayfeh}, \citenamefont {Wu}, \citenamefont {Zaletel}, \citenamefont {Temme} \emph {et~al.}}]{kim2023evidence}%
  \BibitemOpen
  \bibfield  {author} {\bibinfo {author} {\bibfnamefont {Youngseok}\ \bibnamefont {Kim}}, \bibinfo {author} {\bibfnamefont {Andrew}\ \bibnamefont {Eddins}}, \bibinfo {author} {\bibfnamefont {Sajant}\ \bibnamefont {Anand}}, \bibinfo {author} {\bibfnamefont {Ken~Xuan}\ \bibnamefont {Wei}}, \bibinfo {author} {\bibfnamefont {Ewout}\ \bibnamefont {Van Den~Berg}}, \bibinfo {author} {\bibfnamefont {Sami}\ \bibnamefont {Rosenblatt}}, \bibinfo {author} {\bibfnamefont {Hasan}\ \bibnamefont {Nayfeh}}, \bibinfo {author} {\bibfnamefont {Yantao}\ \bibnamefont {Wu}}, \bibinfo {author} {\bibfnamefont {Michael}\ \bibnamefont {Zaletel}}, \bibinfo {author} {\bibfnamefont {Kristan}\ \bibnamefont {Temme}},  \emph {et~al.},\ }\bibfield  {title} {\enquote {\bibinfo {title} {Evidence for the utility of quantum computing before fault tolerance},}\ }\href {\doibase 10.1038/s41586-023-06096-3} {\bibfield  {journal} {\bibinfo  {journal} {Nature}\ }\textbf {\bibinfo {volume} {618}},\ \bibinfo {pages} {500--505} (\bibinfo {year} {2023})}\BibitemShut {NoStop}%
\bibitem [{\citenamefont {{Alexeev}}\ \emph {et~al.}(2023)\citenamefont {{Alexeev}}, \citenamefont {{Amsler}}, \citenamefont {{Baity}}, \citenamefont {{Barroca}}, \citenamefont {{Bassini}}, \citenamefont {{Battelle}}, \citenamefont {{Camps}}, \citenamefont {{Casanova}}, \citenamefont {{Choi}}, \citenamefont {{Chong}}, \citenamefont {{Chung}}, \citenamefont {{Codella}}, \citenamefont {{Corcoles}}, \citenamefont {{Cruise}}, \citenamefont {{Di Meglio}}, \citenamefont {{Dubois}}, \citenamefont {{Duran}}, \citenamefont {{Eckl}}, \citenamefont {{Economou}}, \citenamefont {{Eidenbenz}}, \citenamefont {{Elmegreen}}, \citenamefont {{Fare}}, \citenamefont {{Faro}}, \citenamefont {{Sanz Fern{\'a}ndez}}, \citenamefont {{Neumann Barros Ferreira}}, \citenamefont {{Fuji}}, \citenamefont {{Fuller}}, \citenamefont {{Gagliardi}}, \citenamefont {{Galli}}, \citenamefont {{Glick}}, \citenamefont {{Gobbi}}, \citenamefont {{Gokhale}}, \citenamefont {{de la Puente Gonzalez}}, \citenamefont {{Greiner}}, \citenamefont {{Gropp}}, \citenamefont {{Grossi}}, \citenamefont {{Gull}}, \citenamefont {{Healy}}, \citenamefont {{Huang}}, \citenamefont {{Humble}}, \citenamefont {{Ito}}, \citenamefont {{Izmaylov}}, \citenamefont {{Javadi-Abhari}}, \citenamefont {{Jennewein}}, \citenamefont {{Jha}}, \citenamefont {{Jiang}}, \citenamefont {{Jones}}, \citenamefont {{de Jong}}, \citenamefont {{Jurcevic}}, \citenamefont {{Kirby}}, \citenamefont {{Kister}}, \citenamefont {{Kitagawa}}, \citenamefont {{Klassen}}, \citenamefont {{Klymko}}, \citenamefont {{Koh}}, \citenamefont {{Kondo}}, \citenamefont {{Murat Kurkcuoglu}}, \citenamefont {{Kurowski}}, \citenamefont {{Laino}}, \citenamefont {{Landfield}}, \citenamefont {{Leininger}}, \citenamefont {{Leyton-Ortega}}, \citenamefont {{Li}}, \citenamefont {{Lin}}, \citenamefont {{Liu}}, \citenamefont {{Lorente}}, \citenamefont {{Luckow}}, \citenamefont {{Martiel}}, \citenamefont {{Martin-Fernandez}}, \citenamefont {{Martonosi}}, \citenamefont {{Marvinney}}, \citenamefont {{Castaneda Medina}}, \citenamefont {{Merten}}, \citenamefont {{Mezzacapo}}, \citenamefont {{Michielsen}}, \citenamefont {{Mitra}}, \citenamefont {{Mittal}}, \citenamefont {{Moon}}, \citenamefont {{Moore}}, \citenamefont {{Motta}}, \citenamefont {{Na}}, \citenamefont {{Nam}}, \citenamefont {{Narang}}, \citenamefont {{Ohnishi}}, \citenamefont {{Ottaviani}}, \citenamefont {{Otten}}, \citenamefont {{Pakin}}, \citenamefont {{Pascuzzi}}, \citenamefont {{Penault}}, \citenamefont {{Piontek}}, \citenamefont {{Pitera}}, \citenamefont {{Rall}}, \citenamefont {{Subramanian Ravi}}, \citenamefont {{Robertson}}, \citenamefont {{Rossi}}, \citenamefont {{Rydlichowski}}, \citenamefont {{Ryu}}, \citenamefont {{Samsonidze}}, \citenamefont {{Sato}}, \citenamefont {{Saurabh}}, \citenamefont {{Sharma}}, \citenamefont {{Sharma}}, \citenamefont {{Shin}}, \citenamefont {{Slessman}}, \citenamefont {{Steiner}}, \citenamefont {{Sitdikov}}, \citenamefont {{Suh}}, \citenamefont {{Switzer}}, \citenamefont {{Tang}}, \citenamefont {{Thompson}}, \citenamefont {{Todo}}, \citenamefont {{Tran}}, \citenamefont {{Trenev}}, \citenamefont {{Trott}}, \citenamefont {{Tseng}}, \citenamefont {{Tureci}}, \citenamefont {{Garc{\'\i}a Valinas}}, \citenamefont {{Vallecorsa}}, \citenamefont {{Wever}}, \citenamefont {{Wojciechowski}}, \citenamefont {{Wu}}, \citenamefont {{Yoo}}, \citenamefont {{Yoshioka}}, \citenamefont {{Wen-zhe Yu}}, \citenamefont {{Yunoki}}, \citenamefont {{Zhuk}},\ and\ \citenamefont {{Zubarev}}}]{Alexeev2023}%
  \BibitemOpen
  \bibfield  {author} {\bibinfo {author} {\bibfnamefont {Yuri}\ \bibnamefont {{Alexeev}}}, \bibinfo {author} {\bibfnamefont {Maximilian}\ \bibnamefont {{Amsler}}}, \bibinfo {author} {\bibfnamefont {Paul}\ \bibnamefont {{Baity}}}, \bibinfo {author} {\bibfnamefont {Marco~Antonio}\ \bibnamefont {{Barroca}}}, \bibinfo {author} {\bibfnamefont {Sanzio}\ \bibnamefont {{Bassini}}}, \bibinfo {author} {\bibfnamefont {Torey}\ \bibnamefont {{Battelle}}}, \bibinfo {author} {\bibfnamefont {Daan}\ \bibnamefont {{Camps}}}, \bibinfo {author} {\bibfnamefont {David}\ \bibnamefont {{Casanova}}}, \bibinfo {author} {\bibfnamefont {Young~jai}\ \bibnamefont {{Choi}}}, \bibinfo {author} {\bibfnamefont {Frederic~T.}\ \bibnamefont {{Chong}}}, \bibinfo {author} {\bibfnamefont {Charles}\ \bibnamefont {{Chung}}}, \bibinfo {author} {\bibfnamefont {Chris}\ \bibnamefont {{Codella}}}, \bibinfo {author} {\bibfnamefont {Antonio~D.}\ \bibnamefont {{Corcoles}}}, \bibinfo {author} {\bibfnamefont {James}\ \bibnamefont {{Cruise}}}, \bibinfo {author} {\bibfnamefont {Alberto}\ \bibnamefont {{Di Meglio}}}, \bibinfo {author} {\bibfnamefont {Jonathan}\ \bibnamefont {{Dubois}}}, \bibinfo {author} {\bibfnamefont {Ivan}\ \bibnamefont {{Duran}}}, \bibinfo {author} {\bibfnamefont {Thomas}\ \bibnamefont {{Eckl}}}, \bibinfo {author} {\bibfnamefont {Sophia}\ \bibnamefont {{Economou}}}, \bibinfo {author} {\bibfnamefont {Stephan}\ \bibnamefont {{Eidenbenz}}}, \bibinfo {author} {\bibfnamefont {Bruce}\ \bibnamefont {{Elmegreen}}}, \bibinfo {author} {\bibfnamefont {Clyde}\ \bibnamefont {{Fare}}}, \bibinfo {author} {\bibfnamefont {Ismael}\ \bibnamefont {{Faro}}}, \bibinfo {author} {\bibfnamefont {Cristina}\ \bibnamefont {{Sanz Fern{\'a}ndez}}}, \bibinfo {author} {\bibfnamefont {Rodrigo}\ \bibnamefont {{Neumann Barros Ferreira}}}, \bibinfo {author} {\bibfnamefont {Keisuke}\ \bibnamefont {{Fuji}}}, \bibinfo {author} {\bibfnamefont {Bryce}\ \bibnamefont {{Fuller}}}, \bibinfo {author} {\bibfnamefont {Laura}\ \bibnamefont {{Gagliardi}}}, \bibinfo {author} {\bibfnamefont {Giulia}\ \bibnamefont {{Galli}}}, \bibinfo {author} {\bibfnamefont {Jennifer~R.}\ \bibnamefont {{Glick}}}, \bibinfo {author} {\bibfnamefont {Isacco}\ \bibnamefont {{Gobbi}}}, \bibinfo {author} {\bibfnamefont {Pranav}\ \bibnamefont {{Gokhale}}}, \bibinfo {author} {\bibfnamefont {Salvador}\ \bibnamefont {{de la Puente Gonzalez}}}, \bibinfo {author} {\bibfnamefont {Johannes}\ \bibnamefont {{Greiner}}}, \bibinfo {author} {\bibfnamefont {Bill}\ \bibnamefont {{Gropp}}}, \bibinfo {author} {\bibfnamefont {Michele}\ \bibnamefont {{Grossi}}}, \bibinfo {author} {\bibfnamefont {Emmanuel}\ \bibnamefont {{Gull}}}, \bibinfo {author} {\bibfnamefont {Burns}\ \bibnamefont {{Healy}}}, \bibinfo {author} {\bibfnamefont {Benchen}\ \bibnamefont {{Huang}}}, \bibinfo {author} {\bibfnamefont {Travis~S.}\ \bibnamefont {{Humble}}}, \bibinfo {author} {\bibfnamefont {Nobuyasu}\ \bibnamefont {{Ito}}}, \bibinfo {author} {\bibfnamefont {Artur~F.}\ \bibnamefont {{Izmaylov}}}, \bibinfo {author} {\bibfnamefont {Ali}\ \bibnamefont {{Javadi-Abhari}}}, \bibinfo {author} {\bibfnamefont {Douglas}\ \bibnamefont {{Jennewein}}}, \bibinfo {author} {\bibfnamefont {Shantenu}\ \bibnamefont {{Jha}}}, \bibinfo {author} {\bibfnamefont {Liang}\ \bibnamefont {{Jiang}}}, \bibinfo {author} {\bibfnamefont {Barbara}\ \bibnamefont {{Jones}}}, \bibinfo {author} {\bibfnamefont {Wibe~Albert}\ \bibnamefont {{de Jong}}}, \bibinfo {author} {\bibfnamefont {Petar}\ \bibnamefont {{Jurcevic}}}, \bibinfo {author} {\bibfnamefont {William}\ \bibnamefont {{Kirby}}}, \bibinfo {author} {\bibfnamefont {Stefan}\ \bibnamefont {{Kister}}}, \bibinfo {author} {\bibfnamefont {Masahiro}\ \bibnamefont {{Kitagawa}}}, \bibinfo {author} {\bibfnamefont {Joel}\ \bibnamefont {{Klassen}}}, \bibinfo {author} {\bibfnamefont {Katherine}\ \bibnamefont {{Klymko}}}, \bibinfo {author} {\bibfnamefont {Kwangwon}\ \bibnamefont {{Koh}}}, \bibinfo {author} {\bibfnamefont {Masaaki}\ \bibnamefont {{Kondo}}}, \bibinfo {author} {\bibfnamefont {Doga}\ \bibnamefont {{Murat Kurkcuoglu}}}, \bibinfo {author} {\bibfnamefont {Krzysztof}\ \bibnamefont {{Kurowski}}}, \bibinfo {author} {\bibfnamefont {Teodoro}\ \bibnamefont {{Laino}}}, \bibinfo {author} {\bibfnamefont {Ryan}\ \bibnamefont {{Landfield}}}, \bibinfo {author} {\bibfnamefont {Matt}\ \bibnamefont {{Leininger}}}, \bibinfo {author} {\bibfnamefont {Vicente}\ \bibnamefont {{Leyton-Ortega}}}, \bibinfo {author} {\bibfnamefont {Ang}\ \bibnamefont {{Li}}}, \bibinfo {author} {\bibfnamefont {Meifeng}\ \bibnamefont {{Lin}}}, \bibinfo {author} {\bibfnamefont {Junyu}\ \bibnamefont {{Liu}}}, \bibinfo {author} {\bibfnamefont {Nicolas}\ \bibnamefont {{Lorente}}}, \bibinfo {author} {\bibfnamefont {Andre}\ \bibnamefont {{Luckow}}}, \bibinfo {author} {\bibfnamefont {Simon}\ \bibnamefont {{Martiel}}}, \bibinfo {author} {\bibfnamefont {Francisco}\ \bibnamefont {{Martin-Fernandez}}}, \bibinfo {author} {\bibfnamefont {Margaret}\ \bibnamefont {{Martonosi}}}, \bibinfo {author} {\bibfnamefont {Claire}\ \bibnamefont {{Marvinney}}}, \bibinfo {author} {\bibfnamefont {Arcesio}\ \bibnamefont {{Castaneda Medina}}}, \bibinfo {author} {\bibfnamefont {Dirk}\ \bibnamefont {{Merten}}}, \bibinfo {author} {\bibfnamefont {Antonio}\ \bibnamefont {{Mezzacapo}}}, \bibinfo {author} {\bibfnamefont {Kristel}\ \bibnamefont {{Michielsen}}}, \bibinfo {author} {\bibfnamefont {Abhishek}\ \bibnamefont {{Mitra}}}, \bibinfo {author} {\bibfnamefont {Tushar}\ \bibnamefont {{Mittal}}}, \bibinfo {author} {\bibfnamefont {Kyungsun}\ \bibnamefont {{Moon}}}, \bibinfo {author} {\bibfnamefont {Joel}\ \bibnamefont {{Moore}}}, \bibinfo {author} {\bibfnamefont {Mario}\ \bibnamefont {{Motta}}}, \bibinfo {author} {\bibfnamefont {Young-Hye}\ \bibnamefont {{Na}}}, \bibinfo {author} {\bibfnamefont {Yunseong}\ \bibnamefont {{Nam}}}, \bibinfo {author} {\bibfnamefont {Prineha}\ \bibnamefont {{Narang}}}, \bibinfo {author} {\bibfnamefont {Yu-ya}\ \bibnamefont {{Ohnishi}}}, \bibinfo {author} {\bibfnamefont {Daniele}\ \bibnamefont {{Ottaviani}}}, \bibinfo {author} {\bibfnamefont {Matthew}\ \bibnamefont {{Otten}}}, \bibinfo {author} {\bibfnamefont {Scott}\ \bibnamefont {{Pakin}}}, \bibinfo {author} {\bibfnamefont {Vincent~R.}\ \bibnamefont {{Pascuzzi}}}, \bibinfo {author} {\bibfnamefont {Ed}~\bibnamefont {{Penault}}}, \bibinfo {author} {\bibfnamefont {Tomasz}\ \bibnamefont {{Piontek}}}, \bibinfo {author} {\bibfnamefont {Jed}\ \bibnamefont {{Pitera}}}, \bibinfo {author} {\bibfnamefont {Patrick}\ \bibnamefont {{Rall}}}, \bibinfo {author} {\bibfnamefont {Gokul}\ \bibnamefont {{Subramanian Ravi}}}, \bibinfo {author} {\bibfnamefont {Niall}\ \bibnamefont {{Robertson}}}, \bibinfo {author} {\bibfnamefont {Matteo}\ \bibnamefont {{Rossi}}}, \bibinfo {author} {\bibfnamefont {Piotr}\ \bibnamefont {{Rydlichowski}}}, \bibinfo {author} {\bibfnamefont {Hoon}\ \bibnamefont {{Ryu}}}, \bibinfo {author} {\bibfnamefont {Georgy}\ \bibnamefont {{Samsonidze}}}, \bibinfo {author} {\bibfnamefont {Mitsuhisa}\ \bibnamefont {{Sato}}}, \bibinfo {author} {\bibfnamefont {Nishant}\ \bibnamefont {{Saurabh}}}, \bibinfo {author} {\bibfnamefont {Vidushi}\ \bibnamefont {{Sharma}}}, \bibinfo {author} {\bibfnamefont {Kunal}\ \bibnamefont {{Sharma}}}, \bibinfo {author} {\bibfnamefont {Soyoung}\ \bibnamefont {{Shin}}}, \bibinfo {author} {\bibfnamefont {George}\ \bibnamefont {{Slessman}}}, \bibinfo {author} {\bibfnamefont {Mathias}\ \bibnamefont {{Steiner}}}, \bibinfo {author} {\bibfnamefont {Iskandar}\ \bibnamefont {{Sitdikov}}}, \bibinfo {author} {\bibfnamefont {In-Saeng}\ \bibnamefont {{Suh}}}, \bibinfo {author} {\bibfnamefont {Eric}\ \bibnamefont {{Switzer}}}, \bibinfo {author} {\bibfnamefont {Wei}\ \bibnamefont {{Tang}}}, \bibinfo {author} {\bibfnamefont {Joel}\ \bibnamefont {{Thompson}}}, \bibinfo {author} {\bibfnamefont {Synge}\ \bibnamefont {{Todo}}}, \bibinfo {author} {\bibfnamefont {Minh}\ \bibnamefont {{Tran}}}, \bibinfo {author} {\bibfnamefont {Dimitar}\ \bibnamefont {{Trenev}}}, \bibinfo {author} {\bibfnamefont {Christian}\ \bibnamefont {{Trott}}}, \bibinfo {author} {\bibfnamefont {Huan-Hsin}\ \bibnamefont {{Tseng}}}, \bibinfo {author} {\bibfnamefont {Esin}\ \bibnamefont {{Tureci}}}, \bibinfo {author} {\bibfnamefont {David}\ \bibnamefont {{Garc{\'\i}a Valinas}}}, \bibinfo {author} {\bibfnamefont {Sofia}\ \bibnamefont {{Vallecorsa}}}, \bibinfo {author} {\bibfnamefont {Christopher}\ \bibnamefont {{Wever}}}, \bibinfo {author} {\bibfnamefont {Konrad}\ \bibnamefont {{Wojciechowski}}}, \bibinfo {author} {\bibfnamefont {Xiaodi}\ \bibnamefont {{Wu}}}, \bibinfo {author} {\bibfnamefont {Shinjae}\ \bibnamefont {{Yoo}}}, \bibinfo {author} {\bibfnamefont {Nobuyuki}\ \bibnamefont {{Yoshioka}}}, \bibinfo {author} {\bibfnamefont {Victor}\ \bibnamefont {{Wen-zhe Yu}}}, \bibinfo {author} {\bibfnamefont {Seiji}\ \bibnamefont {{Yunoki}}}, \bibinfo {author} {\bibfnamefont {Sergiy}\ \bibnamefont {{Zhuk}}}, \ and\ \bibinfo {author} {\bibfnamefont {Dmitry}\ \bibnamefont {{Zubarev}}},\ }\bibfield  {title} {\enquote {\bibinfo {title} {{Quantum-centric Supercomputing for Materials Science: A Perspective on Challenges and Future Directions}},}\ }\href {\doibase 10.48550/arXiv.2312.09733} {\bibfield  {journal} {\bibinfo  {journal} {arXiv e-prints}\ ,\ \bibinfo {eid} {arXiv:2312.09733}} (\bibinfo {year} {2023})},\ \Eprint {http://arxiv.org/abs/2312.09733} {arXiv:2312.09733 [quant-ph]} \BibitemShut {NoStop}%
\bibitem [{\citenamefont {{Robledo-Moreno}}\ \emph {et~al.}(2024)\citenamefont {{Robledo-Moreno}}, \citenamefont {{Motta}}, \citenamefont {{Haas}}, \citenamefont {{Javadi-Abhari}}, \citenamefont {{Jurcevic}}, \citenamefont {{Kirby}}, \citenamefont {{Martiel}}, \citenamefont {{Sharma}}, \citenamefont {{Sharma}}, \citenamefont {{Shirakawa}}, \citenamefont {{Sitdikov}}, \citenamefont {{Sun}}, \citenamefont {{Sung}}, \citenamefont {{Takita}}, \citenamefont {{Tran}}, \citenamefont {{Yunoki}},\ and\ \citenamefont {{Mezzacapo}}}]{Moreno2024}%
  \BibitemOpen
  \bibfield  {author} {\bibinfo {author} {\bibfnamefont {Javier}\ \bibnamefont {{Robledo-Moreno}}}, \bibinfo {author} {\bibfnamefont {Mario}\ \bibnamefont {{Motta}}}, \bibinfo {author} {\bibfnamefont {Holger}\ \bibnamefont {{Haas}}}, \bibinfo {author} {\bibfnamefont {Ali}\ \bibnamefont {{Javadi-Abhari}}}, \bibinfo {author} {\bibfnamefont {Petar}\ \bibnamefont {{Jurcevic}}}, \bibinfo {author} {\bibfnamefont {William}\ \bibnamefont {{Kirby}}}, \bibinfo {author} {\bibfnamefont {Simon}\ \bibnamefont {{Martiel}}}, \bibinfo {author} {\bibfnamefont {Kunal}\ \bibnamefont {{Sharma}}}, \bibinfo {author} {\bibfnamefont {Sandeep}\ \bibnamefont {{Sharma}}}, \bibinfo {author} {\bibfnamefont {Tomonori}\ \bibnamefont {{Shirakawa}}}, \bibinfo {author} {\bibfnamefont {Iskandar}\ \bibnamefont {{Sitdikov}}}, \bibinfo {author} {\bibfnamefont {Rong-Yang}\ \bibnamefont {{Sun}}}, \bibinfo {author} {\bibfnamefont {Kevin~J.}\ \bibnamefont {{Sung}}}, \bibinfo {author} {\bibfnamefont {Maika}\ \bibnamefont {{Takita}}}, \bibinfo {author} {\bibfnamefont {Minh~C.}\ \bibnamefont {{Tran}}}, \bibinfo {author} {\bibfnamefont {Seiji}\ \bibnamefont {{Yunoki}}}, \ and\ \bibinfo {author} {\bibfnamefont {Antonio}\ \bibnamefont {{Mezzacapo}}},\ }\bibfield  {title} {\enquote {\bibinfo {title} {{Chemistry Beyond Exact Solutions on a Quantum-Centric Supercomputer}},}\ }\href {\doibase 10.48550/arXiv.2405.05068} {\bibfield  {journal} {\bibinfo  {journal} {arXiv e-prints}\ ,\ \bibinfo {eid} {arXiv:2405.05068}} (\bibinfo {year} {2024})},\ \Eprint {http://arxiv.org/abs/2405.05068} {arXiv:2405.05068 [quant-ph]} \BibitemShut {NoStop}%
\bibitem [{\citenamefont {H\"{a}ner}\ and\ \citenamefont {Steiger}(2017)}]{haner2017petabyte}%
  \BibitemOpen
  \bibfield  {author} {\bibinfo {author} {\bibfnamefont {Thomas}\ \bibnamefont {H\"{a}ner}}\ and\ \bibinfo {author} {\bibfnamefont {Damian~S.}\ \bibnamefont {Steiger}},\ }\bibfield  {title} {\enquote {\bibinfo {title} {0.5 petabyte simulation of a 45-qubit quantum circuit},}\ }in\ \href {\doibase 10.1145/3126908.3126947} {\emph {\bibinfo {booktitle} {Proceedings of the International Conference for High Performance Computing, Networking, Storage and Analysis}}},\ \bibinfo {series and number} {SC '17}\ (\bibinfo  {publisher} {Association for Computing Machinery},\ \bibinfo {address} {New York, NY, USA},\ \bibinfo {year} {2017})\BibitemShut {NoStop}%
\bibitem [{\citenamefont {{De Raedt}}\ \emph {et~al.}(2019)\citenamefont {{De Raedt}}, \citenamefont {Jin}, \citenamefont {Willsch}, \citenamefont {Willsch}, \citenamefont {Yoshioka}, \citenamefont {Ito}, \citenamefont {Yuan},\ and\ \citenamefont {Michielsen}}]{deraedt2019massively}%
  \BibitemOpen
  \bibfield  {author} {\bibinfo {author} {\bibfnamefont {Hans}\ \bibnamefont {{De Raedt}}}, \bibinfo {author} {\bibfnamefont {Fengping}\ \bibnamefont {Jin}}, \bibinfo {author} {\bibfnamefont {Dennis}\ \bibnamefont {Willsch}}, \bibinfo {author} {\bibfnamefont {Madita}\ \bibnamefont {Willsch}}, \bibinfo {author} {\bibfnamefont {Naoki}\ \bibnamefont {Yoshioka}}, \bibinfo {author} {\bibfnamefont {Nobuyasu}\ \bibnamefont {Ito}}, \bibinfo {author} {\bibfnamefont {Shengjun}\ \bibnamefont {Yuan}}, \ and\ \bibinfo {author} {\bibfnamefont {Kristel}\ \bibnamefont {Michielsen}},\ }\bibfield  {title} {\enquote {\bibinfo {title} {Massively parallel quantum computer simulator, eleven years later},}\ }\href {\doibase https://doi.org/10.1016/j.cpc.2018.11.005} {\bibfield  {journal} {\bibinfo  {journal} {Computer Physics Communications}\ }\textbf {\bibinfo {volume} {237}},\ \bibinfo {pages} {47--61} (\bibinfo {year} {2019})}\BibitemShut {NoStop}%
\bibitem [{\citenamefont {Guo}\ \emph {et~al.}(2019)\citenamefont {Guo}, \citenamefont {Liu}, \citenamefont {Xiong}, \citenamefont {Xue}, \citenamefont {Fu}, \citenamefont {Huang}, \citenamefont {Qiang}, \citenamefont {Xu}, \citenamefont {Liu}, \citenamefont {Zheng}, \citenamefont {Huang}, \citenamefont {Deng}, \citenamefont {Poletti}, \citenamefont {Bao},\ and\ \citenamefont {Wu}}]{guo2019general}%
  \BibitemOpen
  \bibfield  {author} {\bibinfo {author} {\bibfnamefont {Chu}\ \bibnamefont {Guo}}, \bibinfo {author} {\bibfnamefont {Yong}\ \bibnamefont {Liu}}, \bibinfo {author} {\bibfnamefont {Min}\ \bibnamefont {Xiong}}, \bibinfo {author} {\bibfnamefont {Shichuan}\ \bibnamefont {Xue}}, \bibinfo {author} {\bibfnamefont {Xiang}\ \bibnamefont {Fu}}, \bibinfo {author} {\bibfnamefont {Anqi}\ \bibnamefont {Huang}}, \bibinfo {author} {\bibfnamefont {Xiaogang}\ \bibnamefont {Qiang}}, \bibinfo {author} {\bibfnamefont {Ping}\ \bibnamefont {Xu}}, \bibinfo {author} {\bibfnamefont {Junhua}\ \bibnamefont {Liu}}, \bibinfo {author} {\bibfnamefont {Shenggen}\ \bibnamefont {Zheng}}, \bibinfo {author} {\bibfnamefont {He-Liang}\ \bibnamefont {Huang}}, \bibinfo {author} {\bibfnamefont {Mingtang}\ \bibnamefont {Deng}}, \bibinfo {author} {\bibfnamefont {Dario}\ \bibnamefont {Poletti}}, \bibinfo {author} {\bibfnamefont {Wan-Su}\ \bibnamefont {Bao}}, \ and\ \bibinfo {author} {\bibfnamefont {Junjie}\ \bibnamefont {Wu}},\ }\bibfield  {title} {\enquote {\bibinfo {title} {General-purpose quantum circuit simulator with projected entangled-pair states and the quantum supremacy frontier},}\ }\href {\doibase 10.1103/PhysRevLett.123.190501} {\bibfield  {journal} {\bibinfo  {journal} {Phys. Rev. Lett.}\ }\textbf {\bibinfo {volume} {123}},\ \bibinfo {pages} {190501} (\bibinfo {year} {2019})}\BibitemShut {NoStop}%
\bibitem [{\citenamefont {Gray}\ and\ \citenamefont {Kourtis}(2021)}]{gray2021hyper}%
  \BibitemOpen
  \bibfield  {author} {\bibinfo {author} {\bibfnamefont {Johnnie}\ \bibnamefont {Gray}}\ and\ \bibinfo {author} {\bibfnamefont {Stefanos}\ \bibnamefont {Kourtis}},\ }\bibfield  {title} {\enquote {\bibinfo {title} {Hyper-optimized tensor network contraction},}\ }\href {\doibase 10.22331/q-2021-03-15-410} {\bibfield  {journal} {\bibinfo  {journal} {Quantum}\ }\textbf {\bibinfo {volume} {5}},\ \bibinfo {pages} {410} (\bibinfo {year} {2021})}\BibitemShut {NoStop}%
\bibitem [{\citenamefont {Guo}\ \emph {et~al.}(2021)\citenamefont {Guo}, \citenamefont {Zhao},\ and\ \citenamefont {Huang}}]{guo2021verifying}%
  \BibitemOpen
  \bibfield  {author} {\bibinfo {author} {\bibfnamefont {Chu}\ \bibnamefont {Guo}}, \bibinfo {author} {\bibfnamefont {Youwei}\ \bibnamefont {Zhao}}, \ and\ \bibinfo {author} {\bibfnamefont {He-Liang}\ \bibnamefont {Huang}},\ }\bibfield  {title} {\enquote {\bibinfo {title} {Verifying random quantum circuits with arbitrary geometry using tensor network states algorithm},}\ }\href {\doibase 10.1103/PhysRevLett.126.070502} {\bibfield  {journal} {\bibinfo  {journal} {Phys. Rev. Lett.}\ }\textbf {\bibinfo {volume} {126}},\ \bibinfo {pages} {070502} (\bibinfo {year} {2021})}\BibitemShut {NoStop}%
\bibitem [{\citenamefont {Pan}\ and\ \citenamefont {Zhang}(2022)}]{pan2022simulation}%
  \BibitemOpen
  \bibfield  {author} {\bibinfo {author} {\bibfnamefont {Feng}\ \bibnamefont {Pan}}\ and\ \bibinfo {author} {\bibfnamefont {Pan}\ \bibnamefont {Zhang}},\ }\bibfield  {title} {\enquote {\bibinfo {title} {Simulation of quantum circuits using the big-batch tensor network method},}\ }\href {\doibase 10.1103/PhysRevLett.128.030501} {\bibfield  {journal} {\bibinfo  {journal} {Phys. Rev. Lett.}\ }\textbf {\bibinfo {volume} {128}},\ \bibinfo {pages} {030501} (\bibinfo {year} {2022})}\BibitemShut {NoStop}%
\bibitem [{\citenamefont {Liu}\ \emph {et~al.}(2024)\citenamefont {Liu}, \citenamefont {Chen}, \citenamefont {Guo}, \citenamefont {Song}, \citenamefont {Shi}, \citenamefont {Gan}, \citenamefont {Wu}, \citenamefont {Wu}, \citenamefont {Fu}, \citenamefont {Liu}, \citenamefont {Chen}, \citenamefont {Zhao}, \citenamefont {Yang},\ and\ \citenamefont {Gao}}]{liu2024verifying}%
  \BibitemOpen
  \bibfield  {author} {\bibinfo {author} {\bibfnamefont {Yong}\ \bibnamefont {Liu}}, \bibinfo {author} {\bibfnamefont {Yaojian}\ \bibnamefont {Chen}}, \bibinfo {author} {\bibfnamefont {Chu}\ \bibnamefont {Guo}}, \bibinfo {author} {\bibfnamefont {Jiawei}\ \bibnamefont {Song}}, \bibinfo {author} {\bibfnamefont {Xinmin}\ \bibnamefont {Shi}}, \bibinfo {author} {\bibfnamefont {Lin}\ \bibnamefont {Gan}}, \bibinfo {author} {\bibfnamefont {Wenzhao}\ \bibnamefont {Wu}}, \bibinfo {author} {\bibfnamefont {Wei}\ \bibnamefont {Wu}}, \bibinfo {author} {\bibfnamefont {Haohuan}\ \bibnamefont {Fu}}, \bibinfo {author} {\bibfnamefont {Xin}\ \bibnamefont {Liu}}, \bibinfo {author} {\bibfnamefont {Dexun}\ \bibnamefont {Chen}}, \bibinfo {author} {\bibfnamefont {Zhifeng}\ \bibnamefont {Zhao}}, \bibinfo {author} {\bibfnamefont {Guangwen}\ \bibnamefont {Yang}}, \ and\ \bibinfo {author} {\bibfnamefont {Jiangang}\ \bibnamefont {Gao}},\ }\bibfield  {title} {\enquote {\bibinfo {title} {Verifying quantum advantage experiments with multiple amplitude tensor network contraction},}\ }\href {\doibase 10.1103/PhysRevLett.132.030601} {\bibfield  {journal} {\bibinfo  {journal} {Phys. Rev. Lett.}\ }\textbf {\bibinfo {volume} {132}},\ \bibinfo {pages} {030601} (\bibinfo {year} {2024})}\BibitemShut {NoStop}%
\bibitem [{\citenamefont {Liu}\ \emph {et~al.}(2021)\citenamefont {Liu}, \citenamefont {Liu}, \citenamefont {Li}, \citenamefont {Fu}, \citenamefont {Yang}, \citenamefont {Song}, \citenamefont {Zhao}, \citenamefont {Wang}, \citenamefont {Peng}, \citenamefont {Chen}, \citenamefont {Guo}, \citenamefont {Huang}, \citenamefont {Wu},\ and\ \citenamefont {Chen}}]{liu2021closing}%
  \BibitemOpen
  \bibfield  {author} {\bibinfo {author} {\bibfnamefont {Yong~(Alexander)}\ \bibnamefont {Liu}}, \bibinfo {author} {\bibfnamefont {Xin~(Lucy)}\ \bibnamefont {Liu}}, \bibinfo {author} {\bibfnamefont {Fang~(Nancy)}\ \bibnamefont {Li}}, \bibinfo {author} {\bibfnamefont {Haohuan}\ \bibnamefont {Fu}}, \bibinfo {author} {\bibfnamefont {Yuling}\ \bibnamefont {Yang}}, \bibinfo {author} {\bibfnamefont {Jiawei}\ \bibnamefont {Song}}, \bibinfo {author} {\bibfnamefont {Pengpeng}\ \bibnamefont {Zhao}}, \bibinfo {author} {\bibfnamefont {Zhen}\ \bibnamefont {Wang}}, \bibinfo {author} {\bibfnamefont {Dajia}\ \bibnamefont {Peng}}, \bibinfo {author} {\bibfnamefont {Huarong}\ \bibnamefont {Chen}}, \bibinfo {author} {\bibfnamefont {Chu}\ \bibnamefont {Guo}}, \bibinfo {author} {\bibfnamefont {Heliang}\ \bibnamefont {Huang}}, \bibinfo {author} {\bibfnamefont {Wenzhao}\ \bibnamefont {Wu}}, \ and\ \bibinfo {author} {\bibfnamefont {Dexun}\ \bibnamefont {Chen}},\ }\bibfield  {title} {\enquote {\bibinfo {title} {Closing the "quantum supremacy" gap: Achieving real-time simulation of a random quantum circuit using a new sunway supercomputer},}\ }in\ \href {\doibase 10.1145/3458817.3487399} {\emph {\bibinfo {booktitle} {Proceedings of the International Conference for High Performance Computing, Networking, Storage and Analysis}}},\ \bibinfo {series and number} {SC '21}\ (\bibinfo  {publisher} {Association for Computing Machinery},\ \bibinfo {address} {New York, NY, USA},\ \bibinfo {year} {2021})\BibitemShut {NoStop}%
\bibitem [{\citenamefont {Collins}\ and\ \citenamefont {Nay}(2022)}]{collins2022ibm}%
  \BibitemOpen
  \bibfield  {author} {\bibinfo {author} {\bibfnamefont {H}~\bibnamefont {Collins}}\ and\ \bibinfo {author} {\bibfnamefont {C}~\bibnamefont {Nay}},\ }\href {https://newsroom.ibm.com/2022-11-09-IBM-Unveils-400-Qubit-Plus-Quantum-Processor-and-Next-Generation-IBM-Quantum-System-Two} {\enquote {\bibinfo {title} {{IBM} unveils 400 qubit-plus quantum processor and next-generation {IBM} quantum system two [press release]},}\ }\bibinfo {howpublished} {https://newsroom.ibm.com/2022-11-09-IBM-Unveils-400-Qubit-Plus-Quantum-Processor-and-Next-Generation-IBM-Quantum-System-Two} (\bibinfo {year} {2022})\BibitemShut {NoStop}%
\bibitem [{\citenamefont {Or\'us}(2014)}]{orus2014a}%
  \BibitemOpen
  \bibfield  {author} {\bibinfo {author} {\bibfnamefont {Rom\'an}\ \bibnamefont {Or\'us}},\ }\bibfield  {title} {\enquote {\bibinfo {title} {A practical introduction to tensor networks: Matrix product states and projected entangled pair states},}\ }\href {\doibase https://doi.org/10.1016/j.aop.2014.06.013} {\bibfield  {journal} {\bibinfo  {journal} {Annals of Physics}\ }\textbf {\bibinfo {volume} {349}},\ \bibinfo {pages} {117--158} (\bibinfo {year} {2014})}\BibitemShut {NoStop}%
\bibitem [{\citenamefont {Cirac}\ \emph {et~al.}(2021)\citenamefont {Cirac}, \citenamefont {P\'erez-Garc\'{\i}a}, \citenamefont {Schuch},\ and\ \citenamefont {Verstraete}}]{cirac2021matrix}%
  \BibitemOpen
  \bibfield  {author} {\bibinfo {author} {\bibfnamefont {J.~Ignacio}\ \bibnamefont {Cirac}}, \bibinfo {author} {\bibfnamefont {David}\ \bibnamefont {P\'erez-Garc\'{\i}a}}, \bibinfo {author} {\bibfnamefont {Norbert}\ \bibnamefont {Schuch}}, \ and\ \bibinfo {author} {\bibfnamefont {Frank}\ \bibnamefont {Verstraete}},\ }\bibfield  {title} {\enquote {\bibinfo {title} {Matrix product states and projected entangled pair states: Concepts, symmetries, theorems},}\ }\href {\doibase 10.1103/RevModPhys.93.045003} {\bibfield  {journal} {\bibinfo  {journal} {Rev. Mod. Phys.}\ }\textbf {\bibinfo {volume} {93}},\ \bibinfo {pages} {045003} (\bibinfo {year} {2021})}\BibitemShut {NoStop}%
\bibitem [{\citenamefont {Schollw{\"o}ck}(2011)}]{schollwock2011the}%
  \BibitemOpen
  \bibfield  {author} {\bibinfo {author} {\bibfnamefont {Ulrich}\ \bibnamefont {Schollw{\"o}ck}},\ }\bibfield  {title} {\enquote {\bibinfo {title} {The density-matrix renormalization group in the age of matrix product states},}\ }\href {\doibase https://doi.org/10.1016/j.aop.2010.09.012} {\bibfield  {journal} {\bibinfo  {journal} {Annals of Physics}\ }\textbf {\bibinfo {volume} {326}},\ \bibinfo {pages} {96--192} (\bibinfo {year} {2011})},\ \bibinfo {note} {january 2011 Special Issue}\BibitemShut {NoStop}%
\bibitem [{\citenamefont {F.~Verstraete}\ and\ \citenamefont {Cirac}(2008)}]{verstraete2008matrix}%
  \BibitemOpen
  \bibfield  {author} {\bibinfo {author} {\bibfnamefont {V.~Murg}\ \bibnamefont {F.~Verstraete}}\ and\ \bibinfo {author} {\bibfnamefont {J.I.}\ \bibnamefont {Cirac}},\ }\bibfield  {title} {\enquote {\bibinfo {title} {Matrix product states, projected entangled pair states, and variational renormalization group methods for quantum spin systems},}\ }\href {\doibase 10.1080/14789940801912366} {\bibfield  {journal} {\bibinfo  {journal} {Advances in Physics}\ }\textbf {\bibinfo {volume} {57}},\ \bibinfo {pages} {143--224} (\bibinfo {year} {2008})},\ \Eprint {http://arxiv.org/abs/https://doi.org/10.1080/14789940801912366} {https://doi.org/10.1080/14789940801912366} \BibitemShut {NoStop}%
\bibitem [{\citenamefont {Ayral}\ \emph {et~al.}(2023)\citenamefont {Ayral}, \citenamefont {Louvet}, \citenamefont {Zhou}, \citenamefont {Lambert}, \citenamefont {Stoudenmire},\ and\ \citenamefont {Waintal}}]{ayral2023density}%
  \BibitemOpen
  \bibfield  {author} {\bibinfo {author} {\bibfnamefont {Thomas}\ \bibnamefont {Ayral}}, \bibinfo {author} {\bibfnamefont {Thibaud}\ \bibnamefont {Louvet}}, \bibinfo {author} {\bibfnamefont {Yiqing}\ \bibnamefont {Zhou}}, \bibinfo {author} {\bibfnamefont {Cyprien}\ \bibnamefont {Lambert}}, \bibinfo {author} {\bibfnamefont {E.~Miles}\ \bibnamefont {Stoudenmire}}, \ and\ \bibinfo {author} {\bibfnamefont {Xavier}\ \bibnamefont {Waintal}},\ }\bibfield  {title} {\enquote {\bibinfo {title} {Density-matrix renormalization group algorithm for simulating quantum circuits with a finite fidelity},}\ }\href {\doibase 10.1103/PRXQuantum.4.020304} {\bibfield  {journal} {\bibinfo  {journal} {PRX Quantum}\ }\textbf {\bibinfo {volume} {4}},\ \bibinfo {pages} {020304} (\bibinfo {year} {2023})}\BibitemShut {NoStop}%
\bibitem [{\citenamefont {Tindall}\ \emph {et~al.}(2023)\citenamefont {Tindall}, \citenamefont {Fishman}, \citenamefont {Stoudenmire},\ and\ \citenamefont {Sels}}]{tindall2023efficient}%
  \BibitemOpen
  \bibfield  {author} {\bibinfo {author} {\bibfnamefont {Joseph}\ \bibnamefont {Tindall}}, \bibinfo {author} {\bibfnamefont {Matt}\ \bibnamefont {Fishman}}, \bibinfo {author} {\bibfnamefont {Miles}\ \bibnamefont {Stoudenmire}}, \ and\ \bibinfo {author} {\bibfnamefont {Dries}\ \bibnamefont {Sels}},\ }\bibfield  {title} {\enquote {\bibinfo {title} {Efficient tensor network simulation of {IBM}'s kicked ising experiment},}\ }\href {\doibase 10.48550/arXiv.2306.14887} {\bibfield  {journal} {\bibinfo  {journal} {arXiv preprint arXiv:2306.14887}\ } (\bibinfo {year} {2023}),\ 10.48550/arXiv.2306.14887}\BibitemShut {NoStop}%
\bibitem [{\citenamefont {Verstraete}\ and\ \citenamefont {Cirac}(2006)}]{verstraete2006matrix}%
  \BibitemOpen
  \bibfield  {author} {\bibinfo {author} {\bibfnamefont {F.}~\bibnamefont {Verstraete}}\ and\ \bibinfo {author} {\bibfnamefont {J.~I.}\ \bibnamefont {Cirac}},\ }\bibfield  {title} {\enquote {\bibinfo {title} {Matrix product states represent ground states faithfully},}\ }\href {\doibase 10.1103/PhysRevB.73.094423} {\bibfield  {journal} {\bibinfo  {journal} {Phys. Rev. B}\ }\textbf {\bibinfo {volume} {73}},\ \bibinfo {pages} {094423} (\bibinfo {year} {2006})}\BibitemShut {NoStop}%
\bibitem [{\citenamefont {Urbanek}\ and\ \citenamefont {Sold\'an}(2016)}]{urbanek2016parallel}%
  \BibitemOpen
  \bibfield  {author} {\bibinfo {author} {\bibfnamefont {Miroslav}\ \bibnamefont {Urbanek}}\ and\ \bibinfo {author} {\bibfnamefont {Pavel}\ \bibnamefont {Sold\'an}},\ }\bibfield  {title} {\enquote {\bibinfo {title} {Parallel implementation of the time-evolving block decimation algorithm for the bose-hubbard model},}\ }\href {\doibase https://doi.org/10.1016/j.cpc.2015.10.016} {\bibfield  {journal} {\bibinfo  {journal} {Computer Physics Communications}\ }\textbf {\bibinfo {volume} {199}},\ \bibinfo {pages} {170--177} (\bibinfo {year} {2016})}\BibitemShut {NoStop}%
\bibitem [{\citenamefont {Secular}\ \emph {et~al.}(2020)\citenamefont {Secular}, \citenamefont {Gourianov}, \citenamefont {Lubasch}, \citenamefont {Dolgov}, \citenamefont {Clark},\ and\ \citenamefont {Jaksch}}]{secular2020parallel}%
  \BibitemOpen
  \bibfield  {author} {\bibinfo {author} {\bibfnamefont {Paul}\ \bibnamefont {Secular}}, \bibinfo {author} {\bibfnamefont {Nikita}\ \bibnamefont {Gourianov}}, \bibinfo {author} {\bibfnamefont {Michael}\ \bibnamefont {Lubasch}}, \bibinfo {author} {\bibfnamefont {Sergey}\ \bibnamefont {Dolgov}}, \bibinfo {author} {\bibfnamefont {Stephen~R.}\ \bibnamefont {Clark}}, \ and\ \bibinfo {author} {\bibfnamefont {Dieter}\ \bibnamefont {Jaksch}},\ }\bibfield  {title} {\enquote {\bibinfo {title} {Parallel time-dependent variational principle algorithm for matrix product states},}\ }\href {\doibase 10.1103/PhysRevB.101.235123} {\bibfield  {journal} {\bibinfo  {journal} {Phys. Rev. B}\ }\textbf {\bibinfo {volume} {101}},\ \bibinfo {pages} {235123} (\bibinfo {year} {2020})}\BibitemShut {NoStop}%
\bibitem [{\citenamefont {Stoudenmire}\ and\ \citenamefont {White}(2013)}]{stoudenmire2013real}%
  \BibitemOpen
  \bibfield  {author} {\bibinfo {author} {\bibfnamefont {E.~M.}\ \bibnamefont {Stoudenmire}}\ and\ \bibinfo {author} {\bibfnamefont {Steven~R.}\ \bibnamefont {White}},\ }\bibfield  {title} {\enquote {\bibinfo {title} {Real-space parallel density matrix renormalization group},}\ }\href {\doibase 10.1103/PhysRevB.87.155137} {\bibfield  {journal} {\bibinfo  {journal} {Phys. Rev. B}\ }\textbf {\bibinfo {volume} {87}},\ \bibinfo {pages} {155137} (\bibinfo {year} {2013})}\BibitemShut {NoStop}%
\bibitem [{\citenamefont {Vidal}(2003)}]{vidal2003efficient}%
  \BibitemOpen
  \bibfield  {author} {\bibinfo {author} {\bibfnamefont {Guifr\'e}\ \bibnamefont {Vidal}},\ }\bibfield  {title} {\enquote {\bibinfo {title} {Efficient classical simulation of slightly entangled quantum computations},}\ }\href {\doibase 10.1103/PhysRevLett.91.147902} {\bibfield  {journal} {\bibinfo  {journal} {Phys. Rev. Lett.}\ }\textbf {\bibinfo {volume} {91}},\ \bibinfo {pages} {147902} (\bibinfo {year} {2003})}\BibitemShut {NoStop}%
\bibitem [{\citenamefont {Shirakawa}\ \emph {et~al.}(2021)\citenamefont {Shirakawa}, \citenamefont {Ueda},\ and\ \citenamefont {Yunoki}}]{shirakawa2021automatic}%
  \BibitemOpen
  \bibfield  {author} {\bibinfo {author} {\bibfnamefont {Tomonori}\ \bibnamefont {Shirakawa}}, \bibinfo {author} {\bibfnamefont {Hiroshi}\ \bibnamefont {Ueda}}, \ and\ \bibinfo {author} {\bibfnamefont {Seiji}\ \bibnamefont {Yunoki}},\ }\bibfield  {title} {\enquote {\bibinfo {title} {Automatic quantum circuit encoding of a given arbitrary quantum state},}\ }\href@noop {} {\bibfield  {journal} {\bibinfo  {journal} {arXiv preprint arXiv:2112.14524}\ } (\bibinfo {year} {2021})}\BibitemShut {NoStop}%
\bibitem [{\citenamefont {Vidal}(2004)}]{vidal2004efficient}%
  \BibitemOpen
  \bibfield  {author} {\bibinfo {author} {\bibfnamefont {Guifr\'e}\ \bibnamefont {Vidal}},\ }\bibfield  {title} {\enquote {\bibinfo {title} {Efficient simulation of one-dimensional quantum many-body systems},}\ }\href {\doibase 10.1103/PhysRevLett.93.040502} {\bibfield  {journal} {\bibinfo  {journal} {Phys. Rev. Lett.}\ }\textbf {\bibinfo {volume} {93}},\ \bibinfo {pages} {040502} (\bibinfo {year} {2004})}\BibitemShut {NoStop}%
\bibitem [{\citenamefont {Or\'us}\ and\ \citenamefont {Vidal}(2008)}]{orus2008infinite}%
  \BibitemOpen
  \bibfield  {author} {\bibinfo {author} {\bibfnamefont {R.}~\bibnamefont {Or\'us}}\ and\ \bibinfo {author} {\bibfnamefont {G.}~\bibnamefont {Vidal}},\ }\bibfield  {title} {\enquote {\bibinfo {title} {Infinite time-evolving block decimation algorithm beyond unitary evolution},}\ }\href {\doibase 10.1103/PhysRevB.78.155117} {\bibfield  {journal} {\bibinfo  {journal} {Phys. Rev. B}\ }\textbf {\bibinfo {volume} {78}},\ \bibinfo {pages} {155117} (\bibinfo {year} {2008})}\BibitemShut {NoStop}%
\bibitem [{\citenamefont {Jiang}\ \emph {et~al.}(2008)\citenamefont {Jiang}, \citenamefont {Weng},\ and\ \citenamefont {Xiang}}]{jiang2008accurate}%
  \BibitemOpen
  \bibfield  {author} {\bibinfo {author} {\bibfnamefont {H.~C.}\ \bibnamefont {Jiang}}, \bibinfo {author} {\bibfnamefont {Z.~Y.}\ \bibnamefont {Weng}}, \ and\ \bibinfo {author} {\bibfnamefont {T.}~\bibnamefont {Xiang}},\ }\bibfield  {title} {\enquote {\bibinfo {title} {Accurate determination of tensor network state of quantum lattice models in two dimensions},}\ }\href {\doibase 10.1103/PhysRevLett.101.090603} {\bibfield  {journal} {\bibinfo  {journal} {Phys. Rev. Lett.}\ }\textbf {\bibinfo {volume} {101}},\ \bibinfo {pages} {090603} (\bibinfo {year} {2008})}\BibitemShut {NoStop}%
\bibitem [{\citenamefont {Kalis}\ \emph {et~al.}(2012)\citenamefont {Kalis}, \citenamefont {Klagges}, \citenamefont {Or\'us},\ and\ \citenamefont {Schmidt}}]{kalis2012fate}%
  \BibitemOpen
  \bibfield  {author} {\bibinfo {author} {\bibfnamefont {Henning}\ \bibnamefont {Kalis}}, \bibinfo {author} {\bibfnamefont {Daniel}\ \bibnamefont {Klagges}}, \bibinfo {author} {\bibfnamefont {Rom\'an}\ \bibnamefont {Or\'us}}, \ and\ \bibinfo {author} {\bibfnamefont {Kai~Phillip}\ \bibnamefont {Schmidt}},\ }\bibfield  {title} {\enquote {\bibinfo {title} {Fate of the cluster state on the square lattice in a magnetic field},}\ }\href {\doibase 10.1103/PhysRevA.86.022317} {\bibfield  {journal} {\bibinfo  {journal} {Phys. Rev. A}\ }\textbf {\bibinfo {volume} {86}},\ \bibinfo {pages} {022317} (\bibinfo {year} {2012})}\BibitemShut {NoStop}%
\bibitem [{\citenamefont {Ran}\ \emph {et~al.}(2012)\citenamefont {Ran}, \citenamefont {Li}, \citenamefont {Xi}, \citenamefont {Zhang},\ and\ \citenamefont {Su}}]{ran2012optimized}%
  \BibitemOpen
  \bibfield  {author} {\bibinfo {author} {\bibfnamefont {Shi-Ju}\ \bibnamefont {Ran}}, \bibinfo {author} {\bibfnamefont {Wei}\ \bibnamefont {Li}}, \bibinfo {author} {\bibfnamefont {Bin}\ \bibnamefont {Xi}}, \bibinfo {author} {\bibfnamefont {Zhe}\ \bibnamefont {Zhang}}, \ and\ \bibinfo {author} {\bibfnamefont {Gang}\ \bibnamefont {Su}},\ }\bibfield  {title} {\enquote {\bibinfo {title} {Optimized decimation of tensor networks with super-orthogonalization for two-dimensional quantum lattice models},}\ }\href {\doibase 10.1103/PhysRevB.86.134429} {\bibfield  {journal} {\bibinfo  {journal} {Phys. Rev. B}\ }\textbf {\bibinfo {volume} {86}},\ \bibinfo {pages} {134429} (\bibinfo {year} {2012})}\BibitemShut {NoStop}%
\bibitem [{\citenamefont {Alkabetz}\ and\ \citenamefont {Arad}(2021)}]{alkabetz2021tensor}%
  \BibitemOpen
  \bibfield  {author} {\bibinfo {author} {\bibfnamefont {R.}~\bibnamefont {Alkabetz}}\ and\ \bibinfo {author} {\bibfnamefont {I.}~\bibnamefont {Arad}},\ }\bibfield  {title} {\enquote {\bibinfo {title} {Tensor networks contraction and the belief propagation algorithm},}\ }\href {\doibase 10.1103/PhysRevResearch.3.023073} {\bibfield  {journal} {\bibinfo  {journal} {Phys. Rev. Res.}\ }\textbf {\bibinfo {volume} {3}},\ \bibinfo {pages} {023073} (\bibinfo {year} {2021})}\BibitemShut {NoStop}%
\bibitem [{\citenamefont {White}(1992)}]{white1992density}%
  \BibitemOpen
  \bibfield  {author} {\bibinfo {author} {\bibfnamefont {Steven~R.}\ \bibnamefont {White}},\ }\bibfield  {title} {\enquote {\bibinfo {title} {Density matrix formulation for quantum renormalization groups},}\ }\href {\doibase 10.1103/PhysRevLett.69.2863} {\bibfield  {journal} {\bibinfo  {journal} {Phys. Rev. Lett.}\ }\textbf {\bibinfo {volume} {69}},\ \bibinfo {pages} {2863--2866} (\bibinfo {year} {1992})}\BibitemShut {NoStop}%
\bibitem [{\citenamefont {Schollw\"ock}(2005)}]{schollwock2005the}%
  \BibitemOpen
  \bibfield  {author} {\bibinfo {author} {\bibfnamefont {U.}~\bibnamefont {Schollw\"ock}},\ }\bibfield  {title} {\enquote {\bibinfo {title} {The density-matrix renormalization group},}\ }\href {\doibase 10.1103/RevModPhys.77.259} {\bibfield  {journal} {\bibinfo  {journal} {Rev. Mod. Phys.}\ }\textbf {\bibinfo {volume} {77}},\ \bibinfo {pages} {259--315} (\bibinfo {year} {2005})}\BibitemShut {NoStop}%
\bibitem [{\citenamefont {Paeckel}\ \emph {et~al.}(2019)\citenamefont {Paeckel}, \citenamefont {K\"ohler}, \citenamefont {Swoboda}, \citenamefont {Manmana}, \citenamefont {Schollw\"ock},\ and\ \citenamefont {Hubig}}]{paeckel2019time}%
  \BibitemOpen
  \bibfield  {author} {\bibinfo {author} {\bibfnamefont {Sebastian}\ \bibnamefont {Paeckel}}, \bibinfo {author} {\bibfnamefont {Thomas}\ \bibnamefont {K\"ohler}}, \bibinfo {author} {\bibfnamefont {Andreas}\ \bibnamefont {Swoboda}}, \bibinfo {author} {\bibfnamefont {Salvatore~R.}\ \bibnamefont {Manmana}}, \bibinfo {author} {\bibfnamefont {Ulrich}\ \bibnamefont {Schollw\"ock}}, \ and\ \bibinfo {author} {\bibfnamefont {Claudius}\ \bibnamefont {Hubig}},\ }\bibfield  {title} {\enquote {\bibinfo {title} {Time-evolution methods for matrix-product states},}\ }\href {\doibase https://doi.org/10.1016/j.aop.2019.167998} {\bibfield  {journal} {\bibinfo  {journal} {Annals of Physics}\ }\textbf {\bibinfo {volume} {411}},\ \bibinfo {pages} {167998} (\bibinfo {year} {2019})}\BibitemShut {NoStop}%
\bibitem [{\citenamefont {Choi}(1975)}]{choi1975completely}%
  \BibitemOpen
  \bibfield  {author} {\bibinfo {author} {\bibfnamefont {Man-Duen}\ \bibnamefont {Choi}},\ }\bibfield  {title} {\enquote {\bibinfo {title} {Completely positive linear maps on complex matrices},}\ }\href {\doibase https://doi.org/10.1016/0024-3795(75)90075-0} {\bibfield  {journal} {\bibinfo  {journal} {Linear Algebra and its Applications}\ }\textbf {\bibinfo {volume} {10}},\ \bibinfo {pages} {285--290} (\bibinfo {year} {1975})}\BibitemShut {NoStop}%
\bibitem [{\citenamefont {Ajima}\ \emph {et~al.}(2018)\citenamefont {Ajima}, \citenamefont {Kawashima}, \citenamefont {Okamoto}, \citenamefont {Shida}, \citenamefont {Hirai}, \citenamefont {Shimizu}, \citenamefont {Hiramoto}, \citenamefont {Ikeda}, \citenamefont {Yoshikawa}, \citenamefont {Uchida},\ and\ \citenamefont {Inoue}}]{ajima2018tofu}%
  \BibitemOpen
  \bibfield  {author} {\bibinfo {author} {\bibfnamefont {Yuichiro}\ \bibnamefont {Ajima}}, \bibinfo {author} {\bibfnamefont {Takahiro}\ \bibnamefont {Kawashima}}, \bibinfo {author} {\bibfnamefont {Takayuki}\ \bibnamefont {Okamoto}}, \bibinfo {author} {\bibfnamefont {Naoyuki}\ \bibnamefont {Shida}}, \bibinfo {author} {\bibfnamefont {Kouichi}\ \bibnamefont {Hirai}}, \bibinfo {author} {\bibfnamefont {Toshiyuki}\ \bibnamefont {Shimizu}}, \bibinfo {author} {\bibfnamefont {Shinya}\ \bibnamefont {Hiramoto}}, \bibinfo {author} {\bibfnamefont {Yoshiro}\ \bibnamefont {Ikeda}}, \bibinfo {author} {\bibfnamefont {Takahide}\ \bibnamefont {Yoshikawa}}, \bibinfo {author} {\bibfnamefont {Kenji}\ \bibnamefont {Uchida}}, \ and\ \bibinfo {author} {\bibfnamefont {Tomohiro}\ \bibnamefont {Inoue}},\ }\bibfield  {title} {\enquote {\bibinfo {title} {The {Tofu} interconnect {D}},}\ }in\ \href {\doibase 10.1109/CLUSTER.2018.00090} {\emph {\bibinfo {booktitle} {2018 IEEE International Conference on Cluster Computing (CLUSTER)}}}\ (\bibinfo {year} {2018})\ pp.\ \bibinfo {pages} {646--654}\BibitemShut {NoStop}%
\bibitem [{\citenamefont {Tindall}\ and\ \citenamefont {Fishman}(2023)}]{tindall2023gauging}%
  \BibitemOpen
  \bibfield  {author} {\bibinfo {author} {\bibfnamefont {Joseph}\ \bibnamefont {Tindall}}\ and\ \bibinfo {author} {\bibfnamefont {Matt}\ \bibnamefont {Fishman}},\ }\bibfield  {title} {\enquote {\bibinfo {title} {Gauging tensor networks with belief propagation},}\ }\href {\doibase https://doi.org/10.48550/arXiv.2306.17837} {\bibfield  {journal} {\bibinfo  {journal} {arXiv preprint arXiv:2306.17837}\ } (\bibinfo {year} {2023}),\ https://doi.org/10.48550/arXiv.2306.17837}\BibitemShut {NoStop}%
\bibitem [{\citenamefont {Ferris}\ and\ \citenamefont {Vidal}(2012)}]{ferris2012perfect}%
  \BibitemOpen
  \bibfield  {author} {\bibinfo {author} {\bibfnamefont {Andrew~J.}\ \bibnamefont {Ferris}}\ and\ \bibinfo {author} {\bibfnamefont {Guifre}\ \bibnamefont {Vidal}},\ }\bibfield  {title} {\enquote {\bibinfo {title} {Perfect sampling with unitary tensor networks},}\ }\href {\doibase 10.1103/PhysRevB.85.165146} {\bibfield  {journal} {\bibinfo  {journal} {Phys. Rev. B}\ }\textbf {\bibinfo {volume} {85}},\ \bibinfo {pages} {165146} (\bibinfo {year} {2012})}\BibitemShut {NoStop}%
\bibitem [{\citenamefont {Tilly}\ \emph {et~al.}(2022)\citenamefont {Tilly}, \citenamefont {Chen}, \citenamefont {Cao}, \citenamefont {Picozzi}, \citenamefont {Setia}, \citenamefont {Li}, \citenamefont {Grant}, \citenamefont {Wossnig}, \citenamefont {Rungger}, \citenamefont {Booth},\ and\ \citenamefont {Tennyson}}]{TILLY2022}%
  \BibitemOpen
  \bibfield  {author} {\bibinfo {author} {\bibfnamefont {Jules}\ \bibnamefont {Tilly}}, \bibinfo {author} {\bibfnamefont {Hongxiang}\ \bibnamefont {Chen}}, \bibinfo {author} {\bibfnamefont {Shuxiang}\ \bibnamefont {Cao}}, \bibinfo {author} {\bibfnamefont {Dario}\ \bibnamefont {Picozzi}}, \bibinfo {author} {\bibfnamefont {Kanav}\ \bibnamefont {Setia}}, \bibinfo {author} {\bibfnamefont {Ying}\ \bibnamefont {Li}}, \bibinfo {author} {\bibfnamefont {Edward}\ \bibnamefont {Grant}}, \bibinfo {author} {\bibfnamefont {Leonard}\ \bibnamefont {Wossnig}}, \bibinfo {author} {\bibfnamefont {Ivan}\ \bibnamefont {Rungger}}, \bibinfo {author} {\bibfnamefont {George~H.}\ \bibnamefont {Booth}}, \ and\ \bibinfo {author} {\bibfnamefont {Jonathan}\ \bibnamefont {Tennyson}},\ }\bibfield  {title} {\enquote {\bibinfo {title} {The variational quantum eigensolver: A review of methods and best practices},}\ }\href {\doibase https://doi.org/10.1016/j.physrep.2022.08.003} {\bibfield  {journal} {\bibinfo  {journal} {Physics Reports}\ }\textbf {\bibinfo {volume} {986}},\ \bibinfo {pages} {1--128} (\bibinfo {year} {2022})},\ \bibinfo {note} {the Variational Quantum Eigensolver: a review of methods and best practices}\BibitemShut {NoStop}%
\bibitem [{\citenamefont {Adachi}\ \emph {et~al.}(2020)\citenamefont {Adachi}, \citenamefont {Okubo},\ and\ \citenamefont {Todo}}]{adachi2020anisotropic}%
  \BibitemOpen
  \bibfield  {author} {\bibinfo {author} {\bibfnamefont {Daiki}\ \bibnamefont {Adachi}}, \bibinfo {author} {\bibfnamefont {Tsuyoshi}\ \bibnamefont {Okubo}}, \ and\ \bibinfo {author} {\bibfnamefont {Synge}\ \bibnamefont {Todo}},\ }\bibfield  {title} {\enquote {\bibinfo {title} {Anisotropic tensor renormalization group},}\ }\href {\doibase 10.1103/PhysRevB.102.054432} {\bibfield  {journal} {\bibinfo  {journal} {Phys. Rev. B}\ }\textbf {\bibinfo {volume} {102}},\ \bibinfo {pages} {054432} (\bibinfo {year} {2020})}\BibitemShut {NoStop}%
\bibitem [{\citenamefont {Boixo}\ \emph {et~al.}(2018)\citenamefont {Boixo}, \citenamefont {Isakov}, \citenamefont {Smelyanskiy}, \citenamefont {Babbush}, \citenamefont {Ding}, \citenamefont {Jiang}, \citenamefont {Bremner}, \citenamefont {Martinis},\ and\ \citenamefont {Neven}}]{boixo2018characterizing}%
  \BibitemOpen
  \bibfield  {author} {\bibinfo {author} {\bibfnamefont {Sergio}\ \bibnamefont {Boixo}}, \bibinfo {author} {\bibfnamefont {Sergei~V}\ \bibnamefont {Isakov}}, \bibinfo {author} {\bibfnamefont {Vadim~N}\ \bibnamefont {Smelyanskiy}}, \bibinfo {author} {\bibfnamefont {Ryan}\ \bibnamefont {Babbush}}, \bibinfo {author} {\bibfnamefont {Nan}\ \bibnamefont {Ding}}, \bibinfo {author} {\bibfnamefont {Zhang}\ \bibnamefont {Jiang}}, \bibinfo {author} {\bibfnamefont {Michael~J}\ \bibnamefont {Bremner}}, \bibinfo {author} {\bibfnamefont {John~M}\ \bibnamefont {Martinis}}, \ and\ \bibinfo {author} {\bibfnamefont {Hartmut}\ \bibnamefont {Neven}},\ }\bibfield  {title} {\enquote {\bibinfo {title} {Characterizing quantum supremacy in near-term devices},}\ }\href {https://www.nature.com/articles/s41567-018-0124-x} {\bibfield  {journal} {\bibinfo  {journal} {Nature Physics}\ }\textbf {\bibinfo {volume} {14}},\ \bibinfo {pages} {595--600} (\bibinfo {year} {2018})}\BibitemShut {NoStop}%
\bibitem [{\citenamefont {Pan}\ \emph {et~al.}(2020)\citenamefont {Pan}, \citenamefont {Zhou}, \citenamefont {Li},\ and\ \citenamefont {Zhang}}]{contracting2020pan}%
  \BibitemOpen
  \bibfield  {author} {\bibinfo {author} {\bibfnamefont {Feng}\ \bibnamefont {Pan}}, \bibinfo {author} {\bibfnamefont {Pengfei}\ \bibnamefont {Zhou}}, \bibinfo {author} {\bibfnamefont {Sujie}\ \bibnamefont {Li}}, \ and\ \bibinfo {author} {\bibfnamefont {Pan}\ \bibnamefont {Zhang}},\ }\bibfield  {title} {\enquote {\bibinfo {title} {Contracting arbitrary tensor networks: General approximate algorithm and applications in graphical models and quantum circuit simulations},}\ }\href {\doibase 10.1103/PhysRevLett.125.060503} {\bibfield  {journal} {\bibinfo  {journal} {Phys. Rev. Lett.}\ }\textbf {\bibinfo {volume} {125}},\ \bibinfo {pages} {060503} (\bibinfo {year} {2020})}\BibitemShut {NoStop}%
\bibitem [{\citenamefont {Gray}\ and\ \citenamefont {Chan}(2024)}]{hyperoptimized2024gray}%
  \BibitemOpen
  \bibfield  {author} {\bibinfo {author} {\bibfnamefont {Johnnie}\ \bibnamefont {Gray}}\ and\ \bibinfo {author} {\bibfnamefont {Garnet Kin-Lic}\ \bibnamefont {Chan}},\ }\bibfield  {title} {\enquote {\bibinfo {title} {Hyperoptimized approximate contraction of tensor networks with arbitrary geometry},}\ }\href {\doibase 10.1103/PhysRevX.14.011009} {\bibfield  {journal} {\bibinfo  {journal} {Phys. Rev. X}\ }\textbf {\bibinfo {volume} {14}},\ \bibinfo {pages} {011009} (\bibinfo {year} {2024})}\BibitemShut {NoStop}%
\bibitem [{\citenamefont {Shang}\ \emph {et~al.}(2022)\citenamefont {Shang}, \citenamefont {Shen}, \citenamefont {Fan}, \citenamefont {Xu}, \citenamefont {Guo}, \citenamefont {Liu}, \citenamefont {Zhou}, \citenamefont {Ma}, \citenamefont {Lin}, \citenamefont {Yang} \emph {et~al.}}]{shang2022large}%
  \BibitemOpen
  \bibfield  {author} {\bibinfo {author} {\bibfnamefont {Honghui}\ \bibnamefont {Shang}}, \bibinfo {author} {\bibfnamefont {Li}~\bibnamefont {Shen}}, \bibinfo {author} {\bibfnamefont {Yi}~\bibnamefont {Fan}}, \bibinfo {author} {\bibfnamefont {Zhiqian}\ \bibnamefont {Xu}}, \bibinfo {author} {\bibfnamefont {Chu}\ \bibnamefont {Guo}}, \bibinfo {author} {\bibfnamefont {Jie}\ \bibnamefont {Liu}}, \bibinfo {author} {\bibfnamefont {Wenhao}\ \bibnamefont {Zhou}}, \bibinfo {author} {\bibfnamefont {Huan}\ \bibnamefont {Ma}}, \bibinfo {author} {\bibfnamefont {Rongfen}\ \bibnamefont {Lin}}, \bibinfo {author} {\bibfnamefont {Yuling}\ \bibnamefont {Yang}},  \emph {et~al.},\ }\bibfield  {title} {\enquote {\bibinfo {title} {Large-scale simulation of quantum computational chemistry on a new sunway supercomputer},}\ }in\ \href@noop {} {\emph {\bibinfo {booktitle} {SC22: International Conference for High Performance Computing, Networking, Storage and Analysis}}}\ (\bibinfo {organization} {IEEE},\ \bibinfo {year} {2022})\ pp.\ \bibinfo {pages} {1--14}\BibitemShut {NoStop}%
\bibitem [{\citenamefont {\textit{GraceQ/tensor}}()}]{gqten}%
  \BibitemOpen
  \bibfield  {author} {\bibinfo {author} {\bibnamefont {\textit{GraceQ/tensor}}},\ }\href@noop {} {}\bibinfo {howpublished} {\url{https://github.com/gracequantum/tensor}}\BibitemShut {NoStop}%
\bibitem [{\citenamefont {Wu}\ \emph {et~al.}(2021)\citenamefont {Wu}, \citenamefont {Bao}, \citenamefont {Cao}, \citenamefont {Chen}, \citenamefont {Chen}, \citenamefont {Chen}, \citenamefont {Chung}, \citenamefont {Deng}, \citenamefont {Du}, \citenamefont {Fan}, \citenamefont {Gong}, \citenamefont {Guo}, \citenamefont {Guo}, \citenamefont {Guo}, \citenamefont {Han}, \citenamefont {Hong}, \citenamefont {Huang}, \citenamefont {Huo}, \citenamefont {Li}, \citenamefont {Li}, \citenamefont {Li}, \citenamefont {Li}, \citenamefont {Liang}, \citenamefont {Lin}, \citenamefont {Lin}, \citenamefont {Qian}, \citenamefont {Qiao}, \citenamefont {Rong}, \citenamefont {Su}, \citenamefont {Sun}, \citenamefont {Wang}, \citenamefont {Wang}, \citenamefont {Wu}, \citenamefont {Xu}, \citenamefont {Yan}, \citenamefont {Yang}, \citenamefont {Yang}, \citenamefont {Ye}, \citenamefont {Yin}, \citenamefont {Ying}, \citenamefont {Yu}, \citenamefont {Zha}, \citenamefont {Zhang}, \citenamefont {Zhang}, \citenamefont {Zhang}, \citenamefont {Zhang}, \citenamefont {Zhao}, \citenamefont {Zhao}, \citenamefont {Zhou}, \citenamefont {Zhu}, \citenamefont {Lu}, \citenamefont {Peng}, \citenamefont {Zhu},\ and\ \citenamefont {Pan}}]{wu2021strong}%
  \BibitemOpen
  \bibfield  {author} {\bibinfo {author} {\bibfnamefont {Yulin}\ \bibnamefont {Wu}}, \bibinfo {author} {\bibfnamefont {Wan-Su}\ \bibnamefont {Bao}}, \bibinfo {author} {\bibfnamefont {Sirui}\ \bibnamefont {Cao}}, \bibinfo {author} {\bibfnamefont {Fusheng}\ \bibnamefont {Chen}}, \bibinfo {author} {\bibfnamefont {Ming-Cheng}\ \bibnamefont {Chen}}, \bibinfo {author} {\bibfnamefont {Xiawei}\ \bibnamefont {Chen}}, \bibinfo {author} {\bibfnamefont {Tung-Hsun}\ \bibnamefont {Chung}}, \bibinfo {author} {\bibfnamefont {Hui}\ \bibnamefont {Deng}}, \bibinfo {author} {\bibfnamefont {Yajie}\ \bibnamefont {Du}}, \bibinfo {author} {\bibfnamefont {Daojin}\ \bibnamefont {Fan}}, \bibinfo {author} {\bibfnamefont {Ming}\ \bibnamefont {Gong}}, \bibinfo {author} {\bibfnamefont {Cheng}\ \bibnamefont {Guo}}, \bibinfo {author} {\bibfnamefont {Chu}\ \bibnamefont {Guo}}, \bibinfo {author} {\bibfnamefont {Shaojun}\ \bibnamefont {Guo}}, \bibinfo {author} {\bibfnamefont {Lianchen}\ \bibnamefont {Han}}, \bibinfo {author} {\bibfnamefont {Linyin}\ \bibnamefont {Hong}}, \bibinfo {author} {\bibfnamefont {He-Liang}\ \bibnamefont {Huang}}, \bibinfo {author} {\bibfnamefont {Yong-Heng}\ \bibnamefont {Huo}}, \bibinfo {author} {\bibfnamefont {Liping}\ \bibnamefont {Li}}, \bibinfo {author} {\bibfnamefont {Na}~\bibnamefont {Li}}, \bibinfo {author} {\bibfnamefont {Shaowei}\ \bibnamefont {Li}}, \bibinfo {author} {\bibfnamefont {Yuan}\ \bibnamefont {Li}}, \bibinfo {author} {\bibfnamefont {Futian}\ \bibnamefont {Liang}}, \bibinfo {author} {\bibfnamefont {Chun}\ \bibnamefont {Lin}}, \bibinfo {author} {\bibfnamefont {Jin}\ \bibnamefont {Lin}}, \bibinfo {author} {\bibfnamefont {Haoran}\ \bibnamefont {Qian}}, \bibinfo {author} {\bibfnamefont {Dan}\ \bibnamefont {Qiao}}, \bibinfo {author} {\bibfnamefont {Hao}\ \bibnamefont {Rong}}, \bibinfo {author} {\bibfnamefont {Hong}\ \bibnamefont {Su}}, \bibinfo {author} {\bibfnamefont {Lihua}\ \bibnamefont {Sun}}, \bibinfo {author} {\bibfnamefont {Liangyuan}\ \bibnamefont {Wang}}, \bibinfo {author} {\bibfnamefont {Shiyu}\ \bibnamefont {Wang}}, \bibinfo {author} {\bibfnamefont {Dachao}\ \bibnamefont {Wu}}, \bibinfo {author} {\bibfnamefont {Yu}~\bibnamefont {Xu}}, \bibinfo {author} {\bibfnamefont {Kai}\ \bibnamefont {Yan}}, \bibinfo {author} {\bibfnamefont {Weifeng}\ \bibnamefont {Yang}}, \bibinfo {author} {\bibfnamefont {Yang}\ \bibnamefont {Yang}}, \bibinfo {author} {\bibfnamefont {Yangsen}\ \bibnamefont {Ye}}, \bibinfo {author} {\bibfnamefont {Jianghan}\ \bibnamefont {Yin}}, \bibinfo {author} {\bibfnamefont {Chong}\ \bibnamefont {Ying}}, \bibinfo {author} {\bibfnamefont {Jiale}\ \bibnamefont {Yu}}, \bibinfo {author} {\bibfnamefont {Chen}\ \bibnamefont {Zha}}, \bibinfo {author} {\bibfnamefont {Cha}\ \bibnamefont {Zhang}}, \bibinfo {author} {\bibfnamefont {Haibin}\ \bibnamefont {Zhang}}, \bibinfo {author} {\bibfnamefont {Kaili}\ \bibnamefont {Zhang}}, \bibinfo {author} {\bibfnamefont {Yiming}\ \bibnamefont {Zhang}}, \bibinfo {author} {\bibfnamefont {Han}\ \bibnamefont {Zhao}}, \bibinfo {author} {\bibfnamefont {Youwei}\ \bibnamefont {Zhao}}, \bibinfo {author} {\bibfnamefont {Liang}\ \bibnamefont {Zhou}}, \bibinfo {author} {\bibfnamefont {Qingling}\ \bibnamefont {Zhu}}, \bibinfo {author} {\bibfnamefont {Chao-Yang}\ \bibnamefont {Lu}}, \bibinfo {author} {\bibfnamefont {Cheng-Zhi}\ \bibnamefont {Peng}}, \bibinfo {author} {\bibfnamefont {Xiaobo}\ \bibnamefont {Zhu}}, \ and\ \bibinfo {author} {\bibfnamefont {Jian-Wei}\ \bibnamefont {Pan}},\ }\bibfield  {title} {\enquote {\bibinfo {title} {Strong quantum computational advantage using a superconducting quantum processor},}\ }\href {\doibase 10.1103/PhysRevLett.127.180501} {\bibfield  {journal} {\bibinfo  {journal} {Phys. Rev. Lett.}\ }\textbf {\bibinfo {volume} {127}},\ \bibinfo {pages} {180501} (\bibinfo {year} {2021})}\BibitemShut {NoStop}%
\bibitem [{\citenamefont {Pan}\ \emph {et~al.}(2022)\citenamefont {Pan}, \citenamefont {Chen},\ and\ \citenamefont {Zhang}}]{pan2022solving}%
  \BibitemOpen
  \bibfield  {author} {\bibinfo {author} {\bibfnamefont {Feng}\ \bibnamefont {Pan}}, \bibinfo {author} {\bibfnamefont {Keyang}\ \bibnamefont {Chen}}, \ and\ \bibinfo {author} {\bibfnamefont {Pan}\ \bibnamefont {Zhang}},\ }\bibfield  {title} {\enquote {\bibinfo {title} {Solving the sampling problem of the sycamore quantum circuits},}\ }\href {\doibase 10.1103/PhysRevLett.129.090502} {\bibfield  {journal} {\bibinfo  {journal} {Phys. Rev. Lett.}\ }\textbf {\bibinfo {volume} {129}},\ \bibinfo {pages} {090502} (\bibinfo {year} {2022})}\BibitemShut {NoStop}%
\bibitem [{\citenamefont {Wecker}\ \emph {et~al.}(2015)\citenamefont {Wecker}, \citenamefont {Hastings},\ and\ \citenamefont {Troyer}}]{wecker2015progress}%
  \BibitemOpen
  \bibfield  {author} {\bibinfo {author} {\bibfnamefont {Dave}\ \bibnamefont {Wecker}}, \bibinfo {author} {\bibfnamefont {Matthew~B.}\ \bibnamefont {Hastings}}, \ and\ \bibinfo {author} {\bibfnamefont {Matthias}\ \bibnamefont {Troyer}},\ }\bibfield  {title} {\enquote {\bibinfo {title} {Progress towards practical quantum variational algorithms},}\ }\href {\doibase 10.1103/PhysRevA.92.042303} {\bibfield  {journal} {\bibinfo  {journal} {Phys. Rev. A}\ }\textbf {\bibinfo {volume} {92}},\ \bibinfo {pages} {042303} (\bibinfo {year} {2015})}\BibitemShut {NoStop}%
\bibitem [{\citenamefont {Loss}\ and\ \citenamefont {DiVincenzo}(1998)}]{loss1998quantum}%
  \BibitemOpen
  \bibfield  {author} {\bibinfo {author} {\bibfnamefont {Daniel}\ \bibnamefont {Loss}}\ and\ \bibinfo {author} {\bibfnamefont {David~P.}\ \bibnamefont {DiVincenzo}},\ }\bibfield  {title} {\enquote {\bibinfo {title} {Quantum computation with quantum dots},}\ }\href {\doibase 10.1103/PhysRevA.57.120} {\bibfield  {journal} {\bibinfo  {journal} {Phys. Rev. A}\ }\textbf {\bibinfo {volume} {57}},\ \bibinfo {pages} {120--126} (\bibinfo {year} {1998})}\BibitemShut {NoStop}%
\bibitem [{\citenamefont {DiVincenzo}\ \emph {et~al.}(2000)\citenamefont {DiVincenzo}, \citenamefont {Bacon}, \citenamefont {Kempe}, \citenamefont {Burkard},\ and\ \citenamefont {Whaley}}]{divincenzo2000universal}%
  \BibitemOpen
  \bibfield  {author} {\bibinfo {author} {\bibfnamefont {David~P}\ \bibnamefont {DiVincenzo}}, \bibinfo {author} {\bibfnamefont {Dave}\ \bibnamefont {Bacon}}, \bibinfo {author} {\bibfnamefont {Julia}\ \bibnamefont {Kempe}}, \bibinfo {author} {\bibfnamefont {Guido}\ \bibnamefont {Burkard}}, \ and\ \bibinfo {author} {\bibfnamefont {K~Birgitta}\ \bibnamefont {Whaley}},\ }\bibfield  {title} {\enquote {\bibinfo {title} {Universal quantum computation with the exchange interaction},}\ }\href {https://doi.org/10.1038/35042541} {\bibfield  {journal} {\bibinfo  {journal} {Nature}\ }\textbf {\bibinfo {volume} {408}},\ \bibinfo {pages} {339--342} (\bibinfo {year} {2000})}\BibitemShut {NoStop}%
\bibitem [{\citenamefont {Brunner}\ \emph {et~al.}(2011)\citenamefont {Brunner}, \citenamefont {Shin}, \citenamefont {Obata}, \citenamefont {Pioro-Ladri\`ere}, \citenamefont {Kubo}, \citenamefont {Yoshida}, \citenamefont {Taniyama}, \citenamefont {Tokura},\ and\ \citenamefont {Tarucha}}]{brunner2011two}%
  \BibitemOpen
  \bibfield  {author} {\bibinfo {author} {\bibfnamefont {R.}~\bibnamefont {Brunner}}, \bibinfo {author} {\bibfnamefont {Y.-S.}\ \bibnamefont {Shin}}, \bibinfo {author} {\bibfnamefont {T.}~\bibnamefont {Obata}}, \bibinfo {author} {\bibfnamefont {M.}~\bibnamefont {Pioro-Ladri\`ere}}, \bibinfo {author} {\bibfnamefont {T.}~\bibnamefont {Kubo}}, \bibinfo {author} {\bibfnamefont {K.}~\bibnamefont {Yoshida}}, \bibinfo {author} {\bibfnamefont {T.}~\bibnamefont {Taniyama}}, \bibinfo {author} {\bibfnamefont {Y.}~\bibnamefont {Tokura}}, \ and\ \bibinfo {author} {\bibfnamefont {S.}~\bibnamefont {Tarucha}},\ }\bibfield  {title} {\enquote {\bibinfo {title} {Two-qubit gate of combined single-spin rotation and interdot spin exchange in a double quantum dot},}\ }\href {\doibase 10.1103/PhysRevLett.107.146801} {\bibfield  {journal} {\bibinfo  {journal} {Phys. Rev. Lett.}\ }\textbf {\bibinfo {volume} {107}},\ \bibinfo {pages} {146801} (\bibinfo {year} {2011})}\BibitemShut {NoStop}%
\bibitem [{\citenamefont {Lloyd}\ \emph {et~al.}(2014)\citenamefont {Lloyd}, \citenamefont {Mohseni},\ and\ \citenamefont {Rebentrost}}]{lloyd2014quantum}%
  \BibitemOpen
  \bibfield  {author} {\bibinfo {author} {\bibfnamefont {Seth}\ \bibnamefont {Lloyd}}, \bibinfo {author} {\bibfnamefont {Masoud}\ \bibnamefont {Mohseni}}, \ and\ \bibinfo {author} {\bibfnamefont {Patrick}\ \bibnamefont {Rebentrost}},\ }\bibfield  {title} {\enquote {\bibinfo {title} {Quantum principal component analysis},}\ }\href {https://doi.org/10.1038/nphys3029} {\bibfield  {journal} {\bibinfo  {journal} {Nature Physics}\ }\textbf {\bibinfo {volume} {10}},\ \bibinfo {pages} {631--633} (\bibinfo {year} {2014})}\BibitemShut {NoStop}%
\bibitem [{\citenamefont {Lau}\ and\ \citenamefont {Plenio}(2016)}]{lau2016universal}%
  \BibitemOpen
  \bibfield  {author} {\bibinfo {author} {\bibfnamefont {Hoi-Kwan}\ \bibnamefont {Lau}}\ and\ \bibinfo {author} {\bibfnamefont {Martin~B.}\ \bibnamefont {Plenio}},\ }\bibfield  {title} {\enquote {\bibinfo {title} {Universal quantum computing with arbitrary continuous-variable encoding},}\ }\href {\doibase 10.1103/PhysRevLett.117.100501} {\bibfield  {journal} {\bibinfo  {journal} {Phys. Rev. Lett.}\ }\textbf {\bibinfo {volume} {117}},\ \bibinfo {pages} {100501} (\bibinfo {year} {2016})}\BibitemShut {NoStop}%
\bibitem [{\citenamefont {Seki}\ \emph {et~al.}(2020)\citenamefont {Seki}, \citenamefont {Shirakawa},\ and\ \citenamefont {Yunoki}}]{seki2020symmetry}%
  \BibitemOpen
  \bibfield  {author} {\bibinfo {author} {\bibfnamefont {Kazuhiro}\ \bibnamefont {Seki}}, \bibinfo {author} {\bibfnamefont {Tomonori}\ \bibnamefont {Shirakawa}}, \ and\ \bibinfo {author} {\bibfnamefont {Seiji}\ \bibnamefont {Yunoki}},\ }\bibfield  {title} {\enquote {\bibinfo {title} {Symmetry-adapted variational quantum eigensolver},}\ }\href {\doibase 10.1103/PhysRevA.101.052340} {\bibfield  {journal} {\bibinfo  {journal} {Phys. Rev. A}\ }\textbf {\bibinfo {volume} {101}},\ \bibinfo {pages} {052340} (\bibinfo {year} {2020})}\BibitemShut {NoStop}%
\bibitem [{\citenamefont {Sun}\ \emph {et~al.}(2023)\citenamefont {Sun}, \citenamefont {Shirakawa},\ and\ \citenamefont {Yunoki}}]{sun2023efficient}%
  \BibitemOpen
  \bibfield  {author} {\bibinfo {author} {\bibfnamefont {Rong-Yang}\ \bibnamefont {Sun}}, \bibinfo {author} {\bibfnamefont {Tomonori}\ \bibnamefont {Shirakawa}}, \ and\ \bibinfo {author} {\bibfnamefont {Seiji}\ \bibnamefont {Yunoki}},\ }\bibfield  {title} {\enquote {\bibinfo {title} {Efficient variational quantum circuit structure for correlated topological phases},}\ }\href {\doibase 10.1103/PhysRevB.108.075127} {\bibfield  {journal} {\bibinfo  {journal} {Phys. Rev. B}\ }\textbf {\bibinfo {volume} {108}},\ \bibinfo {pages} {075127} (\bibinfo {year} {2023})}\BibitemShut {NoStop}%
\bibitem [{\citenamefont {Coppersmith}(1994)}]{coppersmith1994approximate}%
  \BibitemOpen
  \bibfield  {author} {\bibinfo {author} {\bibfnamefont {Don}\ \bibnamefont {Coppersmith}},\ }\bibfield  {title} {\enquote {\bibinfo {title} {An approximate fourier transform useful in quantum factoring},}\ }\href@noop {} {\bibfield  {journal} {\bibinfo  {journal} {arXiv preprint quant-ph/0201067}\ } (\bibinfo {year} {1994})}\BibitemShut {NoStop}%
\bibitem [{\citenamefont {Shor}(1994)}]{shor1994algorithms}%
  \BibitemOpen
  \bibfield  {author} {\bibinfo {author} {\bibfnamefont {P.W.}\ \bibnamefont {Shor}},\ }\bibfield  {title} {\enquote {\bibinfo {title} {Algorithms for quantum computation: discrete logarithms and factoring},}\ }in\ \href {\doibase 10.1109/SFCS.1994.365700} {\emph {\bibinfo {booktitle} {Proceedings 35th Annual Symposium on Foundations of Computer Science}}}\ (\bibinfo {year} {1994})\ pp.\ \bibinfo {pages} {124--134}\BibitemShut {NoStop}%
\bibitem [{\citenamefont {Kitaev}(1995)}]{kitaev1995quantum}%
  \BibitemOpen
  \bibfield  {author} {\bibinfo {author} {\bibfnamefont {A~Yu}\ \bibnamefont {Kitaev}},\ }\bibfield  {title} {\enquote {\bibinfo {title} {Quantum measurements and the abelian stabilizer problem},}\ }\href@noop {} {\bibfield  {journal} {\bibinfo  {journal} {arXiv preprint quant-ph/9511026}\ } (\bibinfo {year} {1995})}\BibitemShut {NoStop}%
\bibitem [{qis()}]{qiskit}%
  \BibitemOpen
  \href@noop {} {\enquote {\bibinfo {title} {Qiskit},}\ }\bibinfo {howpublished} {\url{https://www.qiskit.org/}}\BibitemShut {NoStop}%
\end{thebibliography}
\end{document}